\newcolumntype{+}{>{\global\let\currentrowstyle\relax}}
\newcolumntype{^}{>{\currentrowstyle}}
\newcommand{\rowstyle}[1]{\gdef\currentrowstyle{#1}%
  #1\ignorespaces
}  
\numberwithin{equation}{section} \theoremstyle{plain}
\newtheorem{theorem}{Theorem}[section]
\newtheorem{lemma}{Lemma}[section]
\newtheorem{corollary}{Corollary}[section]
\newtheorem{proposition}{Proposition}[section]
\newtheorem{remark}{Remark}[section]
\begin{document}
\newcommand\cov{\mathop{\text{cov}}}
\newcommand\tr{\mathop{\text{tr}}}
\newcommand\E{\mathbb{E}}
\newcommand\R{\mathbb{R}}
\newcommand\N{\mathcal{N}}
\newcommand\lb{\left(}

\newcommand\re[1]{{\color{black}#1}}  
\newcommand\bl[1]{#1}
\newcommand\bll[1]{{\color{blue}#1}}
\newcommand\gr[1]{{\color{green}#1}}

\newcommand\rb{\right)}
\newcommand\veps{\varepsilon}
\renewcommand\hat[1]{\widehat{#1}}
\newcommand{\tb}[1]{{#1}}

\begin{frontmatter}
  \title{Identifying the number of factors from singular values of a
    large  sample auto-covariance matrix}
  \runtitle{Identifying number of factors}
  %
  %
  %
  %
  \begin{aug}
    \author{\fnms{Zeng} \snm{Li}\ead[label=e2]{u3001205@hku.hk}}
    \and
    \author{\fnms{Qinwen} \snm{Wang}\ead[label=e1]{wqw8813@gmail.com}}
    \and
    \author{\fnms{Jianfeng} \snm{Yao}\ead[label=e3]{jeffyao@hku.hk}}

    \runauthor{Z. Li, Q. Wang and J. Yao}

    \affiliation{Zhejiang  University and The University of Hong Kong}

    \address{Qinwen Wang  \\
      Department of Mathematics\\
      Zhejiang University \\
      \printead{e1}
    }

    \address{Zeng Li, Jianfeng Yao\\
      Department of Statistics and Actuarial Science\\
      The University of Hong Kong\\
      \printead{e2,e3}
    }
  \end{aug}
  \begin{abstract}
    Identifying the number of factors in a high-dimensional
    factor model has attracted much attention in recent years
    and a general solution to the problem is still lacking.
    A promising ratio estimator based on the singular values of the
    lagged autocovariance matrix has been recently proposed in the
    literature and is shown to have a good performance under
    some specific assumption on the strength of the factors.
    Inspired by this ratio estimator and as a first main contribution,
    this paper proposes a  complete
    theory of such sample singular values for both the factor part and
    the noise part under the large-dimensional scheme where the
    dimension and the sample size proportionally  grow to infinity.
    In particular, we  provide  the exact description of the phase
    transition phenomenon that determines whether a factor is strong
    enough to be detected with the observed sample singular values.
    Based on these findings and as a second main contribution of the
    paper,
    we propose a new
    estimator of the number of factors which is strongly consistent
    for the detection of  all significant  factors
    (which are the only theoretically detectable ones).
    In particular, factors are assumed to have  the minimum strength
    above the phase transition boundary
    which is of the order of a constant; they are thus not required to
    grow to infinity together with the dimension (as assumed in most
    of the existing papers on high-dimensional factor models).
    Empirical Monte-Carlo study as well as the  analysis of  stock
    returns  data attest a very good performance of the proposed
    estimator.
    In all the tested cases, the new estimator largely outperforms
    the existing estimator using the same ratios of singular values.
  \end{abstract}

   \begin{keyword}[class=AMS2010]
     \bl{
     \kwd[Primary ]{62M10, 62H25}
     \kwd[; secondary ] {15B52}
     }
   \end{keyword}

   \begin{keyword}
     \bl{
      \kwd{High-dimensional factor model}
      \kwd{high-dimensional time series}
      \kwd{large sample autocovariance matrices}
      \kwd{spiked population model}
      \kwd{number of factors}
      \kwd{phase transition}
      \kwd{random matrices}
      }
    \end{keyword}

\end{frontmatter}

\section{Introduction}

\bll{Factor models have met a large success in data analysis in many
scientific fields such as psychology, economics and finance, 
signal processing, to name a few. One of the 
strengths of these  models relies on its capability to reduce 
the generally high dimension of the data to a much lower-dimensional 
common component. The structure of these models is complex
and many different versions of the models have been introduced so far
in the long-standing literature on the subject, ranging from 
{\em static} to {\em dynamic} or {\em generalized dynamic} factor
models on one hand, and from {\em exact} to {\em approximate} factor
models on the other hand.  
A recent survey  of this literature   can be found in \citet{Watson11}.
Efforts are however still paid to the study of these models because
unfortunately their  inference is not easy, especially
when the cross-sectional dimension $p$ 
and the temporal dimension $T$  are both large. 

In such high-dimensional context, 
the determination of the number $k$  of common factors in  
in a factor model is a challenging problem.
Misspecification of this number can deeply affect the quality of the
fitted factor model.  
In this context, the seminal paper \citet{BaiNg02} provided for the first time a
consistent estimator of $k$ for static factor models.   
This estimator has attracted much attention afterwards, and 
has  been improved or generalized, e.g.   
in \citet{BaiNg07} by the authors themselves, 
in \citet{Hallin07} for dynamic factor models and  in \citet{Alessi10}
for approximate factor models.  
It should be here mentioned  that as these developments  
mainly target at analysis of economic or financial data, 
the common factors in these  models are thought to be {\em
  pervasive}, or {\em strong},
in the sense that their strength is much higher than the strength
of the  idiosyncratic (error) component.  
The asymptotic consistency of the factor number estimator depends in a
large extent on this assumption.  However, some recent studies on
factor models suggest the importance for 
accommodating more factors in these models by including 
some {\em weaker factors}  which still have a
significant explanation power on both cross-sectional and temporal
correlations of the data. For example, \citet{Onatski15} makes a clear
distinction between strong factors and weak factors when considering
asymptotic approximations of the square loss function from a principal
components-based perspective.  A related work  allowing weak
factors can be found in \citet{Onatski12}.  

In this paper we consider a   factor model 
for high-dimensional time series proposed by \citet{LY012}:
the observations $Y$ is a $p\times T$ matrix with $p$ cross-sectional
units over $T$ time periods. Let $y_t$ denote the $p$-dimensional
vector observed at time $t$, then it consists of two components, 
a low dimensional  common-factor time series  $x_t$ and an
idiosyncratic component   $\varepsilon_t$:
\begin{equation}\label{model}
  y_t=Ax_t+\varepsilon_t,
\end{equation}
where $A$ is the factor loading matrix of size  $p\times k$ and 
$\{\varepsilon_t\}$ is a white noise sequence (temporal uncorrelated).
The factors in $(x_t)$ are here loaded contemporaneously; however this
is a time series and its temporal correlation implies that 
of the observations $\{y_t\}$. However, this
is the unique  source of temporal correlation,
and in this aspect, the model is  much more restrictive 
than the general dynamic models as introduced in 
\citet{Geweke77}, \citet{Sargen77}
and \citet{Forni00,Forni04,Forni05}.
Nevertheless, there are two advantages in this simplified
model. First, since potentially 
$(x_t)$ can be any kind of stationary time series of low dimension, the model can
already  cover a wide range of applications. 
Second, inference procedures are here more consistently defined 
and  more precise 
results can be expected, e.g. for the determination of the number of
factors. 
The factor model \eqref{model} can be considered as a good balance
between the generality of model coverage and the  technical
feasibility of underlying inference procedures.
}

The goal of this paper is to develop a powerful estimator of the
number of factors in the model \eqref{model}.
 \citet{LY012} proposed a ratio-based estimator
defined as follows.
Let  $\Sigma_y{ }=\cov(y_{t},y_{t-1})$ and $\Sigma_x{ }=\cov(x_{t},
x_{t-1})$ be  the lag-1 auto-covariance matrices of $y_t$ and $x_t$, respectively. Assuming
that
the factor and the noise are independent, we then have
\[
\Sigma_y{ }=A \Sigma_x{ }  A'~,
\]
which leads to its symmetric  counterpart
\begin{align}\label{y}
  M=\Sigma_y{ }\Sigma_y{ }' =A\Sigma_x{ }\Sigma_x{ }'A'~.
\end{align}
Since in general the $k\times k$ matrix $\Sigma_x{ }$ is of full rank
$k$,
the symmetric $p\times p$ matrix $M$ has exactly $k$
nonzero eigenvalues. Moreover,
the factor loading space $\mathcal{M}(A)$, i.e. the $k$-dimensional
subspace in $\R^p$ generated by the columns of $A$,
is spanned by the eigenvectors of $M$ corresponding to its nonzero
eigenvalues   $a_1\ge \cdots \ge a_k>0$ (factor eigenvalues). Let
\begin{equation}
  \label{Mhat}
  \widehat{M}=\widehat{\Sigma}_y{ }\widehat{\Sigma}_y{ }',
  \quad \text{where} ~
  \widehat{\Sigma}_y{ }=\frac1T\sum_{t=2}^{T+1}y_t y_{t-1}'~,
\end{equation}
be the sample counterparts of $M$ and $\Sigma_y{ }$, respectively.
The main observation is that
the $p-k$ null eigenvalues of $M$  will lead to $p-k$ ``relatively small''
sample eigenvalues in $\widehat M$, while the $k$ factor  eigenvalues
$(a_i)$
will generate $k$ ``relatively large'' eigenvalues in $\widehat M$.
This can be made very precisely in a classical {\em low-dimensional}
framework where we fix the dimension $p$ and let $T$ grow to
infinity: indeed by law of large numbers, $\widehat M\to M$ and by
continuity, all the eigenvalues $l_1\ge l_2\ge \cdots \ge l_p$
(sorted in decreasing order) of $\widehat M$ will converge to the
corresponding eigenvalues of $M$. In particular, for $k <i\le p$,
$l_i\to 0$ while $l_i \to a_i >0$ for $1\le i \le  k$.
Consider the ratio estimator
(\citet{LY012}):
\begin{equation}
  \label{eq:LY}
  \tilde{k}=\arg \min_{1\leq i < p} {l_{i+1}}/{l_i}.
\end{equation}
As $l_{k+1}/l_k$ will be the first ratio in this list which tends to
zero, $\tilde k$ will be a consistent estimator of $k$.

In the high-dimensional context however, $\widehat M$ will significantly deviate from
$M$  and the spectrum $(l_i)$ of $\widehat M$ will not be
close to that of $M$ anymore. In particular, the time for the
first minimum of the ratios in
\eqref{eq:LY} becomes noisy and can be much different from the target
value $k$.
Notice that the $k$ non-null factor eigenvalues $(a_i)$
are directly linked
to the strength of the factor time series $(x_t)$.
The precise
relationship between the ratios of sample eigenvalues
in \eqref{eq:LY} will ultimately depend on a complex interplay
between the strength of the factor eigenvalues  $(a_i)$ (compared to
the noise level), the dimension $p$ and the sample size $T$.

Despite of the introduction of
a very appealing ratio estimator \eqref{eq:LY},  
precise description of the sample ratios
${l_{i+1}}/{l_i}$ is missing in \cite{LY012}. Indeed, the authors establish the consistency
of the ratio estimator $\tilde k$ by requiring that the factor
strengths $(a_i)$ all explode  {\em at a same rate}:
$a_i\propto p^{\delta}$ for all $1\le i\le k$ and some $\delta>0$ as
the dimension $p$ grows to infinity. In other words, the factors
are all strong and they have a same asymptotic strength.
This limitation is quite severe because factors with different
levels  of strength cannot be all detected within this framework.
For instance, if we have factors with say three levels of strength
$p^{\delta_j}$, $j=1,2,3$ where $\delta_1>\delta_2>\delta_3 $, the
ratio estimator $\tilde k$ above will correctly  identify the group
of strongest factors $a_i\propto p^{\delta_1}$ while all the others
will be omitted.
In an attempt to correct such undesirable behavior, a two-step
estimation procedure is also proposed in \cite{LY012} which will
identify successively two groups of factors with top two strengths:
this means that in the  example above,
factors of strength  $a_i$' proportional to $p^{\delta_j}$ with
$j=1,2$ will be identified
while  the others will remain omitted.
The issue here is that {\it a priori}, we do not know how
many different levels of strength the factors could have and it is
unlikely we could  attempt to estimate such different levels as this
would lead to a  problem  that is equally  (if not more)
difficult than  the initial problem of
estimating the number of factors.

Inspired by the appropriateness of the ratio estimator $\tilde k$
\bl{in the high-dimensional context,
  the main objective of this paper is to provide a rigorous  theory
  for the estimation of the number of factors based on the ratios
  $\{l_{i+1}/l_i\}$
  under the high-dimensional setting  where $p $ and $T$ tend to
  infinity proportionally.

  The paper contains  two main contributions.
  First, {\em we characterize completely the limits of both the factor
    eigenvalues $\{l_i\}_{1\le i\le k}$ and the noise eigenvalues
    $\{l_i\}_{k <i\le p}$.}
  For the noise part, as $k$ (although unknown)
  is much smaller than   the dimension $p$,
  we prove that the spectral distribution generated by $\{l_i\}_{k <i\le p}$
  has a limit which coincide with the limit of the spectral
  distribution generated by the $p$ eigenvalues of the (unobserved) matrix
  $\widehat M_\veps= \hat\Sigma_\veps\hat\Sigma'_\veps$ where
  $\Sigma_\veps=T^{-1}\sum_{t=2}^{T+1}\veps_t\veps'_{t-1}$.
  This limiting distribution has been explored elsewhere in
  \citet{li} and its support found to be
  a  compact  interval $[a,b]$.
  As for the factor part $\{l_i\}_{1\le i\le k}$, although it is
  highly expected that they should have a limit located outside the
  base interval $[a,b]$, we establish a
  {\em phase transition phenomenon}: a factor eigenvalue $l_i$ will
  tend to a limit $\lambda_i >b $ (outlier) {\em if and only if} the
  corresponding population factor strength $a_i$
  exceeds some critical
  value $\tau$. In other words, if a factor $a_i$  is too weak, then the
  corresponding sample factor eigenvalue $\lambda_i$ will tend to $b$, the
  (limit of) maximum of the noise eigenvalues and it will be hardly
  detectable.
  Moreover, both the outlier limits $\{\lambda_i\}$ and the critical value
  $\tau$ are characterized through the model parameters.

  The second main contribution of the paper is {\em the derivation of a new
  estimator $\hat k$ of the number of factors}  based on the finding
  above.
  If $k_0$  denotes the number of {\em significant factors}, i.e. with
  factor strength $a_i>\tau$, then using an appropriate thresholding
  interval  $(1-d_T,1)$ for the sample ratios $\{l_{i+1}/l_i\}$, the
  derived estimator $\hat k$  is strongly consistent converging to $k_0$.
  In addition to this well-justified consistency, the main advantage
  of the proposed estimator is its robustness against
  possibly multiple levels of factor strength;
  in theory, all factors with strength above the constant $\tau$ are
  detectable. 
  \bll{Therefore,  both strong factors and weak
    factors can be present, and 
    their strengths  can 
    have different
    asymptotic rates with regard to the dimension  $p$ in order to be
    detected from the observed samples.
    This is a key difference between the method provided in this
    paper  and most of existing estimators of the factor number 
    as recalled previously (the reader is however reminded that the
    model \eqref{model} is more restrictive than a general dynamic
    factor model).
    Notice however that these 
    precise results  have been obtained at the cost of  some  drastic
    simplification of the idiosyncratic component 
    $\{\veps_t\}$, namely  
    independence has been assumed both serially and cross-sectionally (over the
    time and the dimension), and the components are normalized to have a
    same value of  variance (see Assumption 2 in Section~\ref{ModAss}).
    These limitations are required by the technical tools employed in this
    paper and  some non-trivial extension of
    these  tools 
    are needed to get rid of these limitations.
  }

  From a methodological point of view,
  our approach is based on recent advances from random matrix theory,
  specifically on the so-called spiked population models or
  more generally on finite-rank perturbations of large  random matrices.
  We start by  identifying
  the sample matrix $\widehat M$ as a finite-rank
  perturbation of the  base matrix $\widehat M_\veps$ associated to
  the noise.
  In a recent paper \citet{li}, the limiting spectral distribution  of
  the eigenvalues
  of $\widehat M_\veps$ has been found and the base interval $[a,b]$
  characterized.
  By developing the mentioned perturbation theory for the
  autocovariance matrix $\hat M$, we find the characterization of the
  limits of its eigenvalues $\{l_i\}$.

  For the strong consistency of the proposed ratio estimator $\hat k$,
  a main ingredient is
  the almost sure convergence of the largest eigenvalue
  of the base matrix $\widehat M_\veps$ to the right edge $b$,  recently
  established in   \citet{WangYao14}.  This result serves as the
  cornerstone for distinguishing between significant factors and noise
  components.
}

It is worth mentioning a related paper
\citet{Onatski10} where the author stands from a similar
perspective with the  method in this paper. However,
that paper addresses  static
approximate factor models without time series dependence
and more importantly,
the assumption of explosion  of all 
factor eigenvalues is still required
 which,
on the contrary, is released in this paper. 
\bll{Other related references
include \citet{Jin14} and \citet{Wang15} 
where the limiting spectral distribution and the strong convergence of
extreme eigenvalues are derived for the matrix 
$\frac12({\widehat\Sigma}_\varepsilon+{\widehat\Sigma}'_\varepsilon)$. One should mention
that these works are more related to the study 
in \citet{li} and \citet{WangYao14} on spectral limits of the matrix 
${\widehat M}_\varepsilon$, and they have  no results either on convergence of the spiked
(factor) eigenvalues  or on the estimation of the number
of factors as proposed in this paper.
}


\bl{
  The rest of the paper is organized as follows.
  In Section~\ref{ModAss}, after introduction of the model assumptions
  we develop our first main result regarding spectral limits
  of  $\hat M$.
  The new estimator $\hat k$ is then introduced in
  Section~\ref{estimation} and its strong  convergence to the number
  of significant factors $k_0$ established.
  In
  Section \ref{simulation}, detailed Monte-Carlo experiments are
  conducted to check the finite-sample properties of the proposed
  estimator and to compare it with the ratio estimator $\tilde k$
  \eqref{eq:LY} from \citet{LY012}.
  Both estimators $\tilde k$ and $\hat k$ are
  next tested in Section \ref{application} on a real
  data set from Standard \& Poor stock returns
  and compared in details.
  Notice that  
  some technical lemmas used in the main
  proofs are gathered in a companion paper of supplementary material \citet{li2}.
}


\section{Large-dimensional limits of noise and factor eigenvalues}
\label{ModAss}

The static factor
model \eqref{model} is further specified to satisfy the following
assumptions.

\medskip\noindent{\bf Assumption 1.}\quad
The factor $(x_t)$ is a $k$-dimensional ($k\ll p$ fixed) stationary time series, each dimension is independent of each other, with the representation of each component:
\[
x_{it}=\sum_{l=0}^{\infty}\phi_{il}\eta_{i\,t-l}~,i=1,\cdots,k,~t=1,\cdots,T,
\]
where $(\eta_{ik})$ is a real-valued and weakly stationary white noise
with mean 0 and variance $\sigma_i^2$. The series $\{x_{it}\}$ has
variance $\gamma_0(i)$ and lag-1 auto-covariance $\gamma_1(i)$.
\bl{
  Moreover, the variance $\gamma_0(i)$ will be hereafter referred as the
  {\em strength} of the $i$-th factor time series $\{x_{it}\}$.}

\medskip\noindent{\bf \re{Assumption 2.}}\quad
The idiosyncratic component $(\varepsilon_t)$ is independent of
$x_t$. $\varepsilon_t$ is $p-$dimensional real valued random vector
with independent entries $\varepsilon_{it},~i=1,\cdots,p$,
\bl{not necessarily identically distributed}, satisfying
\[
\E (\varepsilon_{i\,t})=0,~\E (\varepsilon_{i\,t}^2)=\sigma^2,
\]
and for any $\eta>0$,
\begin{equation} \label{lindeberg}
\frac{1}{\eta^4 pT}\sum_{i=1}^p \sum_{t=1}^{T+1}\E
  \left(|\varepsilon_{i\,t}|^4I_{(|\varepsilon_{i\,t}|\geq \eta
      T^{1/4})}\right) \longrightarrow 0  \quad \text{as } (pT) \to\infty.
\end{equation}

\medskip\noindent{\bf Assumption 3.}\quad
The dimension $p$ and the sample size $T$  tend to infinity
proportionally:
$p\to\infty$, $T=T(p) \to\infty$ and
$p/T \rightarrow y>0$.

\medskip

Assumption 1 defines the static  factor model considered in
this paper. Assumption 2 details the moment condition and the
independent structure of the noise. In particular, \eqref{lindeberg}
is a Lindeberg-type condition widely used in random matrix theory.
\bl{In particular, if the fourth moments of the variables
  $\{\veps_{it}\}$ are uniformly bounded, the Lindeberg condition is
  satisfied}.
Assumption 3 defines the high-dimensional setting where the dimension
and the sample size can be both large without however one dominating
the other.


First we have,
\begin{align*}
  \hat{\Sigma}_y{ } &= \frac{1}{T}\sum_{t=2}^{T+1}y_ty_{t-1}'
  = \frac{1}{T}\sum_{t=2}^{T+1}(Ax_t+\veps_t) (Ax_{t-1}+\veps_{t-1})'\\
  &= \frac{1}{T}\sum_{t=2}^{T+1}Ax_tx_{t-1}'A' +
  \frac{1}{T} \sum_{t=2}^{T+1} \lb   Ax_t\veps_{t-1}'+  \veps_t x_{t-1}' A' \rb
  + \frac{1}{T}\sum_{t=2}^{T+1}\veps_t\veps_{t-1}'\\
  &:= {P}_A+\widehat{\Sigma}_\varepsilon{ }.
\end{align*}
The matrix $\widehat{\Sigma}_\varepsilon{ } = T^{-1} \sum_t \veps_t\veps_{t-1}'$ is the analogous sample
autocovariance matrix associated to the noise $(\veps_t)$.
Since $A$ has rank $k$, the rank of the matrix $P_A$
is bounded by $2k$ (we will see in fact that asymptotically, the rank
of $P_A$ will be eventually $k$).
Therefore, the  autocovariance matrix of interest
\re{$\widehat{\Sigma}_y{}$}
is seen as a
finite-rank perturbation of  the noise autocovariance matrix
$\widehat{\Sigma}_{\varepsilon}{ }$. \re{Since the matrix $\widehat{\Sigma}_y{ }$ is not symmetric, we consider its singular values, which are also the square root of the eigenvalues of $\widehat{M}:=\widehat{\Sigma}_y{ }\widehat{\Sigma}^{'}_y{ }$. Therefore, the study of the singular values  of $\widehat{\Sigma}_y{ }$ reduces to the study of the eigenvalues of $\widehat{M}$, which is also a finite rank perturbation of the base component $\widehat{M}_{\varepsilon}:=\widehat{\Sigma}_{\varepsilon}{ }\widehat{\Sigma}^{'}_{\varepsilon}$.}

Finite-rank perturbations of random matrices have been actively
studied in recent years and the theory is much linked to the {\em
  spiked population models} well known in high-dimensional statistics
literature. For some recent accounts on this theory,  we refer to
\citet{Johnstone01},
\citet{BaikSilv06},
\citet{BaiYao08},
\citet{BenaychNadakuditi11},
\citet{DY12}
and the references therein.
A general picture from this theory is that
first,  the \re{eigenvalues}  of the base
matrix will converge to a limiting spectral distribution (LSD)
with a compact support, say an interval $[a,b]$; and secondly, for the \re{eigenvalues} of the
perturbed matrices,
most of them (base eigenvalues)
will converge to the same LSD independently of the
perturbation while a small number among the largest
ones  will converge to a limit outside the support of the LSD
(outliers).
However, all the existing literature cited above concern \re{the finite rank perturbation of large-dimensional } sample
covariance matrices or Wigner matrices.
As a theoretic contribution of  the paper, we extend this theory to
the case of a perturbed auto-covariance matrix by giving exact
conditions under which the aforementioned dichotomy between
base eigenvalues and outliers still hold.
Specifically, it will be proved in this section that once the $k$
factor strengths $(a_i)$ are not ``too weak'',
they will generate exactly $k$ outliers, while
the remaining $p-k$ \re{eigenvalues}  will behave as the \re{eigenvalues}
of the base \re{$\widehat{M}_{\varepsilon}$, which converges} to a compactly supported LSD.

It is then apparent that under such dichotomy and by ``counting'' the
outliers  outside the interval $[a,b]$, we will be able to obtain  a
consistent estimator of the number of factors $k$.

In what follows, we first recall two  existing result
on the asymptotic of the singular values of
$\widehat{\Sigma}_\varepsilon{ }$.
Then we develop our theory on the limits of largest
(outliers)   and base singular values of $\widehat{\Sigma}_y{ }$.

\subsection{Limiting spectral distribution of
  \re{$\widehat{M}_\varepsilon{ }$}}

We first recall two useful results on the base matrix
$\widehat{M}_\varepsilon{ }$.
Firstly,
the limiting spectral distribution of the \re{eigenvalues of
$\widehat{M}_\varepsilon{ }$}
has been obtained in a recent paper \citet{li}.
Write
\begin{align*}
  \widehat{M}_{\varepsilon}
  & =\lb\frac 1T\sum_{t=2}^{T+1}\varepsilon_t
  \varepsilon_{t-1}'\rb\lb\frac 1T\sum_{t=2}^{T+1}\varepsilon_t
  \varepsilon_{t-1}'\rb'\\
  &=\frac{1}{T^2}XY'YX'~,
\end{align*}
with  the data matrices
\[X=\left(
  \begin{array}{ccc}
    \veps_{12} &\cdots &\veps_{1,T+1}\\
    \vdots&\ddots&\vdots\\
    \veps_{p2} &\cdots &\veps_{p,T+1}
  \end{array}\right), \quad
Y=\left(
  \begin{array}{ccc}
    \veps_{11} &\cdots &\veps_{1T}\\
    \vdots&\ddots&\vdots\\
    \veps_{p1} &\cdots &\veps_{pT}
  \end{array}
\right).\]

Furthermore, let
$\mu$ be a measure on  the real line supported on an interval
$[\alpha,\beta]$ (the end points can be infinity), \re{with its}
  Stieltjes transform  defined as
\[
m(z)=\int \frac{1}{t-z} d\mu(t)~\quad \mbox{for} ~z \in
\mathbb{C}\backslash \text{supp}(\mu)~,
\]
and
its $T$-transform  as
\[
T(z)=\int \frac{t}{z-t}d \mu(t)~\quad \mbox{for} ~z\in
\mathbb{C}\backslash \text{supp}(\mu).
\]
\re{Notice here that the T-transform }
 is a decreasing homeomorphism  from $(-\infty,\alpha)$ onto
$(T(\alpha^{-}),0)$ and from $(\beta,+\infty)$ onto $(0,T(\beta^{+}))$, which related to each other by
the following \re{equation}:
\[
T(z)=-1-zm(z)~.
\]

\begin{proposition}\label{LSD}[\citet{li}]
  \smallskip
  \par Suppose that Assumptions 2 and 3 hold \re{with} $\sigma^2=1$. Then,
  the \re{empirical spectral  distribution of}
  $B:=1/T^2Y'YX'X$ (\re{ which is the  companion matrix  of  $\widehat M_\veps$})
  converges a.s. to a non-random limit F, whose Stieltjes
  transform $m=m(z)$  satisfies the equation
  \begin{align}\label{str}
    z^2m^3(z)-2z(y-1)m^2(z)+(y-1)^2m(z)-zm(z)-1=0~.
  \end{align}
  In particular, this LSD is supported on the interval
  $[a\mathbf{1}_{\{y \geq 1\}},b]$
  whose  end points are
  \begin{align}
    &a=\left(-1+20y+8y^2-(1+8y)^{3/2}\right)/8~,\\
    &b=\left(-1+20y+8y^2+(1+8y)^{3/2}\right)/8~.\label{b}
  \end{align}
\end{proposition}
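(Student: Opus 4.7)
The plan is to derive the cubic equation \eqref{str} by the Stieltjes transform method, exploiting the fact that $B$ is a product of the two individually tractable Gram matrices $S_1 = T^{-1}Y'Y$ and $S_2 = T^{-1}X'X$, each of which converges to a Mar\v{c}enko--Pastur law with ratio $y$ and admits a sum-of-rank-one representation $S_1 = T^{-1}\sum_{k=1}^p u_k u_k'$, $S_2 = T^{-1}\sum_{k=1}^p v_k v_k'$, where $u_k, v_k \in \R^T$ denote the $k$-th rows of $Y$ and $X$, and the pairs $(u_k, v_k)$ are i.i.d.\ across $k$.

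First I would reduce to bounded entries by the standard truncation-and-centralization argument, which is legitimate because Assumption 2 supplies the Lindeberg condition \eqref{lindeberg} and the Lévy distance between ESDs is controlled by rank and Frobenius-norm inequalities. With $\veps_{it}$ then bounded, I linearize the product $S_1 S_2$ via the $2T \times 2T$ block matrix
$$L(z) = \begin{pmatrix} -I_T & S_2 \\ S_1 & -zI_T \end{pmatrix},$$
whose Schur-complement identity gives $\det L(z) = \det(zI_T - B)$ and whose block inverse satisfies $(L(z)^{-1})_{22} = (B - zI_T)^{-1}$, so that $m_T(z) = T^{-1}\tr(B - zI_T)^{-1}$ is recovered as the normalized trace of the lower-right block of $L(z)^{-1}$. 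I would then apply successive Sherman--Morrison rank-one updates, removing one pair $(u_k, v_k)$ at a time; combined with standard quadratic-form concentration bounds for the i.i.d.\ pairs, this yields a deterministic equivalent of $L(z)^{-1}$ expressed through two auxiliary Stieltjes transforms $m_1, m_2$ associated with $S_1$ and $S_2$. Since $S_1$ and $S_2$ each have limiting Mar\v{c}enko--Pastur distributions with ratio $y$, the transforms $m_1$ and $m_2$ satisfy the classical Mar\v{c}enko--Pastur fixed-point equations, and an algebraic elimination of $m_1, m_2$ from the block-resolvent relation produces the cubic identity \eqref{str} for the limiting Stieltjes transform $m(z)$.

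The support and the explicit endpoints in \eqref{b} then follow from a direct analysis of \eqref{str} viewed as a polynomial in $m$ with coefficients depending on $z$. Solving for $z$ as a function of $m$ on each real branch of $m$ outside the eventual support, the edges of $\operatorname{supp}(F)$ are recovered as the critical values of $z(m)$; setting $dz/dm = 0$ and eliminating $m$ via \eqref{str} yields \eqref{b} after routine algebra involving the discriminant of the cubic. The mass at zero when $y \ge 1$ is forced by the deterministic rank bound $\operatorname{rank}(B) \le p$, which creates at least $T - p$ eigenvalues equal to zero.

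The main technical obstacle will be the shifted dependence between $u_k$ and $v_k$: although the pairs $(u_k, v_k)$ are i.i.d.\ in $k$, each pair is coupled by the one-step time shift, so one cannot directly invoke asymptotic freeness of independent sample covariance matrices. I would handle this by computing the relevant mixed moments $\E(u_k' A v_k)(v_k' B u_k)$ and $\E(u_k' A u_k)(v_k' B v_k)$ for deterministic $T \times T$ matrices $A$ and $B$, showing that up to $o(T^{-1})$ error they factorize as in the independent case, and by checking that the martingale-difference inequalities used in the concentration step remain valid despite the coupling. Threading these moment estimates uniformly in the spectral parameter $z$ so that the fixed-point derivation closes is where the bulk of the technical work will lie.
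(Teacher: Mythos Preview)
The paper does not prove this proposition at all: it is quoted verbatim as a result from \citet{li} (Li, Pan and Yao, 2014, ``On singular value distribution of large-dimensional autocovariance matrices''), with no argument supplied here. The same is true of Proposition~\ref{lmax}, quoted from \citet{WangYao14}. The only spectral-limit statement actually proved in this paper is the short Corollary~\ref{coro}. So there is no ``paper's own proof'' to compare your proposal against.

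That said, your outline is a reasonable strategy for the result in \citet{li}. The $2T\times 2T$ linearization $L(z)$ with $(L(z)^{-1})_{22}=(B-zI_T)^{-1}$ is correct, the row-wise decomposition $S_j=T^{-1}\sum_{k=1}^p w_k^{(j)}(w_k^{(j)})'$ with $(u_k,v_k)$ i.i.d.\ across $k$ is the right way to set up a leave-one-out / martingale argument, and you have correctly identified the only non-standard point: within a pair, $u_k$ and $v_k$ share $T-1$ coordinates by a unit shift, so the mixed quadratic-form moments $\E(u_k'Av_k)(v_k'Bu_k)$ do not factor as for independent vectors and must be computed directly. One caution: the sentence ``$m_1$ and $m_2$ satisfy the classical Mar\v{c}enko--Pastur fixed-point equations'' is not quite how the closure works---the auxiliary transforms that arise from the block resolvent of $L(z)$ are not the Stieltjes transforms of $S_1$ and $S_2$ individually, but coupled quantities defined through $L(z)^{-1}$ itself, and it is the self-consistent system for those that collapses to the cubic~\eqref{str}. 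If you try to plug in the standalone MP equations for $S_1$ and $S_2$ you will not recover~\eqref{str}, because the spectrum of $S_1S_2$ is not determined by the marginal spectra of $S_1$ and $S_2$ alone when they are dependent.
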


Notice that the companion matrix $B$  is $T\times T$ and
it shares the same $p\wedge T$ non-null eigenvalues as
$\widehat M_\veps$, \re{therefore, the support of $\widehat{M}_{\varepsilon}$ is also $[a,b]$} The LSD $F$ of $B$ and the LSD $F^*$  of $\widehat
M_\veps$  are linked by the relationship
\[   y F^* -  F = (y-1) \delta_0~,
\]
where $\delta_0$ is the Dirac mass at the origin.

\begin{remark}
  The equation \eqref{str} can be expressed using the $T$-transform:
  \begin{align}\label{ttr}
    \left(T(z)+1\right)\left(T(z)+y\right)^2=zT(z)
  \end{align}
\end{remark}

The second result is about the convergence of the largest eigenvalue of
$\widehat M_\veps$.

\begin{proposition}\label{lmax}[\citet{WangYao14}]
  \smallskip
  \par Suppose that Assumptions 2 and 3 hold \re{with $\sigma^2=1$}. Then,
  the largest eigenvalue of
  \re{$\widehat{M}_{\varepsilon}$}
  converges a.s. to the right end point $b$ of its LSD given in
  \eqref{b}.
\end{proposition}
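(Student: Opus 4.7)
The plan is to adapt the classical ``no eigenvalues outside the support'' argument, standard for sample covariance matrices, to the autocovariance setting $\widehat M_\varepsilon=(1/T^2)XY'YX'$. The lower bound $\liminf_{p\to\infty}\lambda_{\max}(\widehat M_\varepsilon)\ge b$ almost surely is immediate from Proposition \ref{LSD}: the empirical spectral distribution of $\widehat M_\varepsilon$ converges to a law with right edge $b$, so no subsequence of $\lambda_{\max}$ can cluster strictly below $b$. All the work therefore lies in proving the matching upper bound $\limsup \lambda_{\max}(\widehat M_\varepsilon)\le b$ almost surely.

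As a preliminary step I would perform the now-standard truncation and recentring. Condition \eqref{lindeberg} permits truncation of each $\veps_{it}$ at level $\tau_T T^{1/4}$ with $\tau_T\downarrow 0$, followed by centring and rescaling to restore zero mean and unit variance. A routine Borel--Cantelli argument together with an operator-norm perturbation bound for the centring/rescaling correction shows that it suffices to prove the statement under the additional hypothesis that all entries are uniformly bounded by $\tau_T T^{1/4}$.

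The core step is then a high-moment estimate of the form
\[
\E\,\tr\bigl(\widehat M_\varepsilon^{k_T}\bigr)\le T\cdot b^{k_T}(1+o(1)),
\]
for some $k_T\to\infty$ of order $(\log T)^{c}$ with $c>1$. Combined with $\lambda_{\max}^{k_T}\le \tr(\widehat M_\varepsilon^{k_T})$ and Markov's inequality this yields, for every $\varepsilon>0$,
\[
\mathbb P\bigl(\lambda_{\max}(\widehat M_\varepsilon)>b+\varepsilon\bigr)\le T\,\bigl(b/(b+\varepsilon)\bigr)^{k_T}(1+o(1)),
\]
which is summable in $T$, and Borel--Cantelli closes the argument.

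The main obstacle is the moment bound itself. Expanding $\tr(XY'YX')^{k}$ produces a sum over $4k$-tuples of noise indices that is naturally encoded by closed walks on a bipartite graph whose edges are labelled by entries of $X$ or $Y$. In contrast with the pure Wishart case, here $X$ and $Y$ share all but one column because of the one-step time shift $X=(\veps_2,\dots,\veps_{T+1})$, $Y=(\veps_1,\dots,\veps_T)$, so matching rules involving both $X$- and $Y$-column indices have to be tracked simultaneously. The delicate combinatorial task is (i) to identify the dominant tree-like graph configurations specific to this autocovariance structure and show that they sum to precisely the $k$-th moment of the LSD characterised by \eqref{str}, producing the factor $b^{k}$; and (ii) to show that every higher-genus configuration gains at least one extra negative power of $T$ uniformly in $k=O((\log T)^{c})$, controlled against the truncation level $\tau_T T^{1/4}$ on individual entries. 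This graph enumeration, where the extra dependence between $X$ and $Y$ must be handled carefully, is the real technical heart of \citet{WangYao14}.
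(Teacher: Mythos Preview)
The paper does not contain a proof of this proposition: it is stated as a citation from \citet{WangYao14} and used as a black box (together with Proposition~\ref{LSD}) to derive Corollary~\ref{coro} and later the convergence of noise eigenvalues in Theorem~\ref{khat}. So there is no ``paper's own proof'' to compare against here.

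That said, your outline is an accurate high-level description of the moment-method strategy one expects in \citet{WangYao14}: the LSD lower bound from Proposition~\ref{LSD}, truncation via the Lindeberg condition~\eqref{lindeberg}, and a high-moment bound $\E\,\tr(\widehat M_\varepsilon^{k_T})$ controlled by graph enumeration, with the key difficulty being the column overlap between $X$ and $Y$ induced by the lag. You correctly identify that the novelty relative to the Wishart case lies in handling this dependence in the combinatorics. Since you explicitly defer the technical heart to \citet{WangYao14}, your write-up is really a summary of that external proof rather than an independent argument, which is exactly how the present paper treats the result as well.
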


Combining Propositions~\ref{LSD} and \ref{lmax}, we have \re{the following corollary.}
\begin{corollary}\label{coro}
 Under the same conditions as in Proposition~\ref{lmax}, if
 $(\beta_j)$  are sorted eigenvalues of $\widehat M_\veps$, then
 for any fixed $m$, the $m$  largest eigenvalues
 $\beta_1\ge  \beta_2\ge \cdots\ge  \beta_m$ all converge to $b$.
\end{corollary}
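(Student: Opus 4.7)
The plan is to establish the almost sure convergence $\beta_m \to b$ by a two-sided squeeze, combining the two preceding results: Proposition~\ref{LSD} controls how many eigenvalues lie near $b$ through the limiting spectral distribution, and Proposition~\ref{lmax} provides the upper endpoint via $\beta_1 \to b$ a.s.

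For the upper bound, I would simply invoke the trivial interlacing $\beta_m \le \beta_{m-1} \le \cdots \le \beta_1$, which combined with Proposition~\ref{lmax} yields $\limsup_p \beta_m \le b$ almost surely for any fixed $m$. This step is immediate and requires no additional work.

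The lower bound is the interesting direction. The idea is that since the LSD $F^\ast$ of $\widehat M_\veps$ is supported on $[a,b]$ with $b$ in the topological support, we have $F^\ast((b-\veps,b]) > 0$ for every $\veps > 0$. By Proposition~\ref{LSD} (applied to $\widehat M_\veps$ via its companion $B$ and the relation $yF^\ast - F = (y-1)\delta_0$, which affects only the mass at the origin), the empirical spectral distribution of $\widehat M_\veps$ converges weakly and almost surely to $F^\ast$. Since $F^\ast$ has a continuous density on $[a,b]$, the interval $(b-\veps, b]$ is a continuity set, and thus the fraction $p^{-1}\#\{j : \beta_j \in (b-\veps, b]\}$ converges a.s. to $F^\ast((b-\veps,b]) > 0$. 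In particular, the count itself diverges, so for any fixed $m$ and almost surely for all large $p$, there are at least $m$ eigenvalues of $\widehat M_\veps$ in $(b-\veps,b]$, which forces $\beta_m \ge b - \veps$. Letting $\veps \downarrow 0$ along a countable sequence gives $\liminf_p \beta_m \ge b$ almost surely, completing the proof.

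The only point requiring a moment of care is the justification that $b$ genuinely lies in the topological support of $F^\ast$ (so that $F^\ast$ charges every left neighborhood of $b$), and that $F^\ast$ has no atom at $b$ so that $(b-\veps,b]$ is a continuity set. Both follow from the explicit description of $F^\ast$ already worked out in~\citet{li}: the limiting density is positive up to the edge $b$ given by~\eqref{b}. Once this is noted, the argument is a standard application of the Portmanteau theorem to the random empirical measures, with no further technical obstacle.
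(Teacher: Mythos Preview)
Your proposal is correct and follows essentially the same route as the paper's proof: the upper bound via $\beta_m\le\beta_1\to b$ and the lower bound by observing that the LSD places positive mass on every left neighborhood of $b$, so the count of eigenvalues in $(b-\veps,b]$ diverges and forces $\beta_m>b-\veps$ eventually. Your additional care with Portmanteau and continuity sets is a slight sharpening of the paper's more informal ``density positive and continuous'' remark, but the argument is the same.
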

\begin{proof}
  For any $\delta>0$, almost surely the number of sample eigenvalues
  of $\beta_j$ falling  into
  the interval $(b-\delta, b)$ grows to infinity
  due to the fact the density of the LSD is positive and continuous
  on this interval. Then for fixed
  $m$, a.s. $\liminf_{p\to\infty}   \beta_m\ge b-\delta$. By letting
  $\delta\to 0$, we have a.s.  $\liminf_{p\to\infty}   \beta_m\ge b$.
  Obviously, $\limsup_{p\to\infty}   \beta_m \le \limsup_{p\to\infty}
  \beta_1= b$, i.e a.s. $\lim_{p\to\infty}   \beta_m=b$.
\end{proof}

\subsection{\re{Convergence} of the largest
  \re{eigenvalues} of the
  sample autocovariance matrix
  $\widehat{M}$}

The following main result of the section characterizes the limits of
the $k$-largest \re{eigenvalues} of the sample autocovariance matrix
\re{$\widehat{M}$}.

\begin{theorem}\label{mainth}
  Suppose that the model \eqref{model} satisfies
  Assumptions 1, 2 and 3 and that 
  the noise $\{\varepsilon_t\}$  are  normal distributed
  and \bll{the loading matrix $A$ is normalized as $A'A=I_k$.}
  Let $l_i~(1\leq i\leq k)$ denote the $k$
  largest eigenvalue of
  $\widehat{M}$. Then for
  each $1\le i\le k$,
  $l_i/\sigma^4$ converges almost surely to a limit $\lambda_i$. Moreover,
  \[
  \lambda_i=b  \quad \text{when} ~~   T_1(i) \ge T(b^+),
  \]
  where
  \begin{equation}\label{t1}
    T_1(i)=\frac{2y\sigma^2\gamma_0(i)+\gamma_1(i)^2-\sqrt{(2y\sigma^2\gamma_0(i)+\gamma_1(i)^2)^2-4 y^2\sigma^4(\gamma_0(i)^2-\gamma_1(i)^2)}}{2\gamma_0(i)^2-2\gamma_1(i)^2}~.
  \end{equation}
  Otherwise, i.e.
  $T_1(i)< T(b^+)$,  $\lambda_i>b$ and its value is characterized by
  the fact that the
  $T$-transform  $T(\lambda_i)$
  is the solution to  the equation:
  \begin{equation}
    \label{limit}
    \left(y\sigma^2-\gamma_0(i)T(\lambda_i)\right)^2=\gamma_1(i)^2T(\lambda_i)\left(1+T(\lambda_i)\right)~.
  \end{equation}
\end{theorem}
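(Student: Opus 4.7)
The plan is to realize $\widehat M$ as a finite-rank symmetric perturbation of $\widehat M_\veps$, extract its outliers via a Weinstein--Aronszajn determinantal identity, and pass to the limit in the resulting finite-dimensional secular equation using the Stieltjes/$T$-transform of Proposition~\ref{LSD}. By rescaling $\veps_t \mapsto \veps_t/\sigma$ (which multiplies every eigenvalue of $\widehat M$ by $\sigma^4$, thereby explaining the $l_i/\sigma^4$ normalization), I may assume $\sigma = 1$. With the cross-covariance notation $\widehat\Sigma_x := T^{-1}\sum_t x_t x_{t-1}'$, $\widehat\Sigma_{x\veps} := T^{-1}\sum_t x_t \veps_{t-1}'$ and $\widehat\Sigma_{\veps x} := T^{-1}\sum_t \veps_t x_{t-1}'$, the excerpt's decomposition $\widehat\Sigma_y = \widehat\Sigma_\veps + P_A$ factors as $P_A = UV'$ with $U := [A,\, \widehat\Sigma_{\veps x}]$ and $V := [A\widehat\Sigma_x' + \widehat\Sigma_{x\veps}',\, A]$, both of size $p\times 2k$. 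Expanding $\widehat M = (\widehat\Sigma_\veps + UV')(\widehat\Sigma_\veps + UV')'$ yields
\begin{align*}
  \widehat M = \widehat M_\veps + \widetilde W\, \widetilde B\, \widetilde W',
  \quad \widetilde W := [U,\ \widehat\Sigma_\veps V],
  \quad \widetilde B := \begin{pmatrix} V'V & I_{2k} \\ I_{2k} & 0 \end{pmatrix},
\end{align*}
exhibiting $\widehat M$ as a rank-$\leq 4k$ additive perturbation of $\widehat M_\veps$.

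For any $\lambda \notin \sigma(\widehat M_\veps)$, the Weinstein--Aronszajn identity turns $\det(\widehat M - \lambda I) = 0$ into
\begin{align*}
  \det\bigl(I_{4k} + \widetilde B\, G(\lambda)\bigr) = 0,
  \qquad G(\lambda) := \widetilde W'(\widehat M_\veps - \lambda I)^{-1}\widetilde W.
\end{align*}
Sample outliers of $\widehat M$ therefore correspond to zeros of this $4k$-dimensional determinant, and the whole problem reduces to finding the almost-sure deterministic limit of $G(\lambda)$ for each $\lambda > b$.

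The blocks of $G(\lambda)$ come in four structural types: $A'(\widehat M_\veps - \lambda I)^{-1}A$; $A'(\widehat M_\veps - \lambda I)^{-1}\widehat\Sigma_\veps A$; $A'\widehat\Sigma_\veps'(\widehat M_\veps - \lambda I)^{-1}\widehat\Sigma_\veps A$; and the analogues with one or both occurrences of $A$ replaced by $\widehat\Sigma_{\veps x}$ (or $\widehat\Sigma_{x\veps}'$). Under the Gaussian hypothesis, orthogonal invariance of the noise together with $A'A = I_k$ yields $A'(\widehat M_\veps - \lambda I)^{-1}A \to m^*(\lambda)\,I_k$ a.s., where $m^*$ is the Stieltjes transform of the LSD of $\widehat M_\veps$. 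The key algebraic identity $\widehat\Sigma_\veps'(\widehat M_\veps - \lambda I)^{-1}\widehat\Sigma_\veps = I + \lambda(\widehat\Sigma_\veps'\widehat\Sigma_\veps - \lambda I)^{-1}$ converts the third type into a trace against the companion LSD $F$ of Proposition~\ref{LSD}, producing entries involving $1 + \lambda m(\lambda) = -T(\lambda)$. The mixed blocks coupling $A$ with $\widehat\Sigma_{\veps x}$ are handled by iterating this device and exploiting independence of $\{x_t\}$ and $\{\veps_t\}$ (Assumption~2), together with the elementary limits $\widehat\Sigma_x \to \mathrm{diag}(\gamma_1(i))$ and $\widehat\Sigma_{\veps x}'\widehat\Sigma_{\veps x}/p \to y\,\mathrm{diag}(\gamma_0(i))$. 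By the mutual independence of the $k$ factor components (Assumption~1) combined with $A'A = I_k$, the limit determinant factorizes into $k$ independent scalar equations, one per factor $i$, and a direct algebraic reduction via the master relation~\eqref{ttr} collapses the $i$-th equation to exactly~\eqref{limit}.

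Finally, viewing~\eqref{limit} as a quadratic in $T$ produces the smaller positive root $T_1(i)$ of~\eqref{t1}. Since $T$ is a decreasing homeomorphism from $(b,\infty)$ onto $(0, T(b^+))$, this root corresponds to a genuine outlier $\lambda_i = T^{-1}(T_1(i)) > b$ if and only if $T_1(i) < T(b^+)$; otherwise the secular equation has no admissible root above $b$, and combining Corollary~\ref{coro} with Weyl's inequality for the rank-$4k$ perturbation squeezes the sample eigenvalue $l_i$ to the edge $b$. The principal technical obstacle lies in the third paragraph: proving that the mixed quadratic forms coupling the deterministic direction $A$ with the random $\widehat\Sigma_\veps$ concentrate at their deterministic limits close to the spectral edge is delicate, and it is precisely this step that forces the Gaussian hypothesis in the statement.
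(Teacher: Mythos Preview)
Your proposal is correct in outline and reaches the same destination, but the route differs from the paper's in two organizational respects. First, the paper does not perturb $\widehat M$ directly; instead it passes to the $2p\times 2p$ symmetrized matrix $\left(\begin{smallmatrix}0&\widehat\Sigma_y\\\widehat\Sigma_y'&0\end{smallmatrix}\right)$, whose positive eigenvalues are the singular values $\sqrt{l_i}$. After permuting blocks and applying a Schur complement with respect to the $(2p-2k)\times(2p-2k)$ noise block $\left(\begin{smallmatrix}0&E_2'E_1\\E_1'E_2&0\end{smallmatrix}\right)$, the secular equation is only $2k\times 2k$ rather than your $4k\times 4k$. Second, the paper uses Gaussianity up front to rotate $A$ into the canonical form $\left(\begin{smallmatrix}I_k\\0\end{smallmatrix}\right)$, so that the resulting blocks are explicit expressions in $X_0,X_1,E_1,E_2$ whose limits are computed by dedicated lemmas in the supplement; you instead keep $A$ general and invoke orthogonal invariance to concentrate quantities like $A'(\widehat M_\veps-\lambda I)^{-1}A$ directly. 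Your approach is closer in spirit to the Benaych-Georges--Nadakuditi additive-perturbation framework and is arguably more modular, while the paper's symmetrization trick halves the size of the determinantal equation and makes the block limits more concrete. Both arrive at~\eqref{limit} after the same use of~\eqref{ttr}, and the phase-transition analysis in your last paragraph matches the paper's Step~4. One small slip: you wrote $\widehat\Sigma_{\veps x}'\widehat\Sigma_{\veps x}/p\to y\,\mathrm{diag}(\gamma_0(i))$, but no division by $p$ is needed since $\veps_s'\veps_t\approx p\delta_{st}$ already supplies the factor $p/T\to y$.
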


%


\bl{
  The theorem establishes a {\em phase transition phenomenon}
  for the $k$ sample factor  eigenvalues $(l_i)$.
  Define the {\em  number of  significant factors}
  \begin{equation}
    \label{k0}
    k_0 = \sharp\left\{ 1\le i\le k : ~
      T_1(i)< T(b^+)\right\}.
  \end{equation}
  Therefore, for each of the $k_0$ significant factor,
  the corresponding sample eigenvalue $l_i$ will converge to a limit
  $\lambda_i$ outside the base support interval $[a,b]$.
  In contrary, for the $k-k_0$ factors for which
  $T_1(i)\ge T(b^+)$, they are too weak in the sense that
  the corresponding sample eigenvalue $l_i$ will converge
  to $b$ which is also the limit of the largest
  noise eigenvalues $l_{k+1},\ldots, l_{k+m}$ ($m$ is a fixed
  number here). Therefore, these weakest factors will be merged 
  with noise component and their detection becomes hardly possible.

  Later in Section~\ref{condition}, it will be established
  that for the $i$-th factor time series be significant,
  the phase  transition condition $T_1(i)< T(b^+)$ essentially requires the strength
  $\gamma_0(i)$ be large enough.
}

\begin{proof}{(of Theorem~\ref{mainth})}\quad
  The proof consists in four steps where 
  some technical lemmas are
  to be found in  the companion paper of 
  supplementary material \citet{li2}.

\medskip\noindent{\textsc{Step 1.  Simplification of variance $\sigma^2$ of white noise $\{\varepsilon_{i\,t}\}$.}}\quad  To ease the complexity of the proof of this main theorem, we firstly reduce the variance of the white noise from $\sigma^2$ to 1. In our model setting, we have \eqref{model} equivalent to
\[
\frac{y_t}{\sigma}=A\frac{x_t}{\sigma}+\frac{\varepsilon_t}{\sigma}~.
\]
And if we denote $\tilde{y}_t=y_t/\sigma$, $\tilde{x}_t=x_t/\sigma$ and $\tilde
{\varepsilon}_t=\varepsilon_t/\sigma$, then we are dealing with the model
\begin{equation}\label{model1}
\tilde{y}_t=A\tilde{x}_t+\tilde{\varepsilon}_t~,
\end{equation}
where the white noise $\tilde{\varepsilon}$ has mean zero and unit variance and the variance and autocovariance of the factor process $\{\tilde{x}_t\}$ satisfies
\begin{align}
\tilde{\gamma}_0(i)=\gamma_0(i)/\sigma^2~,
\tilde{\gamma}_1(i)=\gamma_1(i)/\sigma^2~,
\end{align}
\re{in which $\gamma_0(i)$ and $\gamma_1(i)$ are the variance and autocovariance of the original factor process $\{x_t\}$.}
\noindent
Therefore, in all the following, we just consider the standardized Model \eqref{model1}. For convenience, we use notations of the original model \eqref{model} and set $\sigma^2=1$ to investigate Model \eqref{model1}. At the end of the proof, we will replace the value of $\gamma_0(i)$ and $\gamma_1(i)$ with $\tilde{\gamma}_0(i)$ and $\tilde{\gamma}_1(i)$ to recover the corresponding results for Model \eqref{model}.

  \medskip\noindent{\textsc{Step 2.  Simplification of matrix $A$.}}\quad
  Here we argue that it is enough to consider the case where the
  loading matrix $A$ has the canonical form
  \[
  A=\begin{pmatrix}
    I_k \\
    \bm{0}_{p-k} \\
  \end{pmatrix}~.
  \]

  \bll{
  Indeed, suppose $A$ is not in this canonical form. Since by
  assumption $A'A=I_k$, we can complete $A$ to an orthogonal matrix 
  $Q=(A,C)$ by adding appropriate orthonormal columns. From the model 
  equation \eqref{model}, we have 
  \begin{align*}
    Q'y_t=Q'Ax_t+Q'\varepsilon_t=\begin{pmatrix}
      A' \\
      C' \\
    \end{pmatrix}Ax_t+Q'\varepsilon_t=\begin{pmatrix}
      I_k \\
      \bm{0}_{p-k} \\
    \end{pmatrix}x_t+Q' \varepsilon_t~.
  \end{align*}
  Since $\varepsilon_t \sim \N(0,I_p)$ and $Q'$ is orthogonal,
  $Q'\varepsilon_t \sim \N(0,I_p)$.
  Let $z_t:=Q'y_t$, then  $z_t$  satisfies the model equation 
  \eqref{model} with a canonical loading matrix.
  What happens is that    the singular values of 
  the two lag-1 autocovariance matrices 
  \[
  \frac 1T \sum_{t=2}^{T+1} z_tz_{t-1}', 
  \quad 
  \frac 1T \sum_{t=2}^{T+1} y_ty_{t-1}'
  \]
  are the same: this is 
  simply due to  fact that
}
  \begin{align*}
    &~~~~\left(\frac 1T \sum_{t=2}^{T+1} y_ty_{t-1}'\right)\left(\frac 1T \sum_{t=2}^{T+1} y_ty_{t-1}'\right)'\\
    &=\left(\frac 1T \sum_{t=2}^{T+1} Q'z_t\cdot(Q'z_{t-1})'\right)\left(\frac 1T \sum_{t=2}^{T+1} Q'z_t\cdot(Q'z_{t-1})'\right)'\\
    &=Q'\left(\frac 1T \sum_{t=2}^{T+1} z_tz_{t-1}'\right)\left(\frac 1T \sum_{t=2}^{T+1} z_tz_{t-1}'\right)'Q~.\\
  \end{align*}

  \medskip\noindent\textsc{Step 3. Derivation of the main equation \eqref{limit}}\quad
  From now on we assume that $A$ is in its canonical form.
  By  the definition of $y_t$, we have
  \begin{align}
    &~~~~\widehat{\Sigma}_y{ }=\frac 1T \sum_{t=2}^{T+1} y_t y_{t-1}'\nonumber \\
    &=\frac 1T \sum_{t=2}^{T+1}\begin{pmatrix}
      x_{1\,t}+\varepsilon_{1\,t} \\
      \vdots\\
      x_{k\,t}+\varepsilon_{k\,t} \\
      \varepsilon_{k+1\,t} \\
      \vdots \\
      \varepsilon_{p\,t} \\
    \end{pmatrix}\begin{pmatrix}
      x_{1\,t-1}+\varepsilon_{1\,t-1} &\cdots & x_{k\,t-1}+\varepsilon_{k\,t-1}& \varepsilon_{k+1\,t-1} & \cdots & \varepsilon_{p\,t-1} \\
    \end{pmatrix}\nonumber\\
    &:= \begin{pmatrix}
      A & B \\
      C & D \\
    \end{pmatrix}~,\label{matrixdefinition}
  \end{align}
  where we use $A$, $B$, $C$ and $D$ to denote the four blocks. Besides, if we use the notation:
  \begin{align*}
    &X_0:=\frac{1}{\sqrt T}\begin{pmatrix}
      x_{1\,1}+\varepsilon_{1\,1} & \cdots & x_{k\,1}+\varepsilon_{k\,1} \\
      \vdots & \ddots & \vdots\\
      x_{1\,T}+\varepsilon_{1\,T} & \cdots & x_{k\,T}+\varepsilon_{k\,T} \\
    \end{pmatrix}~,\\
    &X_1:=\frac{1}{\sqrt T}\begin{pmatrix}
      x_{1\,2}+\varepsilon_{1\,2} & \cdots & x_{k\,2}+\varepsilon_{k\,2} \\
      \vdots & \ddots & \vdots\\
      x_{1\,T+1}+\varepsilon_{1\,T+1} & \cdots & x_{k\,T+1}+\varepsilon_{k\,T+1} \\
    \end{pmatrix}~,\\
    &E_1:=\frac{1}{\sqrt T}\begin{pmatrix}
      \varepsilon_{k+1\,1} & \cdots & \varepsilon_{p\,1} \\
      \vdots & \ddots & \vdots \\
      \varepsilon_{k+1\,T} & \cdots & \varepsilon_{p\,T} \\
    \end{pmatrix},~~\quad
    E_2:=\frac{1}{\sqrt T}\begin{pmatrix}
      \varepsilon_{k+1\,2} & \cdots & \varepsilon_{p\,2} \\
      \vdots & \ddots & \vdots \\
      \varepsilon_{k+1~T+1} & \cdots & \varepsilon_{p~T+1} \\
    \end{pmatrix}~,
  \end{align*}
  then we have
  \begin{eqnarray}\label{bd}
    A=X_1'X_0~,~~B=X_1'E_1~,~~C=E_2'X_0~,~~D=E_2'E_1~.
  \end{eqnarray}

  Since $l$ is the extreme large eigenvalue of $\widehat{\Sigma}_y{ }\widehat{\Sigma}_y{ }'$, $\sqrt{l}$ is the extreme large singular value of $\widehat{\Sigma}_y{ }$, which is also equivalent to saying that $\sqrt{l}$ is  the positive eigenvalue of the $2p \times 2p$ matrix
  \begin{equation}\label{yy}
    \left(
      \begin{array}{cc}
        0 & \widehat{\Sigma}_y{ }\\
        \widehat{\Sigma}_y{ }' & 0\\
      \end{array}\right) ~.
  \end{equation}
  And use the block expression \eqref{matrixdefinition}, combining with the definition of each block in \eqref{bd}, \eqref{yy} is equivalent to
  \begin{equation}\label{matrix1}
    \left(
      \begin{array}{cccc}
        0 & 0 & X_1'X_0 & X_1'E_1 \\
        0 & 0 & E_2'X_0 & E_2'E_1 \\
        X_0'X_1 & X_0' E_2 & 0 & 0 \\
        E_1'X_1 & E_1'E_2 & 0 & 0 \\
      \end{array}
    \right)~.
  \end{equation}
  If we interchange the second and third row block and column block in \eqref{matrix1}, its eigenvalues remain the same. Therefore, $\sqrt{l}$ should satisfy the following equation
  \begin{equation}\label{matrix2}
    \left\lvert\left(
        \begin{array}{cccc}
          \sqrt{l} & -X_1'X_0 & 0 & -X_1'E_1 \\
          -X_0'X_1 & \sqrt{l} & -X_0'E_2 & 0 \\
          0 & -E_2'X_0 & \sqrt{l} & -E_2'E_1 \\
          -E_1'X_1 & 0 & -E_1'E_2 & \sqrt{l} \\
        \end{array}\right)\right\rvert=0~.
  \end{equation}
  Then for block matrix, we have the identity $\det\begin{pmatrix}
    A & B \\
    C & D \\
  \end{pmatrix}=\det D \cdot \det(A-BD^{-1}C)$ when $D$ is invertible, then \eqref{matrix2} is equivalent to

  \begin{align}\label{ll1}
    \left\rvert\left(\begin{smallmatrix}
          \sqrt{l} & -X_1'X_0 \\
          -X_0'X_1 & \sqrt{l} \\
        \end{smallmatrix}\right)-\left(\begin{smallmatrix}
          0 & -X_1'E_1 \\
          -X_0'E_2& 0 \\
        \end{smallmatrix}\right)\left(\begin{smallmatrix}
          \sqrt{l} & -E_2'E_1 \\
          -E_1'E_2 & \sqrt{l} \\
        \end{smallmatrix}\right)^{-1}\left(\begin{smallmatrix}
          0 & -E_2'X_0 \\
          -E_1'X_1 & 0 \\
        \end{smallmatrix}\right)
    \right\rvert=0~,\nonumber\\
    \setlength{\arraycolsep}{0.2pt}
  \end{align}
  which is due to the fact that $\sqrt{l}$ is the extreme singular value, then
  \[\left\rvert
    \begin{pmatrix}
      \sqrt{l} & -E_2'E_1 \\
      -E_1'E_2 & \sqrt{l} \\
    \end{pmatrix}\right\rvert\neq 0
  \]and therefore is invertible.

  Then if we do the calculation of \[
  \begin{pmatrix}
    \sqrt{l} & -E_2'E_1 \\
    -E_1'E_2 & \sqrt{l}\\
  \end{pmatrix}^{-1}~,
  \] \eqref{ll1} is equivalent to
  \[\left\rvert
    \left(\begin{smallmatrix}
        \sqrt{l} I_k-\sqrt{l} X_1'E_1(lI-E_1'E_2E_2'E_1)^{-1}E_1'X_1 & -X_1'\left(I+E_1E_1'E_2(lI-E_2'E_1E_1'E_2)^{-1}E_2'\right)X_0\\
        -X_0'\left(I+E_2E_2'E_1(lI-E_1'E_2E_2'E_1)^{-1}E_1'\right)X_1&
        \sqrt{l}I_k-\sqrt{l} X_0'E_2(lI-E_2'E_1E_1'E_2)^{-1}E_2'X_0 \\
      \end{smallmatrix}\right)\right\rvert=0~,
  \]
  and using the simple fact that
  \[
  A(l I-BA)^{-1}=(l I-AB)^{-1}A~
  \]
  leads to
  \begin{align}\label{m1}\left\rvert
      \left(\begin{smallmatrix}
          \sqrt{l} I_k-\sqrt{l} X_1'(lI-E_1E_1'E_2E_2')^{-1}E_1E_1'X_1 & -X_1'\left(I+(lI-E_1E_1'E_2E_2')^{-1}E_1E_1'E_2E_2'\right)X_0\\
          -X_0'\left(I+(lI-E_2E_2'E_1E_1')^{-1}E_2E_2'E_1E_1'\right)X_1&
          \sqrt{l}I_k-\sqrt{l} X_0'(lI-E_2E_2'E_1E_1')^{-1}E_2E_2'X_0 \\
        \end{smallmatrix}\right)\right\rvert=0~.\nonumber\\
  \end{align}
Taking Lemmas 1.3 and 1.4 
given in \citet{li2} into consideration, the matrix in \eqref{m1} tends to a block matrix:
  \[
  \left(
    \begin{array}{cc}
      \begin{matrix}
        \frac{\sqrt{\lambda} (y-\gamma_0(1)T(\lambda))}{y+T(\lambda)} & \cdots & 0  \\
        \vdots & \ddots & \vdots\\
        0& \cdots  & \frac{\sqrt{\lambda} (y-\gamma_0(k)T(\lambda))}{y+T(\lambda)}
      \end{matrix} & \begin{matrix}
        -(1+T(\lambda))\gamma_1(1) & \cdots & 0  \\
        \vdots & \ddots & \vdots\\
        0& \cdots  & -(1+T(\lambda))\gamma_1(k)
      \end{matrix}\\[8mm]
      \hdashline
      \\
      \begin{matrix}
        -(1+T(\lambda))\gamma_1(1) & \cdots & 0  \\
        \vdots & \ddots & \vdots\\
        0& \cdots  & -(1+T(\lambda))\gamma_1(k)
      \end{matrix}  &  \begin{matrix}
        \frac{\sqrt{\lambda} (y-\gamma_0(1)T(\lambda))}{y+T(\lambda)} & \cdots & 0  \\
        \vdots & \ddots & \vdots\\
        0& \cdots  & \frac{\sqrt{\lambda} (y-\gamma_0(k)T(\lambda))}{y+T(\lambda)}
      \end{matrix}
    \end{array}
  \right)~,
  \]
  so $\lambda$ should make the determinant of this matrix equal to $0$.
  If we interchange the first and second column block, the matrix becomes the following:
  \[
  \left(
    \begin{array}{c:c}
      \begin{matrix}
        -(1+T(\lambda))\gamma_1(1) & \cdots & 0  \\
        \vdots & \ddots & \vdots\\
        0& \cdots  & -(1+T(\lambda))\gamma_1(k)
      \end{matrix} & \begin{matrix}
        \frac{\sqrt{\lambda} (y-\gamma_0(1)T(\lambda))}{y+T(\lambda)} & \cdots & 0  \\
        \vdots & \ddots & \vdots\\
        0& \cdots  & \frac{\sqrt{\lambda} (y-\gamma_0(k)T(\lambda))}{y+T(\lambda)}
      \end{matrix}\\[8mm]
      \hdashline
      \\
      \begin{matrix}
        \frac{\sqrt{\lambda} (y-\gamma_0(1)T(\lambda))}{y+T(\lambda)} & \cdots & 0  \\
        \vdots & \ddots & \vdots\\
        0& \cdots  & \frac{\sqrt{\lambda} (y-\gamma_0(k)T(\lambda))}{y+T(\lambda)}
      \end{matrix} & \begin{matrix}
        -(1+T(\lambda))\gamma_1(1) & \cdots & 0  \\
        \vdots & \ddots & \vdots\\
        0& \cdots  & -(1+T(\lambda))\gamma_1(k)
      \end{matrix}
    \end{array}
  \right)~.
  \]
  Since the diagonal block
  \[
  \left\vert\begin{pmatrix}
      -(1+T(\lambda))\gamma_1(1) & \cdots & 0  \\
      \vdots & \ddots & \vdots\\
      0& \cdots  & -(1+T(\lambda))\gamma_1(k)
    \end{pmatrix}\right \vert \neq 0~,
  \]
  we can use  the identity
  \[
  \det\begin{pmatrix}
    A & B \\
    C & D \\
  \end{pmatrix}=\det D \cdot \det(A-BD^{-1}C)\]
  again,
  and this leads to the result:
  \[
  \lambda\left(y-\gamma_0(i)T(\lambda)\right)^2-\gamma_1(i)^2\left(1+T(\lambda)\right)^2\left(y+T(\lambda)\right)^2=0~,~~~~i \in [1, \cdots, k]~.
  \]
  Combining this equation with \eqref{ttr} and replacing $\gamma_0(i), \gamma_1(i)$ with $\gamma_0(i)/\sigma^2, \gamma_1(i)/\sigma^2$ leads to the equation~\eqref{limit}.

  \medskip\noindent{\textsc{Step 4. Derivation of the condition $T_1(i)
      < T(b^+)$.}}\quad
 We now look at the solution of the main equation~\eqref{limit}.
  The equation
  reduces to
  \begin{align}\label{t}
    \left[\gamma_0(i)^2-\gamma_1(i)^2\right]\cdot T^2(\lambda_i)-\left[\gamma_1(i)^2+2y\sigma^2\gamma_0(i)\right]\cdot T(\lambda_i)+\sigma^4y^2=0~.
  \end{align}
  Since the part $\gamma_0(i)^2-\gamma_1(i)^2>0$ and $\gamma_1(i)^2+2y\sigma^2\gamma_0(i)>0$, equation \eqref{t} has two positive
  roots
  \begin{equation}\label{t1t2}
    \left\{
      \begin{array}{l}
        T_1(i)=\frac{2y\sigma^2\gamma_0(i)+\gamma_1(i)^2-\sqrt{(2y\sigma^2\gamma_0(i)+\gamma_1(i)^2)^2-4 y^2\sigma^4(\gamma_0(i)^2-\gamma_1(i)^2)}}{2\gamma_0(i)^2-2\gamma_1(i)^2}\\[0.5cm]
        T_2(i)=\frac{2y\sigma^2\gamma_0(i)+\gamma_1(i)^2+\sqrt{(2y\sigma^2\gamma_0(i)+\gamma_1(i)^2)^2-4y^2\sigma^4(\gamma_0 (i)^2-\gamma_1(i)^2)}}{2\gamma_0(i)^2-2\gamma_1(i)^2}~.
      \end{array}
    \right.
  \end{equation}
  Recall the definition of the $T$-transform that: \[T(z)=\int \frac{t}{z-t}d\mu(t)~,\]
  taking derivatives with respective to $z$ on both side leads to
  \[
  T^{'}(z)=-\int \frac{t}{(z-t)^2}d\mu(t)<0~.
  \]
  So, between the two solutions $T_1(i)$ and $T_2(i)$, only $T_1(i)$ satisfies this condition.
  And due to the fact that $\lambda_i>b$, the region of $T(\lambda_i)$
  is $(0,T(b^{+}))$, therefore the condition that there exists a unique
  solution in the region of $(0,T(b^{+}))$ is that $T_1(i) \in (0, T(b^{+}))$.

  \medskip
  The proof of the theorem is complete.
\end{proof}

\bl{
\begin{remark}
  The normal assumption in
  Theorem~\ref{mainth} is used to reduce an arbitrary
  loading matrix $A$ satisfying $A'A=I_k$ to its canonical form as
  explained in Step 2 of  the proof.
  If the loading matrix is assumed to have the canonical form, this
  normal assumption is no more necessary.
\end{remark}
}
\subsection{\bl{On the phase transition condition $T_1(i)<T(b^+)$}}
\label{condition}

\bl{
In this section, we detail the
phase transition condition $T_1(i)<T(b^+)$
that defines the detection frontier of the factors.
Unlike similar phenomenon observed for large sample covariance
matrices
as exposed in  \cite{BaikSilv06} and  \cite{BaiYao12},
this transition condition for autocovariance matrix
has a more complex nature involving the three parameters:
the limiting ratio $y$ and
the two signal-to-noise ratios (SNR)
$\gamma_0(i)/\sigma^2$ and
$\gamma_1(i)/\sigma^2$  involving the variance and lag-1 autocovariance
of the $i$-th factor time series  $(x_{it})$.
}

To start with, we observe that  the condition  \re{can be reduced to}
  \begin{align*}
  &2y\frac{\gamma_0(i)}{\sigma^2}+\left(\frac{\gamma_1(i)}{\sigma^2}\right)^2-\left(2\left(\frac{\gamma_0(i)}{\sigma^2}\right)^2-2\left(\frac{\gamma_1(i)}{\sigma^2}\right)^2\right)T(b^{+})\\
  &<\sqrt{\left(2y\frac{\gamma_0(i)}{\sigma^2}+\left(\frac{\gamma_1(i)}{\sigma^2}\right)^2\right)^2-4 y^2\left(\left(\frac{\gamma_0(i)}{\sigma^2}\right)^2-\left(\frac{\gamma_1(i)}{\sigma^2}\right)^2\right)}~,
  \end{align*}
  which has two possibilities as follows:
\begin{align}\label{cond1}
\left\{
  \begin{array}{ll}
    2y\frac{\gamma_0(i)}{\sigma^2}+\left(\frac{\gamma_1(i)}{\sigma^2}\right)^2-\left(2\left(\frac{\gamma_0(i)}{\sigma^2}\right)^2-2\left(\frac{\gamma_1(i)}{\sigma^2}\right)^2\right)T(b^{+})>0  \\
    \left(\frac{\gamma_0(i)}{\sigma^2}T(b^{+})-y\right)^2<\left(\frac{\gamma_1(i)}{\sigma^2}\right)^2(T^2(b^{+})+T(b^{+}))~,
  \end{array}
\right.
\end{align}
or
\begin{align}\label{cond2}
2y\frac{\gamma_0(i)}{\sigma^2}+\left(\frac{\gamma_1(i)}{\sigma^2}\right)^2-\left(2\left(\frac{\gamma_0(i)}{\sigma^2}\right)^2-2\left(\frac{\gamma_1(i)}{\sigma^2}\right)^2\right)T(b^{+}) \leq 0 ~.
\end{align}
\begin{figure}[h!]
\centering
\includegraphics[width=8cm]{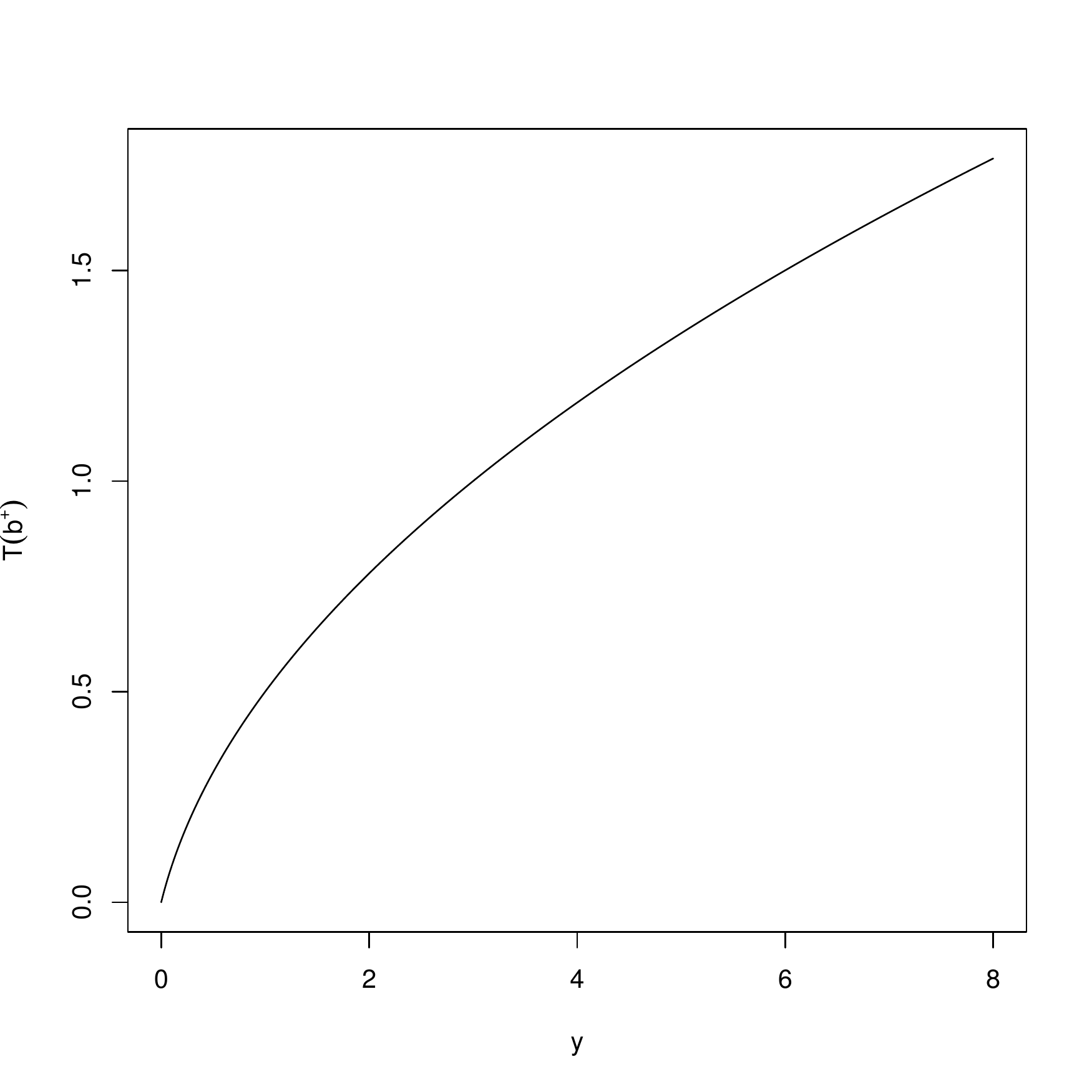}
\caption{\small Values of $T(b^{+})$ as a function of \bl{the limiting
  ratio} $y$. }\label{tb}
\end{figure}
First, we see the value of $T(b^{+})$ can be derived using
\eqref{ttr}, with the value of $b$ given in \eqref{b} as a function of
$y$, which is presented in Figure \ref{tb}. When $y$ increases from
zero to infinity, the value of $T(b^{+})$ also increases from zero to
infinity. \bl{Observe also that the slope at the origin is infinity:
$\lim_{y\to 0_+} T(b^+)/y=\infty$.}

 Once $p$ and $T$ are given ($y$ is fixed), the value of $T(b^{+})$ is fixed, then the conditions \eqref{cond1} and \eqref{cond2} can be considered as the restriction of the two parameters  $\gamma_0(i)/\sigma^2$ and $\gamma_1(i)/\sigma^2$.
And this defines a complex region in the
  $\gamma_0/\sigma^2-\gamma_1/\sigma^2$ plan which is depicted
  in  Figure \ref{region}.
The dashed curve  in  Figure \ref{region} stands for the equality
 \[
2y\frac{\gamma_0(i)}{\sigma^2}+\left(\frac{\gamma_1(i)}{\sigma^2}\right)^2-\left(2\left(\frac{\gamma_0(i)}{\sigma^2}\right)^2-2\left(\frac{\gamma_1(i)}{\sigma^2}\right)^2\right)T(b^{+}) = 0~,
\]
and the area inside this curve (the darker region) is the condition
\eqref{cond2}, while outside (the lighter region) stands for condition
\eqref{cond1}. The dotted  lines  stand for
\[
\left(\frac{\gamma_0(i)}{\sigma^2}T(b^{+})-y\right)^2=\left(\frac{\gamma_1(i)}{\sigma^2}\right)^2(T^2(b^{+})+T(b^{+}))~,
\]
and the upper and lower boundaries in solid lines   are due  to the fact that we
have always $|\gamma_1(i)| \leq \gamma_0(i)$  (by  Cauchy-Schwarz inequality).
\re{These solid and
  dotted   lines} intersect with each other at points 
$A=(\tau_0,\tau_0)$  and $B=(\tau_0,-\tau_0)$   where 
\begin{equation}
  \label{tau0}
  \tau_0 = \frac{y}{T(b^{+})+\sqrt{T^2(b^{+})+T(b^{+})}}.
\end{equation}
\bl{In other  words}, we have except for the quadrilateral region
$(*)$, \re{our conditions \eqref{cond1} and \eqref{cond1} will hold
  true, which means that }  
the corresponding factors are significant (and thus asymptotically
detectable). The   quadrilateral region
$(*)$ thus defines the phase transition boundary for the significance
of the factors. 

\begin{figure}[htb]
\centering
\mbox{  \quad $  \quad$}\\[-5cm]
\includegraphics[width=8.5cm]{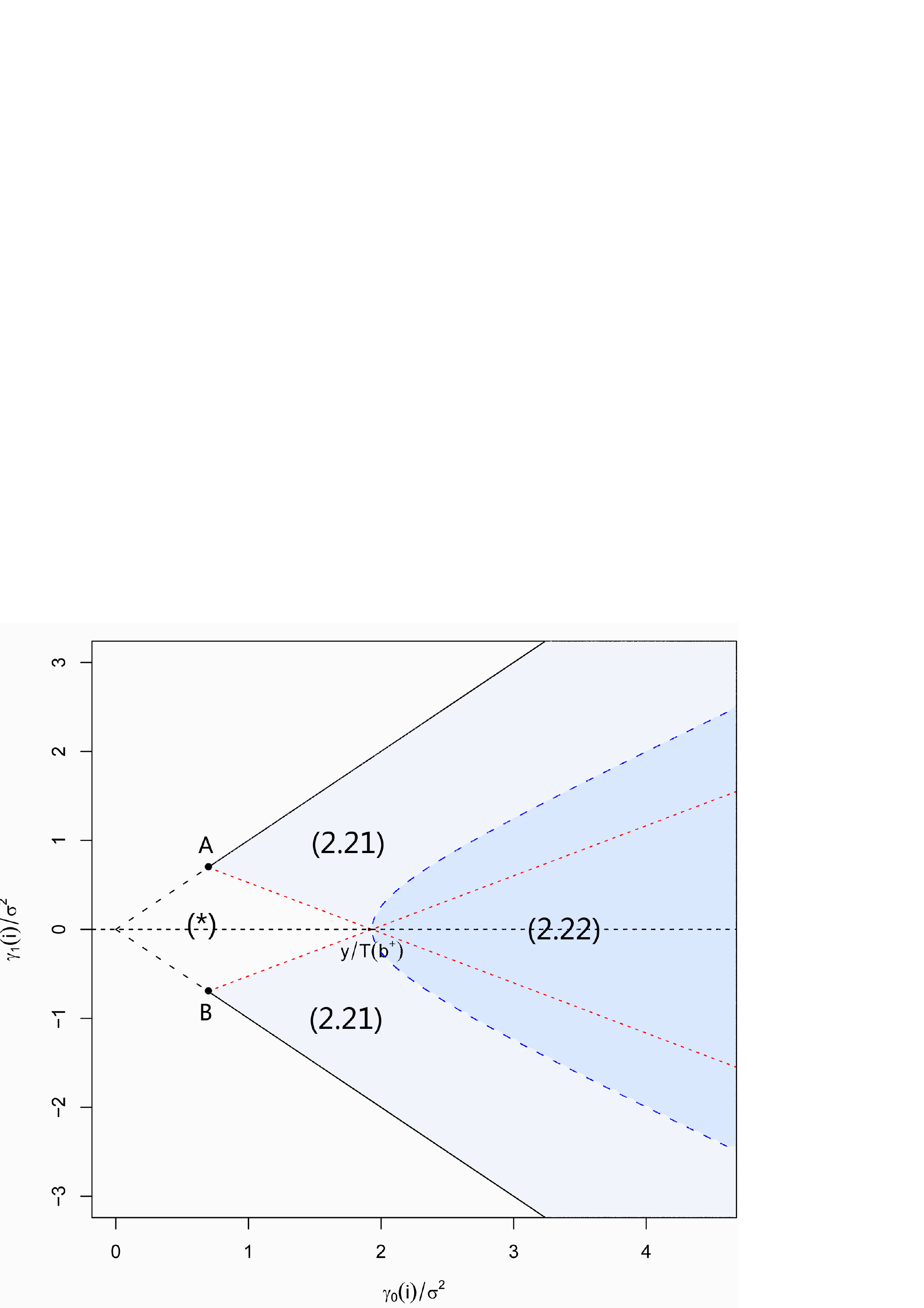}
\caption{\small Region of $\gamma_0(i)/\sigma^2$ and
  $\gamma_1(i)/\sigma^2$ that will lead to \bl{significant factors}.\label{region}}
\end{figure}

We summarize the above findings as follows.
\bl{
  \begin{corollary}\label{coro1}
    Under the same conditions as in Theorem~\ref{mainth},
    the $i$-th time series $(x_{it})$ will generate
    a significant factor in the sense that
    $T_1(i)<T(b^+)$
    if
    and only if
    either
    \begin{equation}
      \label{spike1}
      \displaystyle \frac{|\gamma_1(i)|}{\sigma^2}  \bl{>}
      \tau_0,
    \end{equation}
    or
    \begin{equation}
      \label{spike2}
      \frac{|\gamma_1(i)|}{\sigma^2} \bl{\le} \tau_0
      \quad \text{and}\quad
      \frac{\gamma_0(i)}{\sigma^2}>\frac{y-\sqrt{T^2(b^{+})+T(b^{+})}|\gamma_1(i)|/\sigma^2}{T(b^{+})},
    \end{equation}
    where the constant $\tau_0$ is given in \eqref{tau0}.
  \end{corollary}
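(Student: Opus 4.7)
The plan is to combine the equivalence---derived in the paragraphs preceding the corollary from Theorem~\ref{mainth}---between the condition $T_1(i)<T(b^+)$ and the disjunction \eqref{cond1}$\,\cup\,$\eqref{cond2}, with a direct geometric analysis of the resulting region. Writing $u:=\gamma_0(i)/\sigma^2$, $v:=\gamma_1(i)/\sigma^2$, and $t:=T(b^+)$, I denote by $R_1:=\{(ut-y)^2<v^2(t^2+t)\}$ and $R_2:=\{2yu+v^2\le 2(u^2-v^2)t\}$ the two subsets of the $(u,v)$-plane defined by the second inequality of \eqref{cond1} and by \eqref{cond2}, respectively. Since the first inequality of \eqref{cond1} is exactly $R_2^c$, distributivity gives \eqref{cond1}$\,\cup\,$\eqref{cond2}$=(R_1\cap R_2^c)\cup R_2=R_1\cup R_2$, so the forbidden (non-significant) set in the plane is the intersection $R_1^c\cap R_2^c$.

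The second step is to make this forbidden set explicit inside the physical region $\{|v|\le u\}$ enforced by Cauchy--Schwarz. Since the equation $(ut-y)^2=v^2(t^2+t)$ factors as a pair of straight lines through $(y/t,0)$, the complement $R_1^c$ splits into a left branch $u\le(y-|v|\sqrt{t^2+t})/t$ and a right branch $u\ge(y+|v|\sqrt{t^2+t})/t$. Plugging the right-branch boundary into $2(u^2-v^2)t-2yu-v^2$ and simplifying yields the value $2y|v|\sqrt{t^2+t}/t+v^2\ge 0$; since this quadratic in $u$ is increasing for $u>y/(2t)$ and the whole right branch sits in that range, the right branch is contained in $R_2$ and contributes nothing to the forbidden set. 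Hence the forbidden set reduces to
\[
\bigl\{(u,v):\,|v|\le u\le(y-|v|\sqrt{t^2+t})/t\bigr\},
\]
which is non-empty exactly when $|v|(t+\sqrt{t^2+t})\le y$, i.e., when $|v|\le\tau_0$ with $\tau_0$ as in \eqref{tau0}. This forbidden set is a kite with vertices $(0,0)$, $A=(\tau_0,\tau_0)$, $(y/t,0)$, and $B=(\tau_0,-\tau_0)$, exactly matching the quadrilateral $(*)$ in Figure~\ref{region}.

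Taking complements yields the two cases of the corollary. When $|v|>\tau_0$, the entire slice $\{u\ge|v|\}$ lies above $A$ or below $B$ and thus avoids the kite, giving \eqref{spike1}. When $|v|\le\tau_0$, avoidance of the kite is equivalent to $u>(y-|v|\sqrt{t^2+t})/t$, placing $(u,v)$ strictly to the right of the broken line joining $A$, $(y/t,0)$, and $B$; this is exactly \eqref{spike2}.

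The step I expect to be the main obstacle is the containment of the right branch of $R_1^c$ in $R_2$: without it, the forbidden region would extend unboundedly in $u$ and the clean two-case description of the corollary would break down. The required algebraic identity is elementary, but must be tracked carefully to confirm strictness and, in passing, to verify that the two boundary curves meet inside the physical region only at $(y/t,0)$ on the $u$-axis and at the two corners $A$ and $B$ on the diagonals $v=\pm u$.
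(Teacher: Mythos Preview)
Your approach matches the paper's: both start from the reduction of $T_1(i)<T(b^+)$ to the disjunction \eqref{cond1}$\,\cup\,$\eqref{cond2} and then analyze the resulting region geometrically in the $(\gamma_0/\sigma^2,\gamma_1/\sigma^2)$-plane, identifying the complementary quadrilateral $(*)$ of Figure~\ref{region}. Your argument is in fact more explicit than the paper's, which essentially reads the answer off Figure~\ref{region}; in particular your algebraic verification that the right branch of $R_1^c$ is contained in $R_2$ is a step the paper leaves to the picture.

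There is one small omission. After establishing that the right branch contributes nothing, you pass directly from ``forbidden set $=$ left branch $\cap R_2^c$'' to ``forbidden set $=\{|v|\le u\le (y-|v|\sqrt{t^2+t})/t\}$''. This only gives the inclusion $R_1^c\cap R_2^c\subseteq\text{kite}$ within the physical region; for the reverse inclusion---and hence for the ``only if'' direction of the corollary---you also need $\text{kite}\subseteq R_2^c$, i.e., that the dashed curve $f(u)=2tu^2-2yu-(2t+1)v^2=0$ does not enter the kite's interior. This is immediate (for $v\ne 0$ the positive root of $f$ exceeds $y/t$, while every point of the kite has $u\le (y-|v|\sqrt{t^2+t})/t<y/t$; for $v=0$ one has $f(u)=2u(tu-y)<0$ on $(0,y/t)$), but it should be stated, since without it a point of the kite could in principle satisfy \eqref{cond2} and thus be significant, contradicting the corollary.
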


  We now introduce  some important comments on the meaning of these
  conditions.
  \begin{enumerate}
  \item
    The essential message from these conditions is that
    {\em the $i$-th factor time  series
      is a significant factor once its strength
      $\gamma_0(i)$, or more exactly, its SNR $\gamma_0(i)/\sigma^2$  exceeds a certain level $\tau$.}
    A sufficient value for this level $\tau$ is $ \tau_1=y/ T(b^+)$ as
    shown in Figure~\ref{region}. Meanwhile,
    the SNR  should  at least equal to $\tau_0$ given in  \eqref{tau0},
    see Point A on the figure which has coordinates $(\tau_0,\tau_0)$.
    When  $\tau_0< \gamma_0(i)/\sigma^2 \le \tau_1$, the exact condition also
    depends on the lag-1 SNR  $|\gamma_1(i)|/\sigma^2$ as given in
    Eqs.~\eqref{spike1}-\eqref{spike2}.

    This is thus much in line with what is known for the phase
    transition phenomenon for large sample covariance matrices
    as exposed in \cite{BaikSilv06} and  \cite{BaiYao12}.
  \item
    As said in Introduction, in most of existing literature on
    high-dimensional factor models, the factor strengths are assumed
    to grow to infinity with the dimension $p$. Clearly,
    such {\em growing factors} are highly significant
    in our scheme, i.e. $k_0=k$,  since they will exceed the upper limit $\tau_1$ very
    quickly as the dimension $p$ grows.
  \item
    Assume that $y \to 0_+$, i.e.
    the sample size $T$ is much larger than the dimension $p$.
    Then it can be checked
    that both the quantities $\tau_0$ and $\tau_1$ will vanish.
    Therefore, when $p/T$ is small
    enough,
    any factor time series will generate
    a significant sample factor eigenvalue.
    In other words,  we have recovered
    the classical
    low-dimensional situation where $p$ is hold fixed and
    $T\to\infty$  for which
    all the $k$ factor time series can be consistently detected and
    identified.
\end{enumerate}
}
\section{Estimation of the number of factors}\label{estimation}

Let $l_1, \cdots, l_p$ be the eigenvalues of
$\widehat M  =\widehat{\Sigma}_y{ }\hat{\Sigma}_y{ }'$, sorted in decreasing order.
Assume that among the $k$ factors, the first
$k_0$ are significant which satisfy the phase transition condition
$T_1(i) < T(b^+)$, see Eq.\eqref{k0}.
Following   Theorem \ref{mainth}, the $k$  largest sample
eigenvalue $(l_i/\sigma^4)_{1\le i\le k}$ converges respectively to a limit $(\lambda_i)$,
which is larger than the right edge  $b$ of the
limiting spectral distribution
\bl{for  $1\le i\le k_0$, and equal to $b$ for $k_0 <  i\le k$.
}

It will be proved below that the
largest noise sample eigenvalues of a
given finite
number all converge to $b$, i.e.
for any fixed range $m>0$,
\begin{equation}
  \label{following}
  l_{k+1}/\sigma^4 \to b, ~\ldots~,
  l_{k+m}/\sigma^4 \to b, \quad \text{almost surely}.
\end{equation}
Consider the sequence of ratios
\begin{equation}
  \label{ratios}
  \re{\theta_j:=\frac{l_{j+1}/\sigma^4}{l_j/\sigma^4}=\frac{l_{j+1}}{l_j},\quad j\ge 1.}
\end{equation}
By definition $\theta_j\le 1$.
Therefore,
we have almost surely,
\bl{
  \begin{align}
    &\theta_j \rightarrow \frac{\lambda_{j+1}}{\lambda_j}<1~,~~~~j<k_0~,\nonumber\\
    &\theta_{k_0} \rightarrow \frac{b}{\lambda_{k_0}/\sigma^4}<1~,\label{thetaj}\\
    &
    \theta_{j}
    \rightarrow \frac{b}{b}=1~,~~~~k_0< j\le k~,\nonumber\\
    &\theta_{k+1},~\ldots,~ \theta_{k+m}  \rightarrow \frac bb=1~,
    \quad \text{for all fixed ~} m.\nonumber
  \end{align}
}

\begin{remark}
\noindent
Note that the value of $\theta_j$ is independent of $\sigma^2$. In
other words, we do not need \re{the true value of }$\sigma^2$ for
estimating the number of factors, indeed. 
\end{remark}
Let  $0< d_T<1$  be a positive constant
and
we introduce the following
estimator for the number of factors $k$:
\begin{align}\label{hatk}
  \hat{k}=\{\text{first}~ j\ge 1~ ~\text{such that }~  \theta_{j} > 1-d_T\}-1~.
\end{align}

\bl{
\begin{theorem}\label{khat}
  Consider  the  factor model~\eqref{model}
  and assume that the same conditions as in Theorem~\ref{mainth}
  are satisfied.
  Let $k_0$ be the number of significant factors defined in
  Eq.~\eqref{k0}
  and
  a
  threshold constant $d_T$ be  chosen such that
  \begin{equation}
    \label{dT}
    \max_{1\leq j \leq k_0}\lambda_{j+1}/\lambda_j<1-d_T<1.
  \end{equation}
  Then
  $\hat{k}\xrightarrow{a.s} k_0$.
\end{theorem}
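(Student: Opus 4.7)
The plan is to compute the almost-sure limits of the ratios $\theta_j = l_{j+1}/l_j$ listed in \eqref{thetaj} and then exploit the choice of $d_T$ to pinpoint the first index at which $\theta_j$ crosses the threshold $1-d_T$. The argument naturally splits into three steps.

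First, for $1\le j\le k-1$ I would invoke Theorem~\ref{mainth} to obtain $l_j/\sigma^4 \to \lambda_j$ almost surely, where $\lambda_j>b$ if $j\le k_0$ and $\lambda_j=b$ if $k_0<j\le k$. By continuous mapping, $\theta_j\to \lambda_{j+1}/\lambda_j$, and the hypothesis \eqref{dT} precisely ensures that this limit is strictly below $1-d_T$ for every $j\in\{1,\ldots,k_0\}$; meanwhile, for $k_0<j<k$ the limit equals $b/b=1$.

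Second, I would establish \eqref{following}, namely that $l_{k+i}/\sigma^4 \to b$ almost surely for each fixed $i$. For the lower bound I would use that $\widehat{\Sigma}_y-\widehat{\Sigma}_\varepsilon$ has rank at most $2k$ (its nonzero entries are confined to the first $k$ rows and columns after the canonical reduction of Step~2 of the proof of Theorem~\ref{mainth}), combined with the Weyl-type singular-value interlacing $\sigma_j(\widehat{\Sigma}_y)\ge \sigma_{j+2k}(\widehat{\Sigma}_\varepsilon)$. Squaring gives $l_{k+i}(\widehat{M})\ge \lambda_{3k+i}(\widehat{M}_\varepsilon)$, whose right-hand side tends to $b\sigma^4$ by Corollary~\ref{coro}. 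For the upper bound, when $k_0<k$ the monotonicity $l_{k+i}\le l_k$ together with $l_k/\sigma^4\to b$ from Theorem~\ref{mainth} is enough. The delicate case is $k_0=k$: here one must rule out additional ``phantom'' outliers beyond the $k$ factor eigenvalues. I would handle this by extending the determinantal analysis of Steps~3--4 in the proof of Theorem~\ref{mainth} to show that any subsequential limit $\mu>b$ of $l_{k+1}/\sigma^4$ would have to satisfy the characteristic equation \eqref{limit}; but every admissible root of that equation is already accounted for by $\lambda_1,\ldots,\lambda_{k_0}$, forbidding any extra outlier and yielding $\limsup l_{k+1}/\sigma^4\le b$.

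Third, combining the previous two steps, almost surely $\theta_j<1-d_T$ for $j=1,\ldots,k_0$, and $\theta_{k_0+1}\to 1>1-d_T$, hence $\theta_{k_0+1}>1-d_T$ eventually. By the construction \eqref{hatk}, the first index $j$ with $\theta_j>1-d_T$ is thus $k_0+1$ for all sufficiently large $p$ a.s., so $\hat k=k_0$ almost surely, as claimed. The main obstacle is the upper bound on $l_{k+1}/\sigma^4$ in the borderline case $k_0=k$ of Step~2, since direct rank-$2k$ perturbation interlacing is too weak to control the $(k+1)$-th eigenvalue of $\widehat{M}$ from above; this requires a genuine spiked-model uniqueness argument of the kind underlying Theorem~\ref{mainth}.
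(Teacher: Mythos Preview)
Your overall architecture matches the paper's: first harvest the almost-sure limits of $\theta_j$ from Theorem~\ref{mainth}, then establish the claim~\eqref{following} for the noise eigenvalues, and finally read off $\hat k=k_0$ from the definition~\eqref{hatk} and the threshold condition~\eqref{dT}.

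The difference lies entirely in how you prove~\eqref{following}. The paper embeds $\widehat\Sigma_y$ into the symmetric $2p\times 2p$ matrix $\Gamma$ used in the proof of Theorem~\ref{mainth}, deletes the first $2k$ rows and columns to isolate the block $\begin{pmatrix}0 & D\\ D' & 0\end{pmatrix}$ with $D=E_2'E_1$, and invokes Cauchy interlacing to obtain the two-sided sandwich $\beta_{k+1}\le l_{k+1}\le \beta_1$, where $\beta_j$ are the eigenvalues of $DD'$; since $DD'$ is a $(p-k)$-dimensional copy of $\widehat M_\varepsilon$, Corollary~\ref{coro} forces both ends to $b$. You instead take the rank-$2k$ perturbation route through $\widehat\Sigma_\varepsilon$ for the lower bound and then split cases for the upper bound. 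Your case $k_0<k$ is immediate from Theorem~\ref{mainth} via monotonicity, exactly as you say.

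It is worth pointing out that your caution about the upper bound when $k_0=k$ is well placed. Deleting $2k$ rows and columns from a symmetric matrix shifts indices by $2k$, so standard Cauchy interlacing on $\Gamma$ actually yields $l_{2k+1}\le\beta_1$ (equivalently, singular-value interlacing after removing $k$ rows and $k$ columns of $\widehat\Sigma_y$ gives $\sigma_{i+2k}(\widehat\Sigma_y)\le\sigma_i(D)$), not the sharper $l_{k+1}\le\beta_1$ the paper states. A small $2\times 2$ example with $k=1$, say $\widehat\Sigma_y=\begin{pmatrix}0&c\\c&0\end{pmatrix}$ and $D=0$, already violates $l_{k+1}\le\beta_1$. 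So your proposed remedy---arguing via the determinantal equation of Theorem~\ref{mainth} that any subsequential outlier limit of $l_{k+1}/\sigma^4$ would have to coincide with one of the $\lambda_i$, which are already exhausted by $l_1,\dots,l_k$---is not merely an alternative but in fact fills a genuine gap in the paper's Cauchy-interlacing step. This is precisely the ``no extra outliers'' half of the spiked-model dichotomy, and your sketch of it is the right one; carrying it out amounts to noting that the $2k\times 2k$ determinant in~\eqref{m1} converges uniformly on compacts of $(b,\infty)$ to a limit with exactly $k_0$ zeros there.
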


This theorem thus formally establishes the fact that
the ratio estimator $\hat k$ is able to detect
all the significant factors that satisfy the
phase transition condition
given in Theorem~\ref{mainth} and detailed in
Eqs.\eqref{spike1}-\eqref{spike2}.
}

\begin{proof}(of Theorem~\ref{khat}) \quad
  As $\theta_j \xrightarrow{a.s.} \lambda_{j+1}/\lambda_j$ for $1\le
  j\le \bl{k_0}$ and by assumption \eqref{dT},
  almost surely, it will happen eventually that
  $\hat k > \bl{k_0}$.
  Next, under the claim \eqref{following} and following the limits
  given in \eqref{thetaj},
  \begin{equation}
    \label{claim}
    \theta_j \xrightarrow{a.s.} 1, \quad \text{for } j>\bl{k_0}.
  \end{equation}
  Consequently, almost surely we will have eventually $\hat k \le
  \bl{k_0}$ which, combined with the conclusion above,  proves the
  almost sure convergence of $\hat k$ to $\bl{k_0}$.

  It remains to prove the claim \eqref{following}. Since $\theta_j$ is independent of the choice of $\sigma^2$, we can assume w.l.o.g that $\sigma^2=1$ as before.  Recall that in the proof of
  Theorem~\ref{mainth}, it has been proved in
  Eqs.\eqref{yy}-\eqref{matrix1} that  if $l$ is a eigenvalue of
  $\widehat M$, then $\sqrt l$ is a  positive eigenvalue of
  the matrix
  \[
  \Gamma
  =
  \left(
    \begin{array}{cccc}
      0 &  X_1'X_0 & 0  & X_1'E_1 \\
      X_0'X_1 & 0 & X_0' E_2  & 0 \\
      0  & E_2'X_0 & 0 & E_2'E_1 \\
      E_1'X_1 & 0 & E_1'E_2   & 0
    \end{array}
  \right)~,
  \]
  which is obtained after permutation of
  the second and third row block and column block in \eqref{matrix1}
  without modifying the eigenvalues.
  Now $\Gamma$ is a symmetric block matrix
  and the positive eigenvalues of the lower diagonal block
  \[
  \left(
    \begin{array}{cc}
      0 & E_2'E_1 \\
      E_1'E_2 &  0
    \end{array}
  \right)~,
  \]
  are associated to the eigenvalues of the matrix
  $DD'=E_2'E_1E_1'E_2$  which is of dimension $p-k$
  (for the definition of these matrices, see that
  proof).
  Let $\beta_1\ge \cdots \ge \beta_{p-k}$ be the eigenvalues of
  $DD'$.
  By Cauchy interlacing theorem,
  we have
  \[      \beta_{k+1}\le  l_{k+1} \le \beta_1.
  \]
  Observing that $D$ is distributed as $\hat\Sigma_\veps{ }$ except
  that the dimension is changed from $p$ to $p-k$.  Therefore,
  the global limit of the eigenvalues of $DD'$ are the same as for the
  matrix \re{$\hat\Sigma_\veps{ }\hat\Sigma_\veps{ }'$}; in particular,
  according to Corollary~\ref{coro},
  both $\beta_{k+1}$ and $\beta_1$ converge to $b$ almost surely.
  This proves the fact that $l_{k+1}\xrightarrow{a.s} b$.
  Using similar arguments, we can establish the same fact for
  $l_{k+j}\xrightarrow{a.s} b$ for any fixed index $j\ge 1$.
  The claim \eqref{following} is thus established.
  \end{proof}


\subsection{Calibration of the tuning  parameter  $d_T$}\label{choosedn}

For
the estimator $\hat k$ in \eqref{hatk} to be practically useful, we need
to set up an appropriate value of the tuning parameter $d_T$.
\bl{Although in theory, any vanishing sequence $d_T\to 0$ will guarantee the
consistence of $\hat k$, it is preferable to have
an indicated and  practically useful sequence $(d_T)$ for real-life
data analysis.}
Here we propose an {\it  a priori}  calibration of $d_T$ based on some
knowledge from random matrix theory on the largest eigenvalues of
sample covariance matrices and of their perturbed versions.
The most important property we will use is  that
according to
such recent results on  finite rank perturbations of symmetric random
matrices, see e.g. \citet{BGM11}
it is very likely that the asymptotic distribution of
$T^{\frac{2}{3}}\lb\dfrac{l_{k+2}}{l_{k+1}}-1\rb$ 
is the same as that of
$T^{\frac{2}{3}}\lb\dfrac{\nu_{2}}{\nu_{1}}-1\rb$,
where $\nu_1$, $\nu_2$ are the two largest eigenvalues of
the base noise  matrix $\hat M_{\veps}$.
Using this similarity, we calibrate $d_T$ by simulation:  for any given
pair $(p,T)$, the distribution of
$T^{\frac{2}{3}}\lb\dfrac{\nu_{2}}{\nu_{1}}-1\rb$
is sampled using a large number \bl{(in fact 2000)}
of independent replications of standard Gaussian vectors
$\veps_t \sim N({\bf 0},{\bf  I}_p)$ and its lower 0.5\% quantile
$q_{p,T,0.5\%}$ is obtained (notice that the quantile is negative).  Using the approximation
\[ P \left\{ T^{\frac{2}{3}}\lb\dfrac{l_{k+2}}{l_{k+1}}-1\rb \le  q_{p,T,0.5\%} \right\}
\simeq
P \left\{ T^{\frac{2}{3}}\lb\dfrac{\nu_{2}}{\nu_{1}}-1\rb \le  q_{p,T,0.5\%} \right\}=0.5\%,
\]
we calibrate $d_T$ at the value  $d_T =
|q_{p,T,0.5\%}|/T^{2/3}$. Notice that $d_T$ vanishes at rate
$T^{-2/3}$.
In all the simulation experiments in Section~\ref{simulation}  or for the data
analysis reported in Section~\ref{application}, this tuned value of
$d_T$ is used for the  given pairs   $(T,p)$.

\section{Monte-Carlo experiments}
\label{simulation}

In this section, we report some simulation results to
\re{show} the finite-sample performance of our estimator. \re{For the reason of} robustness, we will consider a
reinforced estimator
$\hat{k}^*$ defined as
\begin{align}\label{hattk}
  \hat{k}^{*}=\{
  \text{first}~ j\ge 1~ ~\text{such that }~  \theta_{j} > 1-d_T
  ~\text{and }~  \theta_{j+1} > 1-d_T\}-1~.
\end{align}
Clearly, $\hat k^*$ is asymptotically equivalent to the initial
estimator  $\hat k$ which uses only \re{one} single
test value $j$.
As for the factor model,
we adopt the same settings as in \cite{LY012} where
\[y_t=A x_t+\veps_t,~\veps_t\sim N_p({\bf 0},{\bf I}_p),\]
\[x_t=\Theta x_{t-1}+e_t,~ e_t\sim N_k({\bf 0},\Gamma),\]
\noindent
where $A$ is a $p\times k$ matrix, w.l.o.g, we set the variance $\sigma^2$ of the white noise $\varepsilon_t$ to be 1.

In \cite{LY012}, the factor loading matrix $A$ are independently
generated from uniform distribution on the interval $[-1,1]$ first and
then divided by $p^{\delta/2}$ where $\delta\in [0,1]$. The induced
$k$ factor strengths are thus of order $O(p^{1-\delta})$. Their
estimator of number of factors is recalled in \eqref{eq:LY}. Cases
\bl{where
  three factors are either all very strong with $\delta=0$ or all
  moderately strong   with $\delta=0.5$  are  discussed in details in
  that paper.}
The results show that $\tilde{k}$ performs better when
factors are stronger. An experimental setting with a combination of
two strong factors and one moderate factor indicates that a two-step
estimation procedure needs to be employed in order to identify all
three factors. In each step only factors with the highest level of
strength can be detected.

While in our case, the coefficient matrix $A$ satisfies $A'A=I_k$. Considering the eigenvalues of $\widehat M$ are invariant under orthogonal transformation (See Step 2 in the proof of Theorem \ref{mainth}), we fix
 \[
  A=\begin{pmatrix}
    I_k \\
    \bm{0}_{p-k} \\
  \end{pmatrix}~.
  \]
  \noindent
 Then we manipulate the factor strength by adjusting the value of $\Theta$ and $\Gamma$. To ensure the stationarity of $\{y_t\}$ process and the Independence among the components of the factor process $\{x_t\}$, $\Theta$ and $\Gamma$ are both diagonal matrices and the diagonal elements of $\Theta$ lie within $(-1,1)$. To keep pace with the settings in \cite{LY012}, we multiply $p^{\frac{1-\delta}{2}}$ with the diagonal entries of $\Gamma$ to adjust the corresponding factor strength. It can be seen that when $\delta=0$, the factor is strongest while with $\delta=1$, the factor is weakest.

The entire simulation study is mainly composed of four parts formulated in four different scenarios as follows:
\begin{itemize}
\item[(I)] Two very strong factors with $\delta_1=0.5$ and $\delta_2=0.8$
  \bl{and}
 \[~\Theta=\lb
  \begin{array}{cc}
    0.6&0\\
    0&0.5
  \end{array}
  \rb,~ \Gamma=\lb
  \begin{array}{cc}
    4\times p^{\frac{1-\delta_1}{2}}&0\\
    0&4\times p^{\frac{1-\delta_2}{2}}
  \end{array}
  \rb\bl{.}
  \]

\item[(II)]  Four weak factors with same strength level $\delta=1$;  three of them
  are significant with their theoretical limits
  $\lambda_1,~\lambda_2,~\lambda_3$ all keeping a  moderate distance
  from $b$  while  the fourth factor is insignificant with 
  its theoretical limit $\lambda_4$ equal to right edge  $b$ of the
  noise eigenvalues. Precisely, 
  \[~\Theta=\lb
  \begin{array}{cccc}
    0.6&0&0&0\\
    0&-0.5&0&0\\
    0&0&0.3&0\\
    0&0&0&0.2
  \end{array}
  \rb,~ \Gamma=\lb
  \begin{array}{cccc}
    4&0&0&0\\
    0&4&0&0\\
    0&0&4&0\\
    0&0&0&1
  \end{array}
  \rb\bl{.}\]
\item[(III)] Three weak factors with $\delta=1$ and $\lambda_3$ \re{stays} very close to $b$
  \bl{and}
  \[~\Theta=\lb
  \begin{array}{ccc}
    0.6&0&0\\
    0&-0.5&0\\
    0&0&0.3
  \end{array}
  \rb,~ \Gamma=\lb
  \begin{array}{ccc}
    2&0&0\\
    0&2&0\\
    0&0&2
  \end{array}
  \rb.\]
\item[(IV)] A \bl{mixed}  case with two strong factors with
  $\delta_1=0.5,~\delta_2=0.8$, and five weak factors with $\delta=1$,
  and
  \[\Theta=diag(0.6,0.5,0.6,-0.5,0.3,0.6,-0.5),\]
  \[\Gamma=diag(4\times p^{\frac{1-\delta_1}{2}},4\times p^{\frac{1-\delta_2}{2}},4,4,4,2,2).\]
\end{itemize}

Recall that for the estimator $\hat{k}^*$, the critical value $d_T$ is
calibrated as explained in Section \ref{choosedn} using the simulated empirical
0.5\% lower quantile. We set $p=100,~300,~500,~1000,~1500$,
$T=0.5p,~2p$, i.e $y=2,~0.5$.
\bl{It will be seen below that in general, the cases with  $T=0.5p$
  will be harder to deal with than the cases with  $T=2p$.
}
We repeat 1000 times to calculate the
\bl{empirical  frequencies  of the different decisions
  $(\hat{k}^*=k_0)$, $(\hat{k}^*=k_0\pm1)$ and $(|\hat{k}^*-k_0|>1)$.}
The results are as follows.

\begin{itemize}
\item[(I)] In Scenario I, we have two very strong factors with
  $\delta_1=0.5$ and $\delta_2=0.8$
  and their  strengths grow to infinity with $p$.
  Thus $k_0=k=2$ and the  two factors  must be  easily detectable.
  As seen from Table~\ref{scena1}, our estimator $\hat{k}^*$ converges very fast
  to the true number of factors. On the other hand, the one-step
  estimator \re{$\tilde{k}$} of \citet{LY012}
  tends to detect only one factor in each step due to the fact that the two factors are \re{of} different strength.

{\small
\begin{table}[!h]
\caption{Scenario I with two strong factors ($k_0=k=2$)\label{scena1}}
\begin{tabular}{+c^c^c^c^c^c|^c^c^c^c^c^c}
\hline
$p$ & 100 & 300 & 500 & 1000 & 1500 & $p$ & 100 & 300 & 500 & 1000 & 1500\tabularnewline
$T=2p$ & 200 & 600 & 1000 & 2000 & 3000 & $T=2p$ & 200 & 600 & 1000 & 2000 & 3000\tabularnewline
\hline
$\tilde{k}=1$ & 0.343 & 0.294 & 0.257 & 0.287 & 0.317 &
$\hat{k}^{*}=1$ & 0 & 0 & 0 & 0 & 0\tabularnewline
\rowstyle{\bfseries}%
$\tilde{k}=k_0$ & 0.657 & 0.706 & 0.743 & 0.713 & 0.683 & $\hat{k}^{*}=k_0$ & 0.974 & 0.984 & 0.993 & 0.996 & 0.998\tabularnewline
$\tilde{k}\geq3$ & 0 & 0 & 0 & 0 & 0 & $\hat{k}^{*}\geq3$ & 0.026 & 0.016 & 0.007 & 0.004 & 0.002\tabularnewline
\hline
$p$ & 100 & 300 & 500 & 1000 & 1500 & $p$ & 100 & 300 & 500 & 1000 & 1500\tabularnewline
$T=0.5p$ & 50 & 150 & 250 & 500 & 750 & $T=0.5p$ & 50 & 150 & 250 & 500 & 750\tabularnewline
\hline
$\tilde{k}=1$ & 0.786 & 0.801 & 0.876 & 0.96 & 0.992 & $\hat{k}^{*}=1$ & 0.086 & 0 & 0 & 0 & 0\tabularnewline
\rowstyle{\bfseries}%
$\tilde{k}=k_0$ & 0.21 & 0.199 & 0.124 & 0.04 & 0.008 & $\hat{k}^{*}=k_0$ & 0.771 & 0.882 & 0.896 & 0.881 & 0.881\tabularnewline
$\tilde{k}\geq3$ & 0.004 & 0 & 0 & 0 & 0 & $\hat{k}^{*}\geq3$ & 0.143 & 0.118 & 0.104 & 0.119 & 0.119\tabularnewline
\hline
\end{tabular}
\end{table}
}
\item [(II)] In Scenario II, we have four weak factors of same
  strength level $\delta=1$.
  \bl{
    The  theoretical limits related to Theorem \ref{mainth} are
    displayed in
    Table~\ref{scena2-lim}. 
    Figure~\ref{table2y05} for $T=2p$ and Figure~\ref{table2y2} for
    $T=0.5p$  depict the
    position of these four factors (numbered from 1 to 4)
    in the phase transition region 
    defined in Corollary~\ref{coro1} and we see 
    three among the four lying  inside the detectable area in both situations.
    It can be seen from the table that  for  both
    combinations of $T=2p$ and $T=0.5p$,  the first three limits
    $\lambda_i$ are far from upper bound $b$ and the fourth limit $\lambda_4$ equals to $b$.
    We thus have three significant factors
    ($k_0=3$)
    which are  detectable while the fourth one is too weak for the
    detection.
    Results in Table~\ref{scena2} show that both the estimators
    $\tilde k$ (one-step)  and $\hat k^*$ are consistent with however a much
    higher convergence speed for $\hat k^*$.
  }

{\small
\begin{table}[!h]
\caption{Scenario II -  Theoretical limits ($k_0=3,~k=4$)\label{scena2-lim}}
\centering
\begin{tabular}{+c|^c^c^c^c|^c^c^c^c|^c^c^c^c}
\hline
 &&  &  &  & \multicolumn{4}{c|}{$T=2p$} & \multicolumn{4}{c}{$T=0.5p$}\tabularnewline
\cline{6-13}
No.&$\Theta$ & $\Gamma$ & $\gamma_{0}\left(i\right)$ & $\gamma_{1}\left(i\right)$ & $T_{1}(i)$ & $T(b^+)$ & $\lambda_{i}$ & $b$ & $T_{1}(i)$ & $T(b^+)$  & $\lambda_{i}$ & $b$\tabularnewline
\hline
(1)&0.6 & 4 & 6.25 & 3.75 & 0.0125 &0.3076 & 21.2 & 2.7725 & 0.1102 & 0.7775 & 44.8 & 17.6366\tabularnewline
(2)&-0.5 & 4 & 5.33 & -2.67 & 0.021 &0.3076& 13.1 & 2.7725 & 0.1596 & 0.7775 & 33.85 & 17.6366\tabularnewline
(3)&0.3 & 4 & 4.3956 & 1.3187 & 0.047 &0.3076& 6.65 & 2.7725 & 0.2767 & 0.7775 & 23.92 & 17.6366\tabularnewline
(4) & 0.2 & 1 & 1.042 & 0.2083 & 0.3446 & 0.3076 & {\bf 2.7725} & {\bf
  2.7725} & 1.5296 & 0.7775 & {\bf 17.6366} &1{\bf 7.6366}\tabularnewline
\hline
\end{tabular}
\end{table}
}

\begin{figure}[!h]
  \centering
  \includegraphics[width=0.55\textwidth,trim=0.0cm 0.4cm 0.4cm 0.4cm,clip]{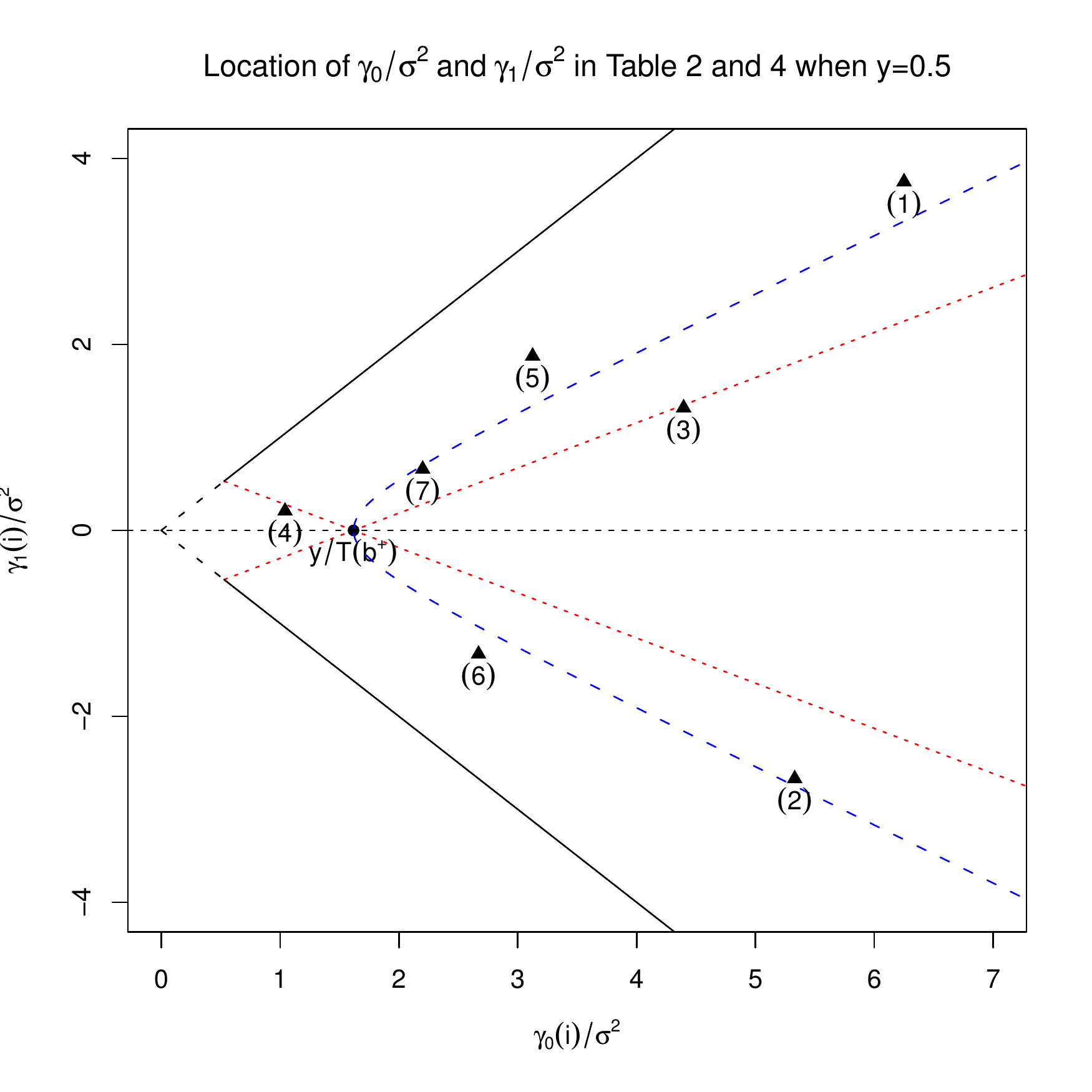}
  \caption{\small
    Locations of factor SNR's  $(\gamma_0/,\gamma_1)/\sigma^2$
    from Tables \protect\ref{scena2-lim}
    (points numbered from 1 to 4),
    \protect\ref{scena3-lim}     (points numbered from 5 to 7),
    and
    \protect\ref{scena4-lim} (points numbered 1-2-3-5-6) 
    with $T=2p$ ($y=0.5$).
    \label{table2y05}}
\end{figure}

\begin{figure}[!h]
  \centering
  \includegraphics[width=0.55\textwidth,trim=0.0cm 0.4cm 0.4cm 0.4cm,clip]{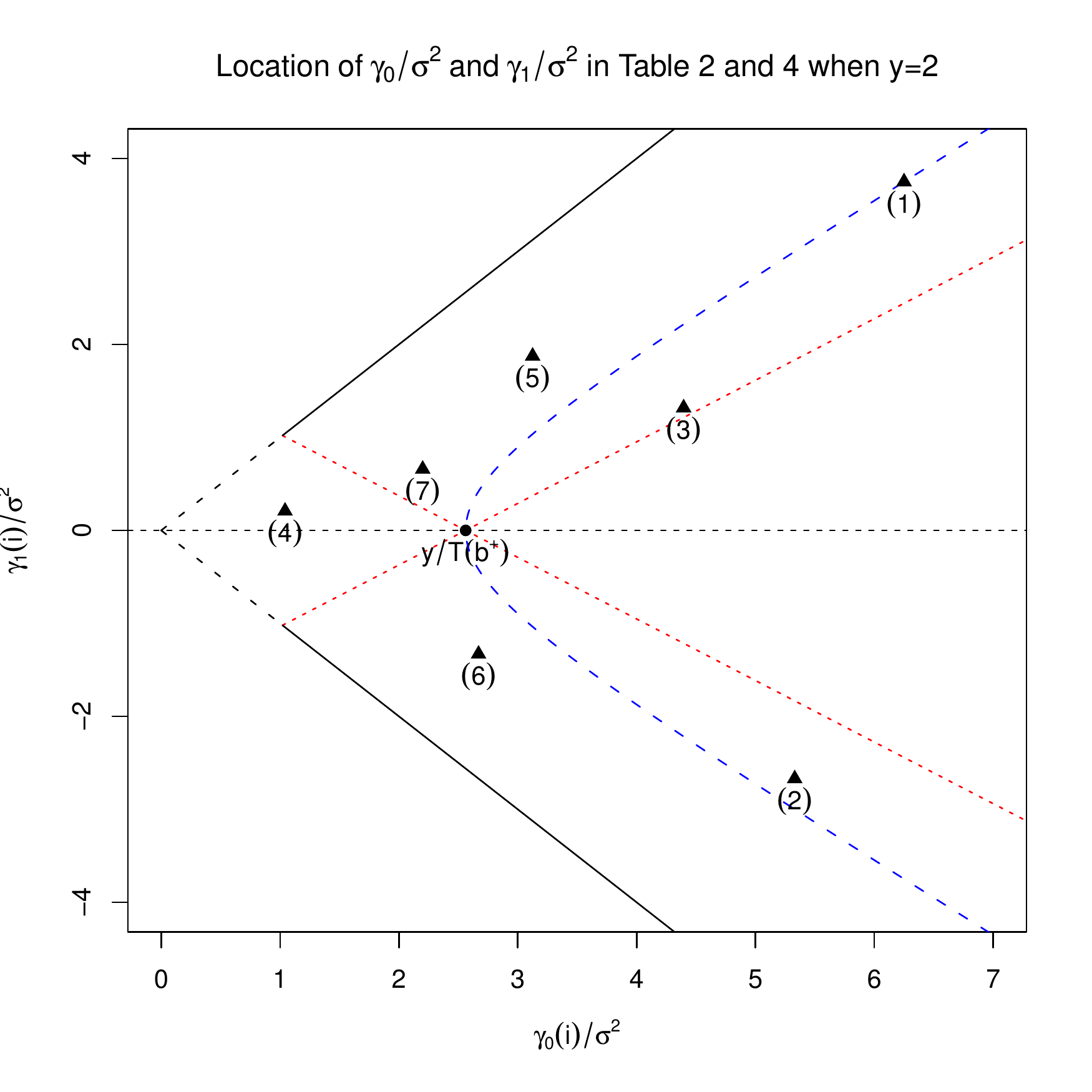}
  \caption{\small
    Locations of factor SNR's  $(\gamma_0/,\gamma_1)/\sigma^2$
    from Tables \protect\ref{scena2-lim}
    (points numbered from 1 to 4),
    \protect\ref{scena3-lim}     (points numbered from 5 to 7),
    and
    \protect\ref{scena4-lim} (points numbered 1-2-3-5-6) 
    with $T=0.5p$ ($y=2$).
    \label{table2y2}}
\end{figure}


{\small
\begin{table}
  \caption{Scenario II with three weak yet significant factors among
    four $(k_0=3, ~k=4$)\label{scena2}}
  \centering
    \resizebox{0.98\textwidth}{!}{
\begin{tabular}{+c|^c|^c|^c|^c|^c|^c|^c|^c|^c|^c|^c}
\hline
p & 100 & 300 & 500 & 1000 & 1500 & p & 100 & 300 & 500 & 1000 & 1500\tabularnewline
T=2p & 200 & 600 & 1000 & 2000 & 3000 & T=2p & 200 & 600 & 1000 & 2000 & 3000\tabularnewline
\hline
$\tilde{k}=1$ & 0.152 & 0.074 & 0.045 & 0.01 & 0.001 & $\hat{k}^{*}=1$ & 0.005 & 0 & 0 & 0 & 0\tabularnewline
$\tilde{k}=2$ & 0.402 & 0.344 & 0.276 & 0.194 & 0.126 & $\hat{k}^{*}=2$ & 0.026 & 0 & 0 & 0 & 0\tabularnewline
\rowstyle{\bfseries}%
$\tilde{k}=k_{0}$ & 0.446 & 0.582 & 0.679 & 0.796 & 0.873 & $\hat{k}^{*}=k_{0}$ & 0.928 & 0.967 & 0.953 & 0.96 & 0.966\tabularnewline
$\tilde{k}=4$ & 0 & 0 & 0 & 0 & 0 & $\hat{k}^{*}=4$ & 0.04 & 0.033 & 0.046 & 0.04 & 0.033\tabularnewline
$\tilde{k}\geq5$ & 0 & 0 & 0 & 0 & 0 & $\hat{k}^{*}\geq5$ & 0.001 & 0 & 0.001 & 0 & 0.001\tabularnewline
\hline
p & 100 & 300 & 500 & 1000 & 1500 & p & 100 & 300 & 500 & 1000 & 1500\tabularnewline
T=0.5p & 50 & 150 & 250 & 500 & 750 & T=0.5p & 50 & 150 & 250 & 500 & 750\tabularnewline
\hline
$\tilde{k}=1$ & 0.479 & 0.368 & 0.344 & 0.284 & 0.289 & $\hat{k}^{*}=1$ & 0.376 & 0.02 & 0.003 & 0 & 0\tabularnewline
$\tilde{k}=2$ & 0.406 & 0.432 & 0.454 & 0.495 & 0.514 & $\hat{k}^{*}=2$ & 0.456 & 0.221 & 0.048 & 0.001 & 0\tabularnewline
\rowstyle{\bfseries}%
$\tilde{k}=k_{0}$ & 0.105 & 0.199 & 0.202 & 0.221 & 0.197 & $\hat{k}^{*}=k_{0}$ & 0.16 & 0.73 & 0.915 & 0.986 & 0.982\tabularnewline
$\tilde{k}=4$ & 0.006 & 0.001 & 0 & 0 & 0 & $\hat{k}^{*}=4$ & 0.008 & 0.029 & 0.03 & 0.013 & 0.017\tabularnewline
$\tilde{k}\geq5$ & 0.004 & 0 & 0 & 0 & 0 & $\hat{k}^{*}\geq5$ & 0 & 0 & 0.004 & 0 & 0.001\tabularnewline
\hline
\end{tabular}}
\end{table}
}

\item [(III)] Theoretical limits and empirical result for Scenario III
  are presented in Table~\ref{scena3-lim}, 
  Figures~\ref{table2y05}
  and \ref{table2y2}, and Table~\ref{scena3}.
  For both situations of $T=0.5p$ and $T=2p$, the model has three
  significant factors ($k_0=k=3$).
  Notice however that when $T=0.5p$, the 3rd factor is quite weak and the corresponding
  limit $\lambda_3=17.95$ is very close to the right edge $b=17.64$
  so that this factor would be detectable only in theory (or with
  very large sample sizes).  This is also easily verified in
  Figure~\ref{table2y2} 
  that 
  the point (3) corresponding to the
  weakest factor lies very close to the boundary of the detectable
  region.  As for the empirical values in Table~\ref{scena3},
  the estimator $\hat k^*$ converges
  quickly when $T=2p$ and much more slowly when $T=0.5p$.
  Meanwhile, the  estimator $\tilde k$ (with one-step) seems
  inconsistent even in the easier case of $T=2p$.
 
{\small
\begin{table}[!h]
\caption{Scenario III - Theoretical limits ($k_0=k=3$)\label{scena3-lim}}
\centering
\begin{tabular}{+c|^c^c^c^c|^c^c^c^c|^c^c^c^c}
\hline
 &&  &  &  & \multicolumn{4}{c|}{$T=2p$} & \multicolumn{4}{c}{$T=0.5p$}\tabularnewline
\cline{6-13}
No. &$\Theta$ & $\Gamma$ & $\gamma_{0}\left(i\right)$ & $\gamma_{1}\left(i\right)$ & $T_{1}(i)$ & $T(b^+)$ & $\lambda_{i}$ & $b$ & $T_{1}(i)$ & $T(b^+)$ & $\lambda_{i}$ & $b$\tabularnewline
\hline
(5)&0.6 & 2 & 3.125 & 1.875 & 0.0391 &0.3076 & 7.65 & 2.7725 & 0.2845 &0.7775& 23.79 & 17.6366\tabularnewline
(6)&-0.5 & 2 & 2.67 & -1.33 & 0.0607 &0.3076& 5.48 & 2.7725 & 0.3852 &0.7775& 20.45 & 17.6366\tabularnewline
(7)&0.3 & 2 & 2.20 & 0.659 & 0.1183 &0.3076 & 3.61 & 2.7725 & 0.6116
&0.7775& {\bf 17.95} & {\bf 17.6366}\tabularnewline
\hline
\end{tabular}
\end{table}
}

{\small
\begin{table}[!h]
  \caption{Scenario III with three weak yet insignificant factors ($k_0=k=3$)\label{scena3}}
  \centering
  \resizebox{0.98\textwidth}{!}{
    \begin{tabular}{+c^c^c^c^c^c|^c^c^c^c^c^c}
      \hline
      $p$ & 100 & 300 & 500 & 1000 & 1500 & $p$ & 100 & 300 & 500 & 1000 & 1500\tabularnewline
      $T=2p$ & 200 & 600 & 1000 & 2000 & 3000 & $T=2p$ & 200 & 600 & 1000 & 2000 & 3000\tabularnewline
      \hline
      $\tilde{k}<2$ & 0.403 & 0.322 & 0.327 & 0.302 & 0.308 & $\hat{k}^{*}<2$ & 0.074 & 0 & 0 & 0 & 0\tabularnewline
      $\tilde{k}=2$ & 0.454 & 0.587 & 0.598 & 0.653 & 0.669 & $\hat{k}^{*}=2$ & 0.441 & 0.047 & 0.005 & 0 & 0\tabularnewline
\rowstyle{\bfseries}%
      $\tilde{k}=k_0$ & 0.143 & 0.091 & 0.075 & 0.045 & 0.023 & $\hat{k}^{*}=k_0$ & 0.48 & 0.945 & 0.991 & 0.996 & 0.999\tabularnewline
      $\tilde{k}\geq 4$ & 0 & 0 & 0 & 0 & 0 & $\hat{k}^{*}\geq 4$ & 0.005 & 0.008 & 0.004 & 0.004 & 0.001\tabularnewline
      \hline
      $p$ & 100 & 300 & 500 & 1000 & 1500 & $p$ & 100 & 300 & 500 & 1000 & 1500\tabularnewline
      $T=0.5p$ & 50 & 150 & 250 & 500 & 750 & $T=0.5p$ & 50 & 150 & 250 & 500 & 750\tabularnewline
      \hline
      $\tilde{k}<2$ & 0.548 & 0.57 & 0.589 & 0.548 & 0.547 & $\hat{k}^{*}<2$ & 0.886 & 0.639 & 0.435 & 0.114 & 0.049\tabularnewline
      $\tilde{k}=2$ & 0.264 & 0.359 & 0.371 & 0.437 & 0.447 & $\hat{k}^{*}=2$ & 0.107 & 0.338 & 0.508 & 0.718 & 0.745\tabularnewline
\rowstyle{\bfseries}%
      $\tilde{k}=k_0$ & 0.08 & 0.053 & 0.036 & 0.015 & 0.006 & $\hat{k}^{*}=k_0$ & 0.006 & 0.022 & 0.057 & 0.167 & 0.205\tabularnewline
      $\tilde{k}\geq 4$ & 0.108 & 0.018 & 0.004 & 0 & 0 & $\hat{k}^{*}\geq 4$ & 0.001 & 0.001 & 0 & 0.001 & 0.001\tabularnewline
      \hline
    \end{tabular}}
\end{table}
}
\item [(IV)]
  \bl{Scenario IV is the {most complex case}  with two very strong factors
    and five weak factors.
    As predicted by the theory,
    the two largest factor
    eigenvalues $l_1,~l_2$ of $\hat{M}$ blow up to infinity
    while the following  5 factor eigenvalues $l_3 \sim
    l_7$   converge to a $\lambda_i>b$.
    The corresponding theoretical limits for the five weak factors are given in
    Table~\ref{scena4-lim} and their SNR's depicted in Figures~\ref{table2y05}
    and \ref{table2y2}.
    Meanwhile,  all the  $k_0=k=7$ factors are significant.
    Clearly in this scenario, the performance of the one-step estimator
    $\tilde k$, denoted as $\tilde{k}^{(1)}$,  is quite limited and in order to make a closer
    comparison  with our estimator $\hat{k}^*$,
    we have also run  the two-step and the three-step versions of the
    estimator $\tilde k$. Among these two versions
    we report the best results obtained by the three-step version
    (denoted as $\tilde{k}^{(3)}$).
  }
  It can be seen from Table~\ref{scena4} that our estimator is able to
  detect the 7 factors with multi-level strength in a single step while
  $\tilde{k}$ can only identify one factor in each step: i.e.
  $\tilde{k}^{(1)}\rightarrow 1$ and $\tilde{k}^{(3)}\rightarrow 3$.

{\small
\begin{table}
  \caption{Scenario IV  - Theoretical limits ($k_0=k=7$)\label{scena4-lim}}
  \centering
\begin{tabular}{+c|^c^c^c^c|^c^c^c^c|^c^c^c^c}
\hline
 &  &  &  &  & \multicolumn{4}{c|}{$T=2p$} & \multicolumn{4}{c}{$T=0.5p$}\tabularnewline
\cline{6-13}
NO. & $\Theta$ & $\Gamma$ & $\gamma_{0}\left(i\right)$ & $\gamma_{1}\left(i\right)$ & $T_{1}(i)$ & $T\left(b^{+}\right)$ & $\lambda_{i}$ & $b$ & $T_{1}(i)$ & $T\left(b^{+}\right)$ & $\lambda_{i}$ & $b$\tabularnewline
\hline
(1) & 0.6 & 4 & 6.25 & 3.75 & 0.0125 & 0.3076 & 21.2 & 2.7725 & 0.1102 & 0.7775 & 44.8 & 17.6366\tabularnewline
(2) & -0.5 & 4 & 5.33 & -2.67 & 0.021 & 0.3076 & 13.1 & 2.7725 & 0.1596 & 0.7775 & 33.85 & 17.6366\tabularnewline
(3) & 0.3 & 4 & 4.3956 & 1.3187 & 0.047 & 0.3076 & 6.65 & 2.7725 & 0.2767 & 0.7775 & 23.92 & 17.6366\tabularnewline
(5) & 0.6 & 2 & 3.125 & 1.875 & 0.0391 & 0.3076 & 7.65 & 2.7725 & 0.2845 & 0.7775 & 23.79 & 17.6366\tabularnewline
(6) & -0.5 & 2 & 2.67 & -1.33 & 0.0607 & 0.3076 & 5.48 & 2.7725 & 0.3852 & 0.7775 & 20.45 & 17.6366\tabularnewline
\hline
\end{tabular}
\end{table}
}

{\small
\begin{table}
  \caption{Scenario IV with seven factors of  multiple  strength
    levels ($k_0=k=7$)\label{scena4}}
  \centering
\begin{tabular}{+c^c^c^c^c^c|^c^c^c^c^c^c}
\hline
p & 100 & 300 & 500 & 1000 & 1500 & p & 100 & 300 & 500 & 1000 & 1500\tabularnewline
T=2p & 200 & 600 & 1000 & 2000 & 3000 & T=0.5p & 50 & 150 & 250 & 500 & 750\tabularnewline
\hline
$\tilde{k}^{(1)}=1$ & 0.696 & 0.858 & 0.949 & 0.995 & 1 & $\tilde{k}^{(1)}=1$ & 0.73 & 0.812 & 0.881 & 0.95 & 0.986\tabularnewline
$\tilde{k}^{(1)}=2$ & 0.244 & 0.137 & 0.051 & 0.005 & 0 & $\tilde{k}^{(1)}=2$ & 0.211 & 0.177 & 0.118 & 0.05 & 0.014\tabularnewline
$\tilde{k}^{(1)}=3$ & 0.033 & 0.004 & 0 & 0 & 0 & $\tilde{k}^{(1)}=3$ & 0.039 & 0.011 & 0.001 & 0 & 0\tabularnewline
$\tilde{k}^{(1)}=4$ & 0.019 & 0.001 & 0 & 0 & 0 & $\tilde{k}^{(1)}=4$ & 0.015 & 0 & 0 & 0 & 0\tabularnewline
$\tilde{k}^{(1)}=5$ & 0.005 & 0 & 0 & 0 & 0 & $\tilde{k}^{(1)}=5$ & 0.004 & 0 & 0 & 0 & 0\tabularnewline
$\tilde{k}^{(1)}=6$ & 0.002 & 0 & 0 & 0 & 0 & $\tilde{k}^{(1)}=6$ & 0.001 & 0 & 0 & 0 & 0\tabularnewline
\rowstyle{\bfseries}%
$\tilde{k}^{(1)}=k_0$ & 0.001 & 0 & 0 & 0 & 0 & $\tilde{k}^{(1)}=k_0$ & 0 & 0 & 0 & 0 & 0\tabularnewline
$\tilde{k}^{(1)}\geq8$ & 0 & 0 & 0 & 0 & 0 & $\tilde{k}^{(1)}\geq8$ & 0 & 0 & 0 & 0 & 0\tabularnewline
\hline
p & 100 & 300 & 500 & 1000 & 1500 & p & 100 & 300 & 500 & 1000 & 1500\tabularnewline
T=2p & 200 & 600 & 1000 & 2000 & 3000 & T=0.5p & 50 & 150 & 250 & 500 & 750\tabularnewline
\hline
$\tilde{k}^{(3)}=1$ & 0 & 0 & 0 & 0 & 0 & $\tilde{k}^{(3)}=1$ & 0 & 0 & 0 & 0 & 0\tabularnewline
$\tilde{k}^{(3)}=2$ & 0 & 0 & 0 & 0 & 0 & $\tilde{k}^{(3)}=2$ & 0 & 0 & 0 & 0 & 0\tabularnewline
$\tilde{k}^{(3)}=3$ & 0.691 & 0.875 & 0.945 & 0.998 & 0.999 & $\tilde{k}^{(3)}=3$ & 0.71 & 0.802 & 0.862 & 0.955 & 0.982\tabularnewline
$\tilde{k}^{(3)}=4$ & 0.002 & 0 & 0 & 0 & 0 & $\tilde{k}^{(3)}=4$ & 0 & 0 & 0 & 0 & 0\tabularnewline
$\tilde{k}^{(3)}=5$ & 0 & 0 & 0 & 0 & 0 & $\tilde{k}^{(3)}=5$ & 0.001 & 0 & 0 & 0 & 0\tabularnewline
$\tilde{k}^{(3)}=6$ & 0.244 & 0.125 & 0.055 & 0.002 & 0.001 & $\tilde{k}^{(3)}=6$ & 0.212 & 0.192 & 0.135 & 0.045 & 0.018\tabularnewline
\rowstyle{\bfseries}%
$\tilde{k}^{(3)}=k_0$ & 0 & 0 & 0 & 0 & 0 & $\tilde{k}^{(3)}=k_0$ & 0 & 0 & 0 & 0 & 0\tabularnewline
$\tilde{k}^{(3)}\geq8$ & 0.063 & 0 & 0 & 0 & 0 & $\tilde{k}^{(3)}\geq8$ & 0.077 & 0.006 & 0.003 & 0 & 0\tabularnewline
\hline
p & 100 & 300 & 500 & 1000 & 1500 & p & 100 & 300 & 500 & 1000 & 1500\tabularnewline
T=2p & 200 & 600 & 1000 & 2000 & 3000 & T=0.5p & 50 & 150 & 250 & 500 & 750\tabularnewline
\hline
$\hat{k}^{*}=1$ & 0.012 & 0 & 0 & 0 & 0 & $\hat{k}^{*}=1$ & 0.151 & 0.01 & 0 & 0 & 0\tabularnewline
$\hat{k}^{*}=2$ & 0.031 & 0.001 & 0 & 0 & 0 & $\hat{k}^{*}=2$ & 0.25 & 0.038 & 0.01 & 0.001 & 0\tabularnewline
$\hat{k}^{*}=3$ & 0.034 & 0.002 & 0 & 0 & 0 & $\hat{k}^{*}=3$ & 0.28 & 0.065 & 0.027 & 0.003 & 0\tabularnewline
$\hat{k}^{*}=4$ & 0.062 & 0.015 & 0.006 & 0.001 & 0 & $\hat{k}^{*}=4$ & 0.254 & 0.227 & 0.107 & 0.022 & 0.007\tabularnewline
$\hat{k}^{*}=5$ & 0.049 & 0 & 0 & 0 & 0 & $\hat{k}^{*}=5$ & 0.06 & 0.384 & 0.295 & 0.035 & 0.002\tabularnewline
$\hat{k}^{*}=6$ & 0.185 & 0 & 0 & 0 & 0 & $\hat{k}^{*}=6$ & 0.005 & 0.231 & 0.414 & 0.34 & 0.138\tabularnewline
\rowstyle{\bfseries}%
$\hat{k}^{*}=k_0$ & 0.597 & 0.939 & 0.958 & 0.95 & 0.959 & $\hat{k}^{*}=k_0$ & 0 & 0.044 & 0.142 & 0.557 & 0.783\tabularnewline
$\hat{k}^{*}\geq8$ & 0.03 & 0.043 & 0.036 & 0.049 & 0.041 & $\hat{k}^{*}\geq8$ & 0 & 0.001 & 0.005 & 0.042 & 0.07\tabularnewline
\hline
\end{tabular}
\end{table}
}


\end{itemize}

\section{\bl{An example of real data analysis}}\label{application}

We analyse the log returns of 100 stocks (denoted by $y_t$), included in the S\&P500 during the period from 2005-01-03 to 2011-09-16. We have in total $T=1689$ observations with $p=100$.  Thorough eigenvalue analysis is applied to the lag-1 sample auto-covariance matrix  $\widehat{M}=\hat{\Sigma}_y{ }\hat{\Sigma}'_y{ }$ of $y_t$. The largest eigenvalue of $\widehat{M}$ is $\lambda_1(\widehat{M})=38.69$. The second to the 30th largest eigenvalues and their ratios are plotted in Fig \ref{eigratio}.

\begin{figure}[!htbp]
  \centering
  \includegraphics[width=0.8\textwidth,trim=0.0cm 1cm 1cm 0cm,clip]{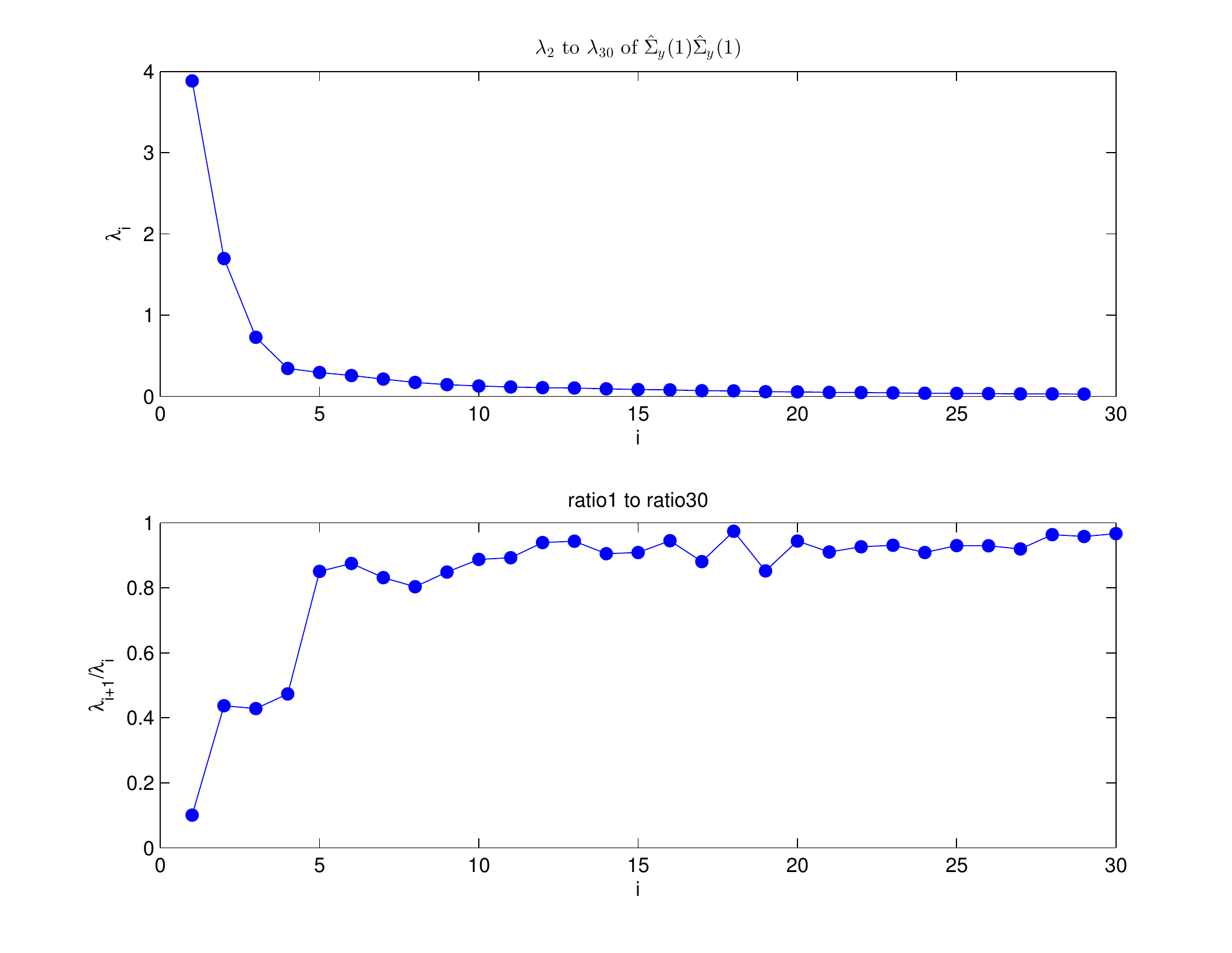}
  \caption{\small Eigenvalues of $\widehat{M}$\label{eigratio}}
\end{figure}

To estimate the number of factors, we first adopt the two-step procedure investigated by \citet{LY012} since the ratio plot in Fig \ref{eigratio} is exhibiting at least two different levels of factor strength. Obviously, in the first step,
\[\hat{r}_1=\underset{1\leq i\leq 99}{\arg\min}\lambda_{i+1}/\lambda_i=1,\]
the factor loading estimator of the first factor $\hat{A}$ is the eigenvector of $\widehat{M}$ which corresponds to the largest eigenvalue $\lambda_1$. The resulting residuals after eliminating the effect of the first factor is
\[\hat{\veps}_t=({\bf I}_{100}-\hat{A}\hat{A}')y_t.\]
Repeating the procedure in step one, we treat $\hat{\veps}$ as the original sequence $y_t$ and get the eigenvalues $\lambda_i^*s$ of the lag-1 sample auto-covariance matrix $\widehat{M}^{(1)}=\hat{\Sigma}_{\hat{\veps}}{ }\hat{\Sigma}'_{\hat{\veps}}{ }$. The 30 largest eigenvalues and their ratios are plotted in Fig \ref{eigratiostep2}.

\begin{figure}[!htbp]
  \centering
  \includegraphics[width=0.8\textwidth,trim=0.0cm 1cm 1cm 0cm,clip]{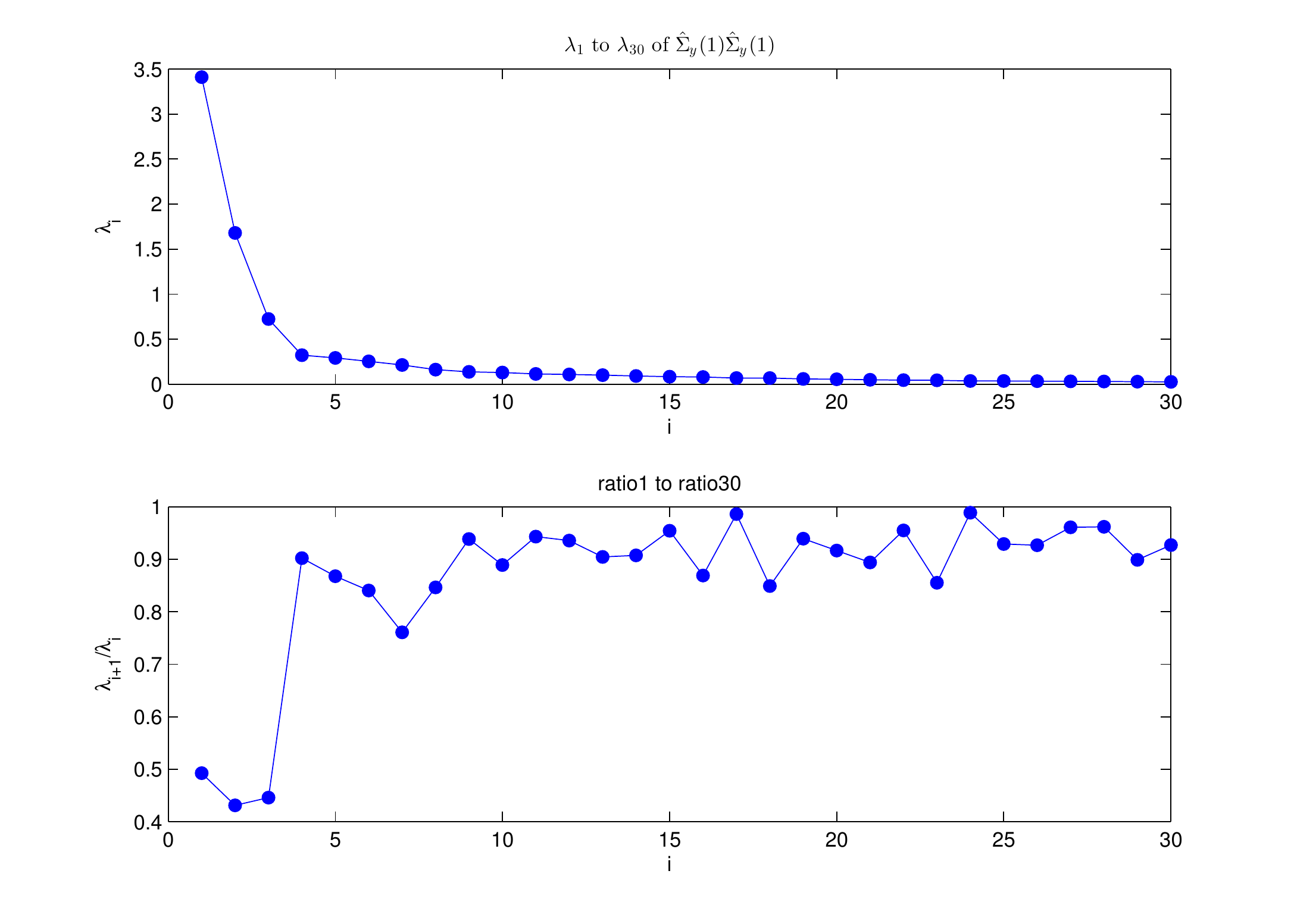}
  \caption{\small Eigenvalues of $\widehat{M}^{(1)}$ }\label{eigratiostep2}
\end{figure}

It can be seen from the second step that
\[\hat{r}_2=\underset{1\leq i\leq 99}{\arg\min}\lambda^*_{i+1}/\lambda^*_i=2,\]
the factor loading estimator of the second level factors $\hat{A}^*$ are the orthonormal eigenvectors of $\widehat{M}^{(1)}$ corresponding to the first two largest eigenvalues.

In conclusion, the two-step procedure proposed by \cite{LY012} identifies three factors in total with two different levels of factor strength. The eigenvalues of the lag-1 sample auto-covariance matrix $\widehat{M}^{(2)}$ of residuals after subtracting the three factors detected previously are shown in Fig \ref{eigratiostep3}.

\begin{figure}[!htbp]
  \centering
  \includegraphics[width=0.8\textwidth,trim=0.0cm 1cm 1cm 0cm,clip]{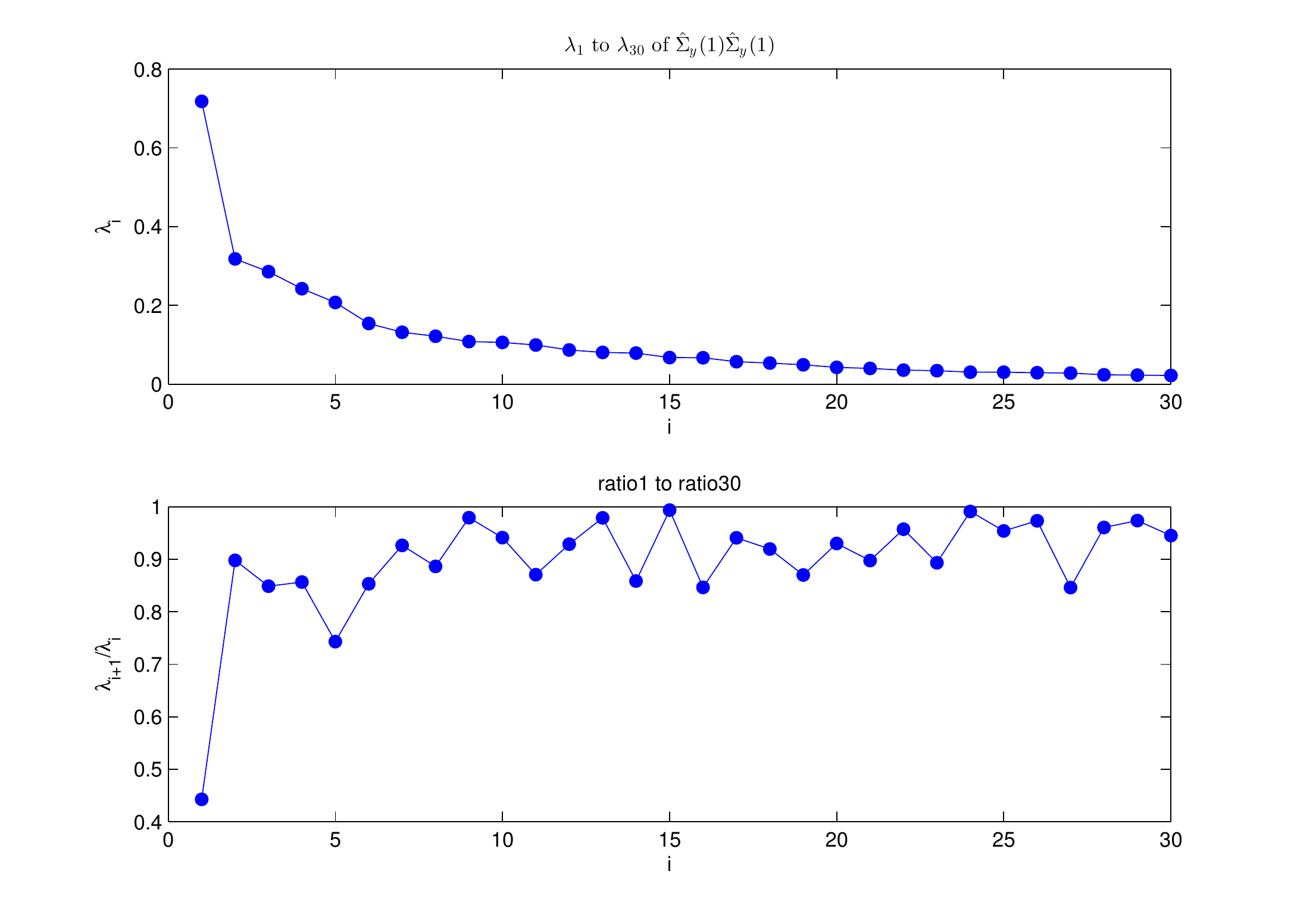}
  \caption{\small Eigenvalues of $\widehat{M}^{(2)}$ }\label{eigratiostep3}
\end{figure}

There is still one isolated eigenvalue in the eigenvalues plot. If we
go one step further and treat it as an extra factor with weakest
strength, then the eigenvalue plot of the lag-1 sample auto-covariance
matrix $\widehat{M}^{(3)}$ of residuals after eliminating four factors
looks like in Fig \ref{eigratiostep4}.

\begin{figure}[!htbp]
  \centering
  \includegraphics[width=0.8\textwidth,trim=0.0cm 1cm 1cm 0cm,clip]{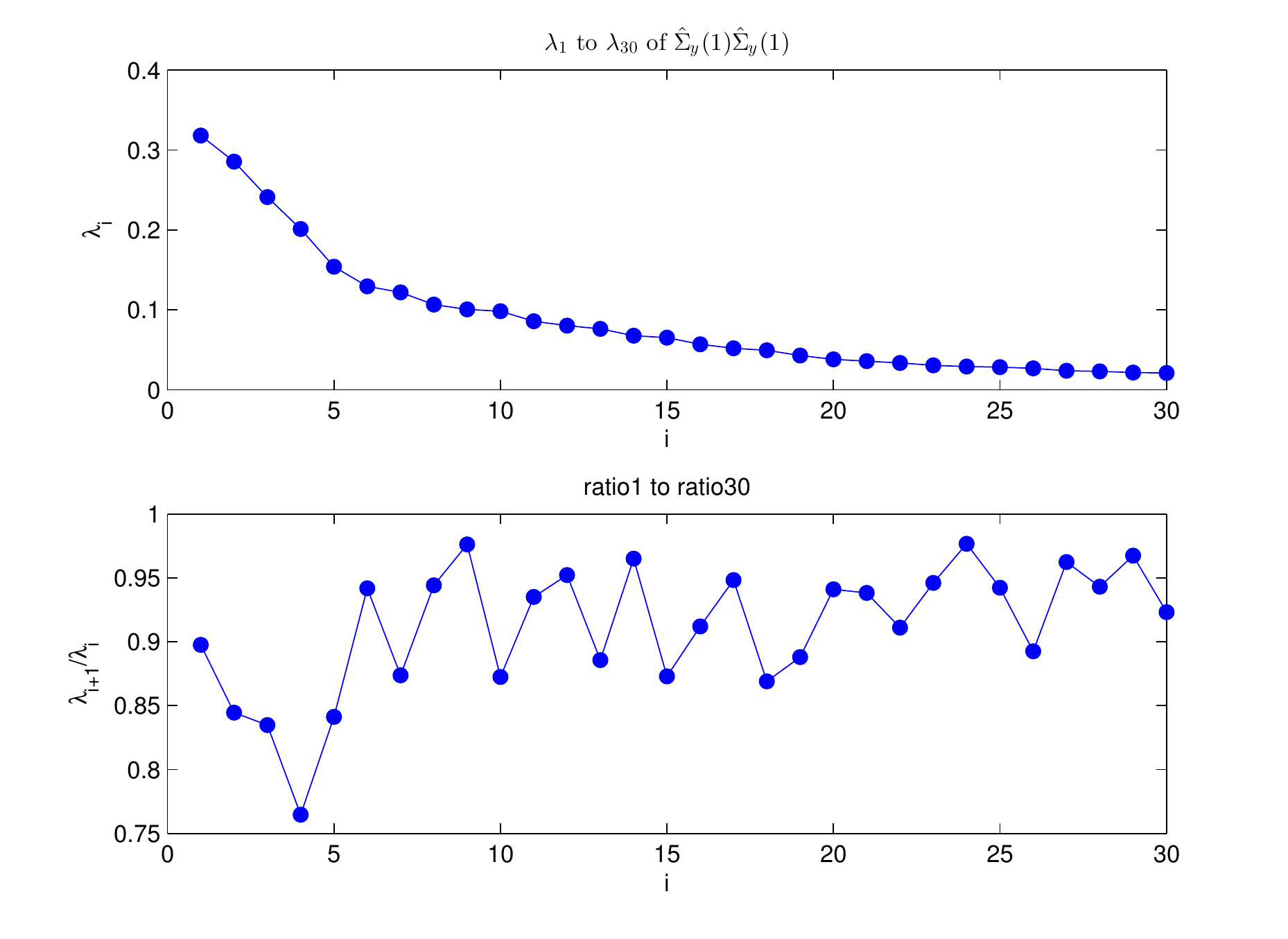}
  \caption{\small Eigenvalues of $\widehat{M}^{(3)}$ }\label{eigratiostep4}
\end{figure}

A major problem of the methodology in \cite{LY012} is that it \bl{does not}
provide a clear criterion to stop this two or multi-step
procedure. Clearly, this method can only detect factors with one level
of strength at each step and can hardly handle problems with factors
of multilevel strengths due to the lack of stopping criterion in
multi-step detection.

In the following, we use the estimator $\hat k^*$ \eqref{hattk}
of this paper to estimate the
number of factors. At first,
\bl{the tuning parameter $d_T$ is calibrated
  with $(p,T)=(100,1689)$ using the simulation method indicated in
  Section~\ref{choosedn}; the value found is $d_T=0.1713$ in this
  case.
  The eigenvalue ratios  of the sample matrix $\hat M$ are  shown
  in Figure~\ref{eigratiodn} (already displayed in  the lower panel
  of Figure~\ref{eigratio}) where the detection line of value $1-d_T=0.8287$
  is also drawn. As displayed, we found $\hat k^*=4$ factors.

In conclusion,  for this data set with $p=100$ stocks, our estimator
proposes 4 significant factors while the estimator
$\tilde k$ from  \cite{LY012} indicate
1, 3  and  4 factors when
one step, two steps  and  3 steps  are used respectively.
It appears again that
multiple steps are needed for the use of the estimator
$\hat k$ in real data analysis; it remains however unclear how to
decide the number of these necessary steps.
On the contrary, our estimator is able to identify  simultaneously all
significant factors
and the procedure is  independent of
the number of different levels of the factor  strengths.
}

\begin{figure}[!htbp]
  \centering
  \includegraphics[width=0.8\textwidth,trim=0.0cm 1cm 1cm 0cm,clip]{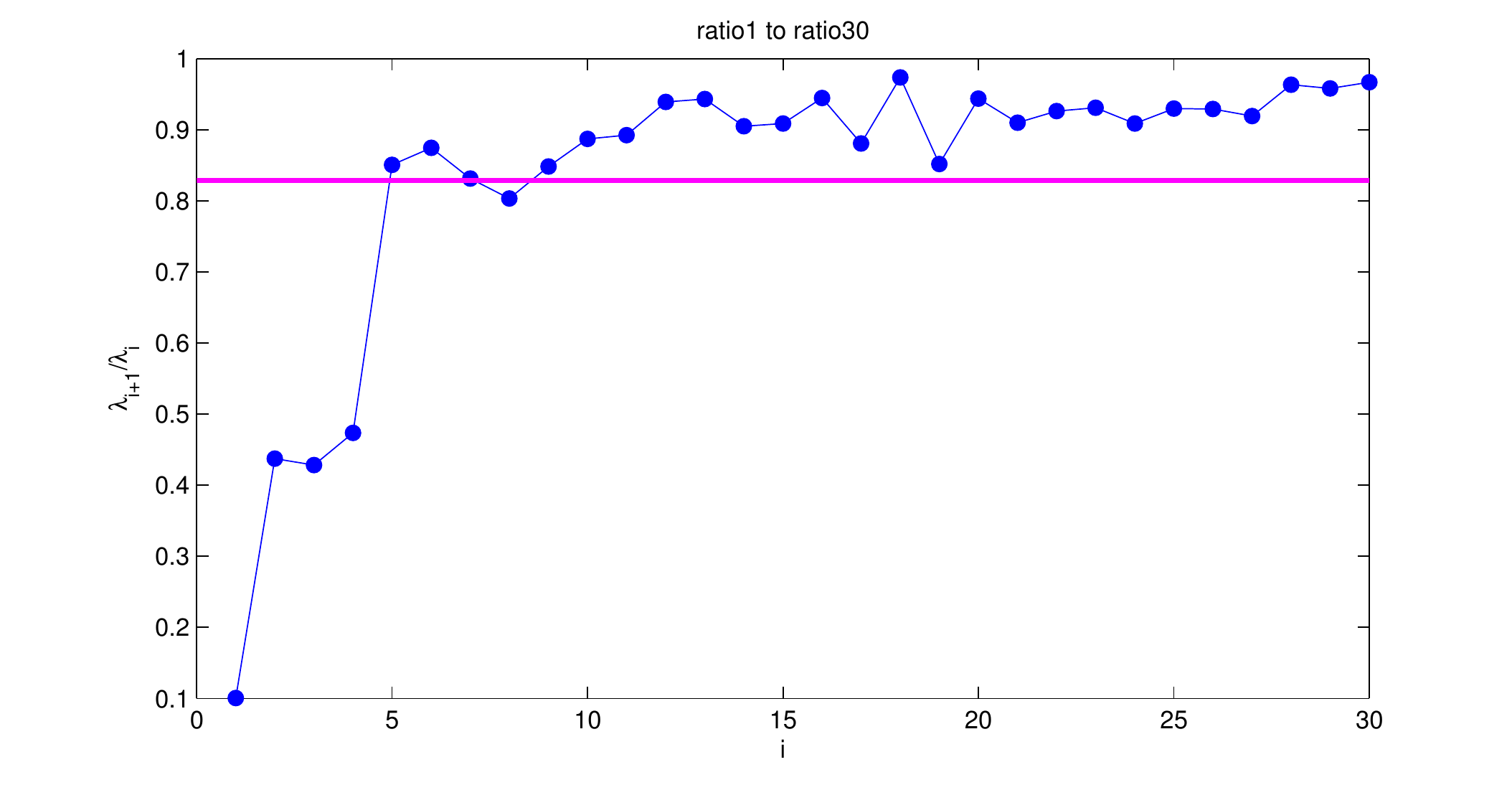}
  \caption{\small Eigenvalues of $\widehat{M}$ }\label{eigratiodn}
\end{figure}

\noindent\section*{Note}
  {A supplementary file \citet{li2}  collects several
    technical proofs used in the  paper.}

\appendix
\section{Supplementary material for the paper 
  ``Identifying the number of factors from singular values of a
  large  sample auto-covariance matrix''}


This supplementary collects several technical lemmas 
that are used  in the main paper. 

\begin{lemma}\label{l1}
  $\E(\lambda^2 I-E_1E_1'E_2E_2')^{-1}E_1E_1'$ and $\E(\lambda^2 I-E_2E_2'E_1E_1')^{-1}E_2E_2'$ are diagonal.
\end{lemma}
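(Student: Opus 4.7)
The plan is to expand the resolvent as a Neumann series and analyze each term via Isserlis' theorem. For $\lambda$ sufficiently large,
\[
  (\lambda^2 I - E_1 E_1' E_2 E_2')^{-1} E_1 E_1' = \sum_{n \ge 0} \lambda^{-2(n+1)} (E_1 E_1' E_2 E_2')^n E_1 E_1',
\]
with the resulting identity for expectations extended to the relevant range of $\lambda$ by analyticity, so it suffices to analyze the structure of each $\E[(E_1 E_1' E_2 E_2')^n E_1 E_1']$.

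To handle a generic term, I would introduce the independent Gaussian column vectors $v_t = (\varepsilon_{k+1,t},\ldots,\varepsilon_{p,t})' \sim \N(0, I_{p-k})$ for $t=1,\ldots,T+1$, so that $(E_1 E_1')_{s,t} = T^{-1} v_s \cdot v_t$ and $(E_2 E_2')_{s,t} = T^{-1} v_{s+1} \cdot v_{t+1}$. Each entry of $(E_1 E_1' E_2 E_2')^n E_1 E_1'$ is then a polynomial of degree $2(2n+1)$ in the $v_a$'s whose expectation reduces, by Isserlis' theorem, to a sum over pairings of the Gaussian factors with matching time-subscripts. For an off-diagonal entry $(s,t)$ with $s\ne t$, a case analysis of the admissible pairings shows that any pairing consistent with the fixed boundary subscripts $s,t$ forces the internal summation indices $u_1,\ldots,u_{2n}$ to take prescribed values, so only $O(1)$ summands contribute out of the $T^{2n}$-fold sum; in contrast, a diagonal entry $(s,s)$ admits pairings that leave at least one free internal index, producing an extra factor of $T$ which balances the $T^{-(2n+1)}$ normalization. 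Consequently, each $\E[(E_1 E_1' E_2 E_2')^n E_1 E_1']$ is diagonal up to corrections that are smaller than the diagonal by a factor $O(1/T)$, and resumming the series yields the asserted diagonal structure in the high-dimensional limit.

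The analogous claim for $\E[(\lambda^2 I - E_2 E_2' E_1 E_1')^{-1} E_2 E_2']$ follows by the same argument, or more concisely via the time-reversal symmetry $v_t \leftrightarrow v_{T+2-t}$, under which $E_1 \leftrightarrow J_T E_2$ (with $J_T$ the $T\times T$ reverse-identity); this interchanges the two matrices in the lemma up to conjugation by $J_T$, which preserves diagonality. The main obstacle is precisely the combinatorial bookkeeping of Isserlis pairings: the shift structure linking $E_1$ and $E_2$ through the shared rows $v_2,\ldots,v_T$ obstructs the clean orthogonal-invariance argument that would suffice for a single Wishart-type matrix, so the case analysis of surviving pairings must be carried out explicitly and uniformly in $n$.
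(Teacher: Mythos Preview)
The paper does not actually prove this lemma: it simply says ``The proofs of Lemma~\ref{l1} and Lemma~\ref{l2} have been already given in the paper~\cite{li}'', so there is no in-paper argument to compare against. Your Neumann-series/Isserlis approach is therefore a genuinely different (direct) route rather than a reproduction of anything here.

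Your asymptotic reading is in fact the correct one, and this is worth saying explicitly: the expectation is \emph{not} exactly diagonal for finite $(p,T)$. For instance, with $s=1$, $t=3$ one checks that
\[
\E\bigl[(E_1E_1'E_2E_2'E_1E_1')\bigr]_{1,3}
= T^{-3}\bigl(\E[|v_1|^2]\,\E[(v_2\!\cdot\! v_3)^2]+\E[(v_1\!\cdot\! v_2)(v_3\!\cdot\! v_2)(v_1\!\cdot\! v_3)]\bigr)
= \frac{m^2+m}{T^3}\neq 0,
\]
where $m=p-k$. So the lemma must be understood in the limiting sense in which it is used downstream (Lemmas~\ref{33}--\ref{34} compute answers in terms of the limiting $T$-transform), and your ``diagonal up to $O(1/T)$'' conclusion is exactly what is needed.

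Where your sketch is loose is the combinatorics. The dichotomy ``off-diagonal entries have $O(1)$ surviving summands, diagonal entries have a free index producing a factor~$T$'' is not quite the mechanism. For the diagonal entry $(s,s)$, the leading contribution comes from the \emph{single} summand $u_1=\cdots=u_{2n}=s$, which forces every dot product to be a squared norm and yields order $m^{2n+1}$; there are also $O(T)$ summands of order $m^{2n}$ (e.g.\ $u_{2j-1}=u_{2j}$ free). For an off-diagonal entry, no summand can reach $m^{2n+1}$ (the chain $s=u_1=u_2=\cdots=u_{2n}=t$ is impossible when $s\neq t$), and the surviving summands are of order at most $m^{2n}$ with their number bounded by a constant depending on $n$ and $|s-t|$ but not on $T$. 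Since $m\asymp T$, both effects together give the $O(1/T)$ suppression, but the power-of-$m$ count is the essential point, and that is what you should make precise. You also need to control the constants uniformly enough in $n$ to resum the series on the region of $\lambda$ you care about; the acknowledgment that this bookkeeping is the main obstacle is accurate.
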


\begin{lemma}\label{l2}
  $
  \E(\lambda^2 I-E_1E_1'E_2E_2')^{-1}E_1E_1'E_2E_2'$ and\\ $\E(\lambda^2 I-E_2E_2'E_1E_1')^{-1}E_2E_2'E_1E_1'$ are diagonal.
\end{lemma}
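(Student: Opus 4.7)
The plan is to exploit the joint Gaussian structure of $(E_1, E_2)$ via a combination of Wick's theorem and rotational invariance. Begin by writing $E_1 = S_1 Z/\sqrt{T}$ and $E_2 = S_2 Z/\sqrt{T}$, where $Z \in \R^{(T+1) \times (p-k)}$ has i.i.d.\ standard Gaussian entries and $S_1, S_2$ are the deterministic $T \times (T+1)$ row-selection matrices picking out times $\{1,\ldots,T\}$ and $\{2,\ldots,T+1\}$ respectively. Consequently $E_1 E_1'$ and $E_2 E_2'$ are two overlapping $T \times T$ principal submatrices of the single Wishart matrix $W = Z Z'$, and the joint law of $(E_1 E_1', E_2 E_2')$ inherits all the symmetries of $W$.

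I would then expand the resolvent as a Neumann series
\[
(\lambda^2 I - E_1 E_1' E_2 E_2')^{-1} E_1 E_1' = \sum_{n \ge 0} \lambda^{-2(n+1)} (E_1 E_1' E_2 E_2')^n E_1 E_1',
\]
valid for $|\lambda|$ sufficiently large, and reduce the claim to showing that the expected $(s,t)$ entry of each power $(E_1 E_1' E_2 E_2')^n E_1 E_1'$ vanishes for $s \ne t$. Each such entry is a polynomial in the Gaussian entries of $Z$, and its expectation is evaluated by Isserlis's (Wick's) formula as a sum over pairings of the noise factors, weighted by products of Kronecker deltas in the time and component indices.

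The key structural input is the subgroup $H = \{U = \mathrm{diag}(1, U_v, 1) : U_v \in O(T-1)\}$ of $O(T+1)$, which preserves the distribution of $Z$ (and hence of $W$). A direct computation shows that under such a $U$,
\[
E_1 E_1' \;\mapsto\; \tilde U_1 E_1 E_1' \tilde U_1', \qquad E_2 E_2' \;\mapsto\; \tilde U_2 E_2 E_2' \tilde U_2',
\]
where $\tilde U_1 = \mathrm{diag}(1, U_v) \in O(T)$ and $\tilde U_2 = \mathrm{diag}(U_v, 1) \in O(T)$. Exploiting this equivariance for every $U_v \in O(T-1)$ together with the alternating structure of the matrix product, a Schur-lemma argument forces the ``interior'' block of the expected matrix to be a scalar multiple of the identity; the boundary entries (those involving row or column index $1$ or $T$) are then handled by a separate direct Wick computation, possibly using the additional symmetry exchanging the first and last coordinates of $Z$.

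The main obstacle is the asymmetry $\tilde U_1 \ne \tilde U_2$: the transformation of the product $E_1 E_1' E_2 E_2'$ is not a simple conjugation, so the group invariance does not propagate through the resolvent by a one-line Schur argument and must be applied term by term through the Neumann series. Equivalently, one may argue purely combinatorially that the Wick pairings contributing to off-diagonal $(s,t)$ entries always force at least one extra coincidence of component indices, leaving them subleading relative to the diagonal contributions. The companion statement for $\E(\lambda^2 I - E_2 E_2' E_1 E_1')^{-1} E_2 E_2'$ follows by interchanging the roles of $E_1$ and $E_2$.
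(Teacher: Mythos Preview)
Your symmetry programme cannot succeed because the target statement, read literally as exact diagonality for each finite $(p,T)$, is \emph{false}. Take $T=3$ and look at the $n=1$ term of your Neumann series, $(E_1E_1'E_2E_2')E_1E_1'$. Its $(1,3)$ entry equals $T^{-3}\sum_{a,b=1}^{3}(z_1'z_a)(z_{a+1}'z_{b+1})(z_b'z_3)$, where $z_1,\dots,z_4$ are the rows of $Z$; a short Wick computation leaves only $(a,b)=(1,2)$ and $(2,1)$ with nonzero expectation, contributing $(p-k)^2$ and $(p-k)$ respectively, so the expected $(1,3)$ entry is $T^{-3}[(p-k)^2+(p-k)]\ne 0$. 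The sign-flip parity one extracts from your group $H$ only forces vanishing when $s$ and $t$ have opposite parity; for $s,t$ of the same parity (such as $(1,3)$) no symmetry kills the term, precisely because $\tilde U_1\ne\tilde U_2$ prevents the $H$-action from propagating through the alternating product $E_1E_1'E_2E_2'$ as a single conjugation. So neither the Schur-lemma step nor the ``boundary entries by direct Wick'' step can close.

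The paper itself gives no argument here; it simply refers to the companion work \cite{li}. What is actually established there, and what is actually used downstream in Lemmas~\ref{33} and~\ref{34}, is an \emph{asymptotic} statement: the off-diagonal entries of these expected resolvent quantities are of smaller order (in fact $O(1/T)$) than the diagonal ones as $p,T\to\infty$ with $p/T\to y$. Your closing remark that off-diagonal Wick pairings are ``subleading'' is therefore exactly the right idea, but it is a quantitative moment bound on the full series, not a symmetry argument, and it is a completely different proof from the rotational-invariance route that occupies most of your proposal. You should drop the exact-diagonality strategy and instead carry out that combinatorial estimate directly: show that every nonvanishing Wick pairing contributing to an off-diagonal $(s,t)$ entry forces at least one additional coincidence among the component indices, costing a factor of $1/(p-k)\asymp 1/T$ relative to the corresponding diagonal contribution.
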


\begin{proof}
  The proofs of Lemma \ref{l1} and Lemma \ref{l2} have been  already given in the paper \cite{li}.
\end{proof}

\begin{lemma}\label{33}
  \begin{align*}
    &~~~~\E\left[\lambda I_k-\lambda X_1'(\lambda^2I-E_1E_1'E_2E_2')^{-1}E_1E_1'X_1\right]\\
    &=\E\left[\lambda I_k-\lambda X_0'(\lambda^2I-E_2E_2'E_1E_1')^{-1}E_2E_2'X_0\right]\\
    &=\begin{pmatrix}
      \lambda-\frac{\lambda T(\lambda^2)}{y+T(\lambda^2)}(1+\gamma_0(1)) & \cdots & 0 \\
      \vdots & \ddots & \vdots \\
      0 & \cdots & \lambda-\frac{\lambda T(\lambda^2)}{y+T(\lambda^2)}(1+\gamma_0(k)) \\
    \end{pmatrix}
  \end{align*}
\end{lemma}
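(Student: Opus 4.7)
The plan is to exploit the independence of $X_1$ from the pair $(E_1, E_2)$ to reduce each entry of the $k\times k$ matrix to a scalar resolvent trace, then invoke Lemma~\ref{l1} to kill the off-diagonal-in-time contributions, and finally identify the remaining scalar as a known asymptotic quantity from \cite{li}. The statement should be read asymptotically as $p,T \to \infty$ with $p/T \to y$, since the right-hand side involves the limiting $T$-transform; this matches the way the lemma is applied in the proof of Theorem~\ref{mainth}.

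Set $A_1 := (\lambda^2 I - E_1 E_1' E_2 E_2')^{-1} E_1 E_1'$ and write $v_i$ for the $i$-th column of $X_1$. By Assumption~1 the $k$ factor series are mutually independent, by Assumption~2 the noise entries are jointly independent across indices and time, and consequently $X_1$ is independent of $(E_1, E_2)$ while distinct columns $v_i, v_j$ ($i\ne j$) are independent with mean zero. Taking expectation of $(X_1' A_1 X_1)_{ij} = v_i' A_1 v_j$ yields $\E[v_i' A_1 v_j] = \tr\left(\E[A_1]\,\E[v_j v_i']\right)$, which vanishes for $i \ne j$. For $i=j$, the covariance matrix of $v_i$ has entries $\tfrac{1}{T}[\gamma_{|s-t|}(i) + \sigma^2\delta_{st}]$, and Lemma~\ref{l1} states that $\E[A_1]$ is diagonal in its $T\times T$ indices; therefore only the $s=t$ part of the covariance survives the pairing and
\[
\E[(X_1' A_1 X_1)_{ii}] = \frac{\gamma_0(i)+\sigma^2}{T}\,\E[\tr(A_1)].
\]
Taking $\sigma^2 = 1$ as in Step~1 of the proof of Theorem~\ref{mainth}, and granting the limit $T^{-1}\E[\tr(A_1)] \to T(\lambda^2)/(y+T(\lambda^2))$, one recovers the claimed $i$-th diagonal entry $\lambda - \lambda(1+\gamma_0(i))T(\lambda^2)/(y+T(\lambda^2))$. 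The second identity is proved by the verbatim parallel argument with $X_0$ in place of $X_1$ and $(\lambda^2 I - E_2 E_2' E_1 E_1')^{-1} E_2 E_2'$ in place of $A_1$, using the twin statement in Lemma~\ref{l1}; the limit is the same because $E_1$ and $E_2$ play exchangeable roles in the joint asymptotic spectral statistics.

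The main obstacle is precisely this scalar limit $T^{-1}\E[\tr((\lambda^2 I - E_1 E_1' E_2 E_2')^{-1} E_1 E_1')] \to T(\lambda^2)/(y+T(\lambda^2))$. Proposition~\ref{LSD} immediately delivers the closely related limit $T^{-1}\E\tr((\lambda^2 I - E_1 E_1' E_2 E_2')^{-1} E_1 E_1' E_2 E_2') \to T(\lambda^2)$, but here we carry only the single factor $E_1 E_1'$ as multiplier. To handle this I would write $E_j E_j' = T^{-1} Y_j' Y_j$ with $Y_j$ the appropriate $(p-k)\times T$ noise array, swap to the $(p-k)\times(p-k)$ side using the standard identity for nonzero eigenvalues of $AB$ versus $BA$, and expand the resolvent as a Neumann series to produce the moments $T^{-1}\E\tr((E_1 E_1' E_2 E_2')^n E_1 E_1')$. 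These moments are accessible through the same combinatorial/resolvent tools developed in \cite{li} to derive Proposition~\ref{LSD}; assembling them and simplifying using the defining equation $(T+1)(T+y)^2 = zT$ of the $T$-transform yields the rational form $T(\lambda^2)/(y+T(\lambda^2))$.
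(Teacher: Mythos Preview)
Your reduction to the scalar trace is essentially the paper's argument: condition on $X_1$, use Lemma~\ref{l1} to see that $\E[A_1]$ is diagonal so only the diagonal of $\E[v_iv_i']$ contributes, and pick up the factor $(1+\gamma_0(i))\cdot T^{-1}\E\tr(A_1)$. The off-diagonal vanishing and the parallel treatment of the $X_0$ expression are also the same.

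Where you diverge is in evaluating the scalar $T^{-1}\E\tr\big[(\lambda^2 I-E_1E_1'E_2E_2')^{-1}E_1E_1'\big]$. You propose a Neumann/moment expansion and a combinatorial reassembly via the tools of \cite{li}. The paper takes a much shorter route: it quotes from \cite{li} the exact algebraic relation
\[
1+\lambda^2 x_T=\frac{y\,y_T}{1+y_T},\qquad x_T:=\tfrac1T\tr(E_1E_1'E_2E_2'-\lambda^2 I)^{-1},\quad y_T:=\tfrac1T\tr\big[(E_1E_1'E_2E_2'-\lambda^2 I)^{-1}E_1E_1'\big],
\]
then evaluates $\E(1+\lambda^2 x_T)=-T(\lambda^2)$ directly from the LSD of Proposition~\ref{LSD}, and solves for $-\E y_T=T(\lambda^2)/(y+T(\lambda^2))$. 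So the needed identity drops out of a single resolvent relation already established in \cite{li}, with no moment expansion required. Your approach would presumably arrive at the same place but with substantially more work; the paper's key observation is that the ``half-weighted'' trace $y_T$ is tied to the Stieltjes-type trace $x_T$ by a closed-form equation, so one never has to compute it from scratch.
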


\begin{proof}
  \begin{align*}
    &~~~~\E\left[\lambda I_k-\lambda X_1'(\lambda^2I-E_1E_1'E_2E_2')^{-1}E_1E_1'X_1\right]\\
    &=\E\left\{\E\left[\lambda I_k-\lambda X_1'(\lambda^2I-E_1E_1'E_2E_2')^{-1}E_1E_1'X_1\big| X_1\right]\right\}\\
    &=\E\left\{\lambda I_k-\lambda X_1'E\left[(\lambda^2I-E_1E_1'E_2E_2')^{-1}E_1E_1'\right]X_1\big| X_1\right\}~.
  \end{align*}
  Since
  \[
  \E\left[(\lambda^2I-E_1E_1'E_2E_2')^{-1}E_1E_1'\right]
  \]
  is a diagonal $T \times T$ matrix according to Lemma \ref{l1}, and we denote it as
  \[
  diag(d_1, \cdots, d_T)~.
  \]
  Then the $(i,i)$-th element of $\E\left\{X_1'\E\left[(\lambda^2I-E_1E_1'E_2E_2')^{-1}E_1E_1'\right]X_1\right\}$ equals to
  \begin{align}\label{p1}
    &~~\E\sum_{j=1}^T X_1'(i,j)d_{j}X_1(j,i)=\sum_{j=1}^T \E X_1^2(j,i)d_j\nonumber\\
    &=\frac 1T \sum_{j=1}^T \E(x_{i\,j+1}+\varepsilon_{i\,j+1})^2\E d_j=\frac 1T \sum_{j=1}^T (1+\gamma_0(i))\E d_j\nonumber\\
    &=\frac  1T (1+\gamma_0(i))\E \tr \left[(\lambda^2I-E_1E_1'E_2E_2')^{-1}E_1E_1'\right]~,
  \end{align}
  where the second and third equalities are due to the independence between $x_t$ and $\varepsilon_t$ and the i.i.d feature of $\varepsilon_t$.
  If we denote
  \begin{align*}
    &x_T=\frac 1 T tr(E_1E_1'E_2E_2'-\lambda^2 I)^{-1}~,\\
    &y_T=\frac 1 T tr\big[(E_1E_1'E_2E_2'-\lambda^2 I)^{-1}E_1E_1'\big]~,
  \end{align*}
  there exists the relationship that:
  \begin{align*}
    1+\lambda^2 x_T=\frac{y\cdot y_T}{1+y_T}~,
  \end{align*}
  see (1) in \citet{li}. So \eqref{p1} reduces to:
  \begin{align}\label{p2}
    -(1+\gamma_0(i))\E y_T=-(1+\gamma_0(i))\E\left(\frac{1+\lambda^2x_T}{y-1-\lambda^2x_T}\right)~.
  \end{align}
  Besides,
  \begin{align*}
    \E x_T&=E\left[ \frac 1 T \tr(E_1E_1'E_2E_2'-\lambda^2 I)^{-1}\right]\\
    &=\begin{cases}
      \int_a^b \frac{1}{x-\lambda^2}f(x)dx~,~~~~y>1\\[3mm]
      \int_0^b \frac{1}{x-\lambda^2}f(x)dx-\frac{1-y}{\lambda^2}~,~~~~0<y \leq 1~,
    \end{cases}
  \end{align*}
  which leads to
  \begin{align*}
    \E (1+\lambda^2x_T)
    &=\begin{cases}
      \int_a^b \frac{x}{x-\lambda^2}f(x)dx~,~~~~y>1\\[3mm]
      \int_0^b \frac{x}{x-\lambda^2}f(x)dx~,~~~~0<y \leq 1~
    \end{cases}\\
    &=-T(\lambda^2)~,
  \end{align*}
  where $f(x)$ is the density function  of the LSD of $E_1E_1'E_2E_2'$ (also $E_2E_2'E_1E_1'$), and $T(z)$ is the $T$-transform that associated with $f(x)$ whose  support is $[a\cdotp \mathbf{1}_{\{y>1\}},b]$.

  Therefore, we have \eqref{p2} equals to
  \begin{align*}
    (1+\gamma_0(i))\frac{T(\lambda^2)}{y+T(\lambda^2)}~.
  \end{align*}

  For $i\neq k$, the $(i,k)$-th  element of \[\E\left\{X_1' \E\left[(\lambda^2I-E_1E_1'E_2E_2')^{-1}E_1E_1'\right]X_1\right\}\] equals to
  \begin{align*}
    &\E\sum_{j=1}^T X_1'(i,j)d_{j}X_1(j,k)=\sum_{j=1}^T \E X_1(j,i)X_1(j,k)\E d_j\nonumber\\
    =&\frac 1T \sum_{i=1}^T \E(x_{i\,j+1}+\varepsilon_{i\,j+1})(x_{k\,j+1}+\varepsilon_{k\,j+1})\E d_j=0,\nonumber
  \end{align*}
  due to the independence between the coordinates of $x_t$ and also the independence between $x_t$ and $\varepsilon_t$.

  All this leads to the fact that
  \begin{align*}
    ~~&~\E\left[\lambda I_k-\lambda X_1'(\lambda^2I-E_1E_1'E_2E_2')^{-1}E_1E_1'X_1\right]\\
    &=\begin{pmatrix}
      \lambda-\frac{\lambda T(\lambda^2)}{y+T(\lambda^2)}(1+\gamma_0(1)) & \cdots & 0 \\
      \vdots & \ddots & \vdots \\
      0 & \cdots & \lambda-\frac{\lambda T(\lambda^2)}{y+T(\lambda^2)}(1+\gamma_0(k)) \\
    \end{pmatrix}~.
  \end{align*}
  The same result also holds true for
  \[
  \E\left[\lambda I_k-\lambda X_0'(\lambda^2I-E_2E_2'E_1E_1')^{-1}E_2E_2'X_0\right]~.
  \]
  The proof of the Lemma is complete.

\end{proof}

\begin{lemma}\label{34}
  \begin{align*}
    ~~&~\E\left[X_1'\left(I+(\lambda^2I-E_1E_1'E_2E_2')^{-1}E_1E_1'E_2E_2'\right)X_0\right]\\
    =&\E\left[X_0'\left(I+(\lambda^2I-E_2E_2'E_1E_1')^{-1}E_2E_2'E_1E_1'\right)X_1\right]\\
    =&\begin{pmatrix}
      (1+T(\lambda^2))\gamma_1(1) & \cdots & 0 \\
      \vdots & \ddots & \vdots \\
      0 & \cdots & (1+T(\lambda^2))\gamma_1(k)\\
    \end{pmatrix}
  \end{align*}
\end{lemma}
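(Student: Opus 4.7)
The plan is to follow the template of the proof of Lemma~\ref{33} line by line, with the key change being that the pairing $(X_1)_{ji}(X_0)_{jk}$ appears instead of the diagonal pairing $(X_1)_{ji}(X_1)_{jk}$. Since $X_0, X_1$ depend only on the factor/noise coordinates $1,\ldots,k$, whereas $E_1, E_2$ depend only on coordinates $k+1,\ldots,p$, the two groups are independent. So I would first condition on $X_0, X_1$ and take expectation over the noise matrices, invoking Lemma~\ref{l2} to conclude that
\[
\E_E\!\left[(\lambda^2 I - E_1E_1'E_2E_2')^{-1}E_1E_1'E_2E_2'\right] = \mathrm{diag}(e_1,\ldots,e_T)
\]
is diagonal, and hence so is $\E_E[I + (\lambda^2 I - M)^{-1}M]$ with entries $1+e_j$, writing $M = E_1E_1'E_2E_2'$.

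Next I would expand the $(i,k)$-entry as
\[
\E\bigl[(X_1'(I+A^{-1}M)X_0)_{ik}\bigr] = \sum_{j=1}^{T} \E\bigl[(X_1)_{ji}(X_0)_{jk}\bigr]\,\E[1+e_j].
\]
For $i\neq k$, independence between distinct factor coordinates and between factor and noise forces $\E[(X_1)_{ji}(X_0)_{jk}]=0$, so the off-diagonal terms vanish. For $i=k$, the lag-1 cross moment is
\[
\E\bigl[(x_{i,j+1}+\varepsilon_{i,j+1})(x_{i,j}+\varepsilon_{i,j})\bigr] = \gamma_1(i),
\]
using that the white noise $\varepsilon$ has zero lag-1 autocovariance and is independent of the factor. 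Hence the diagonal entry equals $\gamma_1(i)\cdot \frac{1}{T}\E\tr[I+(\lambda^2 I-M)^{-1}M]$.

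Finally, I would identify this averaged trace using the algebraic identity $(\lambda^2 I - M)^{-1}M = -I + \lambda^2(\lambda^2 I - M)^{-1}$, which gives $\frac{1}{T}\tr[(\lambda^2 I-M)^{-1}M] = -1 - \lambda^2 x_T$ with $x_T = \frac{1}{T}\tr(M-\lambda^2 I)^{-1}$. The limit $\E(1+\lambda^2 x_T) = -T(\lambda^2)$ was already computed inside the proof of Lemma~\ref{33}, so
\[
\frac{1}{T}\E\tr\bigl[I+(\lambda^2 I-M)^{-1}M\bigr] \longrightarrow 1 + T(\lambda^2),
\]
and the $(i,i)$-entry tends to $(1+T(\lambda^2))\gamma_1(i)$, as claimed. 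The second identity, with $X_0'$ on the left and $X_1$ on the right and with $E_1,E_2$ swapped, is handled by the same argument since Lemma~\ref{l2} also covers the companion matrix.

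There is no real obstacle here: the entire proof is a transcription of the mechanism in Lemma~\ref{33}, and the only subtle bookkeeping point is to correctly extract the lag-1 factor autocovariance $\gamma_1(i)$ (rather than $1+\gamma_0(i)$) from the mixed pairing $\E[(X_1)_{ji}(X_0)_{ji}]$, which relies crucially on the assumed temporal whiteness of $\varepsilon_t$ in Assumption 2 and the cross-independence between factors and noise.
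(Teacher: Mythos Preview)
Your proposal is correct and follows essentially the same approach as the paper: condition on $X_0,X_1$, use Lemma~\ref{l2} to reduce the inner expectation to a diagonal matrix, then compute the diagonal and off-diagonal entries via the lag-1 cross moment $\E[(x_{i,j+1}+\varepsilon_{i,j+1})(x_{i,j}+\varepsilon_{i,j})]=\gamma_1(i)$ and independence across factor coordinates. The only cosmetic difference is that you identify the normalized trace by reducing it algebraically to the quantity $-\lambda^2 x_T$ already handled in Lemma~\ref{33}, whereas the paper evaluates $\frac{1}{T}\E\tr[(\lambda^2 I-M)^{-1}M]$ directly as $\int \frac{x}{\lambda^2-x}\,dF(x)=T(\lambda^2)$; these are of course equivalent.
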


\begin{proof}
  \begin{align*}
    ~~&~\E\left[X_1'\left(I+(\lambda^2I-E_1E_1'E_2E_2')^{-1}E_1E_1'E_2E_2'\right)X_0\right]\\
    =&\E\left\{\E\left[X_1'\left(I+(\lambda^2I-E_1E_1'E_2E_2')^{-1}E_1E_1'E_2E_2'\right)X_0\big|X_0,X_1\right]\right\}\\
    =&\E\left\{X_1'\E \left[I+(\lambda^2I-E_1E_1'E_2E_2')^{-1}E_1E_1'E_2E_2'\right]X_0\big|X_0,X_1\right\}\\
  \end{align*}
  Since
  \[
  \E(\lambda^2I-E_1E_1'E_2E_2')^{-1}E_1E_1'E_2E_2'
  \]
  is  diagonal according to Lemma \ref{l2}, and we denote it as $\text{diag}(a_1,\cdots,a_T)$.
  Then the $(i,i)$-th element of
  \[
  \E\left[X_1'\left(I+(\lambda^2I-E_1E_1'E_2E_2')^{-1}E_1E_1'E_2E_2'\right)X_0\right]
  \]
  equals to
  \begin{align*}
    &~~\E\sum_{j=1}^TX_1'(i,j)(1+a_j)X_0(j,i)\\
    =&\sum_{j=1}^T\E X_1(j,i)X_0(j,i)\E(1+a_j)\\
    =&\frac 1T \sum_{j=1}^T \E(x_{i\,j+1}+\varepsilon_{i\,j+1})(x_{i\,j}+\varepsilon_{i\,j})\E(1+a_j)\\
    =&\frac 1T \sum_{j=1}^T \E(x_{i\,j+1}x_{i\,j})\E(1+a_j)=\frac 1T \sum_{j=1}^T \gamma_1(i) \E(1+a_j)\\
    =&\frac{\gamma_1(i)}{T}\left(T+\E\tr\left[(\lambda^2I-E_1E_1'E_2E_2')^{-1}E_1E_1'E_2E_2'\right]\right)\\
    =&\gamma_1(i)\left(1+\int \frac{x}{\lambda^2-x}dF(x)\right)=\gamma_1(i)(1+T(\lambda^2))~.
  \end{align*}

  Also for $i\neq k$, the $(i,k)$-th element of
  \[
  \E\left[X_1'\left(I+(\lambda^2I-E_1E_1'E_2E_2')^{-1}E_1E_1'E_2E_2'\right)X_0\right]
  \]
  equals to
  \begin{align*}
    ~~&\E\sum_{j=1}^TX_1'(i,j)(1+a_j)X_0(j,k)=\sum_{j=1}^T\E X_1(j,i)X_0(j,k)\E(1+a_j)\\
    =&\frac 1T \sum_{j=1}^T \E(x_{i\,j+1}+\varepsilon_{i\,j+1})(x_{k\,j}+\varepsilon_{k\,j})\E(1+a_j)=0,
  \end{align*}
  where the last equality is due to the independence between the $k$ coordinates of $x_t$ and between $x_t$ and $\varepsilon_t$.

  The same is true for
  \[
  \E\left\{\E\left[X_1'\left(I+(\lambda^2I-E_1E_1'E_2E_2')^{-1}E_1E_1'E_2E_2'\right)X_0\big|X_0,X_1\right]\right\}
  \]
  and we omit the detail.

  The proof of this Lemma is complete.
\end{proof}


\begin{thebibliography}{99}

\bibitem[Alessi et al. (2010)]{Alessi10}
  \textsc{Alessi, L.,  Barigozzi,  M.} and   \textsc{Capasso, M.} (2010)
  Improved penalization for determining the number of factors in
  approximate factor models
  {\em Statist. Probab. Letters} {\bf 80},  1806-1813.


\bibitem[Bai and Ng(2002)]{BaiNg02}%
  \textsc{Bai, J.} and   \textsc{Ng, S.} (2002). Determining the number of factors in approximate factor models. {\em Econometrica}, {\bf 70}(1), 191-221.

\bibitem[Bai and Ng(2007)]{BaiNg07}
  \textsc{Bai, J.} and {\sc Ng, S.} (2007). Determining the number of primitive shocks in factor models. {\em Journal of Business and Economic Statistics}, {\bf 25}(1), 52-60.

\bibitem[{Bai and Ng(2013)}]{Bai13}
  {\sc Bai, J.}  and {\sc Ng, S.} (2013).
  Principal components estimation and identification of static factors.
  {\em J. Econometrics.}, {\bf 176}(1), 18--29.

\bibitem[Bai and Yao(2008)]{BaiYao08}
  {\sc Bai, Z.}  and {\sc Yao, J.} (2008).
  \newblock Central limit theorems for eigenvalues in a spiked population model.
  \newblock {\em Ann. Inst. Henri Poincar\'e Probab. Stat.}, {\bf 44}(3),
  447--474.

\bibitem[Bai and Yao(2012)]{BaiYao12}
  {\sc Bai, Z.}  and {\sc Yao, J.} (2012).
  \newblock On sample eigenvalues in a generalized spiked population
  model.
  \newblock{\em   J. Multivariate Analysis}, {\bf  106}, 167--177.

\bibitem[Baik and Silverstein(2006)]{BaikSilv06}
  {\sc Baik, J.}  and {\sc Silverstein, J.W.} (2006).
  \newblock Eigenvalues of Large Sample Covariance Matrices of Spiked Population
  Models.
  \newblock {\em J. Multivariate. Anal.}, {\bf 97}, 1382--1408.

\bibitem[{Benaych-Georges and Nadakuditi(2011)}]{BenaychNadakuditi11}
  {\sc Benaych-Georges, F.}  and {\sc Nadakuditi, R.R.} (2011).
  The eigenvalues and eigenvectors of finite, low rank perturbations of large random matrices.
  {\em Adv. Math.}, {\bf 227}(2), 494--521.


\bibitem[{Benaych-Georges {\em et al.} (2011)}]{BGM11}
{\sc Benaych-Georges, F.,  Guionnet, A.}  and  {\sc Ma\"\i da, M.} (2011).
Fluctuations of the extreme
eigenvalues of finite rank deformations of random matrices.
{\em Electron. J. Probab.} {\bf 16}(60), 1621--1662.



\bibitem[Forni {\em et al.} (2000)]{Forni00}
  \textsc{Forni, M., Hallin, M., Lippi, M.} and {\sc Reichlin, L.}  (2000). The generalized dynamic-factor model: Identification and estimation. {\em  Review of Economics and statistics}, {\bf 82}(4), 540-554.

\bibitem[Forni {\em et al.} (2004)]{Forni04}
  \textsc{Forni, M., Hallin, M., Lippi, M.}
  and {\sc Reichlin, L.} (2004). The generalized dynamic factor model consistency and rates. {\em J. Econometrics.}, {\bf 119}(2), 231-255.

\bibitem[Forni {\em et al.} (2005)]{Forni05}
  \textsc{Forni, M., Hallin, M., Lippi, M.} and {\sc Reichlin, L.} (2005). The generalized dynamic factor model: one sided estimation and forecasting. {\em J. Amer. Statist. Assoc.}, {\bf 100}, 830-840.

\bibitem[Geweke(1977)]{Geweke77}
  \textsc{Geweke J.} (1977). The dynamic factor analysis of economic time series. {\em in Latent variables in Socio-Economic Models,} ed. by D.J.Aigner and A.S.Goldberger, Amsterdam:North-Holland.

\bibitem[Hallin and Liska(2007)]{Hallin07}
  \textsc{Hallin, M.} and {\sc  Liska, R.} (2007). Determining the number of factors in the general dynamic factor model. {\em J. Amer. Statist. Assoc.}, {\bf 102}(478), 603-617.

\bibitem[Jin et al. (2014)]{Jin14}
\bll{
  \textsc{Jin, B., Wang, C., Bai, Z., Nair, K.} and \textsc{Harding, M.} (2014) 
  Limiting spectral distribution of a symmetrized
  auto-cross covariance matrix.
  {\em Ann. Applied Probab.} {\bf 24}(3), 1199-1225.
}

\bibitem[Johnstone(2001)]{Johnstone01}
  {\sc Johnstone, I.} (2001).
  \newblock On the distribution of the largest eigenvalue in principal components
  analysis.
  \newblock {\em Ann. Statistics}, {\bf 29}(2), 295--327.

\bibitem[{Lam and Yao (2012)}]{LY012}
  {\sc Lam, C.}  and  {\sc Yao, Q.W.} (2012).
  Factor modeling for high-dimensional time series: inference for the number of factors.
  {\em Ann. Statist.} {\bf 40}, 694-726.

\bibitem[{Li {\em et~al.}(2014)}]{li}
  {\sc Li, Z., Pan, G.M.}  and {\sc Yao, J.}  (2014).
  On singular value distribution of large-dimensional autocovariance matrices.
  {\em Preprint} (arxiv:1402.6149).

\bibitem[{Li {\em et~al.}(2014)}]{li2}
  {\sc Li, Z., Wang, Q.W.}  and {\sc Yao, J.}  (2014).
  In Appendix of this paper: ``Supplementary material for the paper 
  ``Identifying the number of factors from singular values of a
  large  sample auto-covariance matrix''.

\bibitem[Onatski (2010)]{Onatski10}
  \bll{
    \textsc{Onatski, A.} (2010). Determining the number of factors
    from empirical distribution of eigenvalues. {\em The Review of
      Economics and Statistics}, {\bf 92}(4), 1004-1016.%
  }
    
\bibitem[Onatski (2012)]{Onatski12}
  \bll{
    \textsc{Onatski, A.} (2012)
    Asymptotics of the principal components estimator of large factor models with weakly influential factors.
    {\em J. Econometrics} {\bf  168} (2),  244-258.%
  }

\bibitem[Onatski (2015)]{Onatski15}
\textsc{Onatski, A.} (2015)
Asymptotic analysis of the squared estimation error in misspecified
factor models.
{\em J. Econometrics} {\bf  186} (2),  388-406.

\bibitem[Passemier and Yao(2012)]{DY12}
  {\sc Passemier, D.}  and  {\sc Yao, J.} (2012). On determining the number of spikes in a high-dimensional spiked population model. {\em Random Matrices: Theory and Applications}, {\bf 1}, 1150002.

\bibitem[Sargent and Sims(1977)]{Sargen77}
  \textsc{Sargent,  T. J.} and {\sc  Sims, C. A.} (1977). Business cycle modeling without pretending to have too much a priori economic theory. {\em New methods in business cycle research,}  {\bf 1}, 145-168.



\bibitem[Stock and Watson(2011)]{Watson11}
  \textsc{Stock, J. H.} and {\sc  Watson, M W.} (2011).
  Dynamic factor models. Oxford Handbook of Economic Forecasting, {\bf 1}, 35-59.

\bibitem[Wang et al. (2015)]{Wang15}
  \bll{
    \textsc{Wang, C., Jin, B., Bai, Z., Nair, K.} and \textsc{Harding, M.} (2015)
    Strong limit of the extreme eigenvalues of a symmetrized
    auto-cross covariance matrix.
    Forthcoming in 
    {\em Ann. Applied. Probab. } 
    (\texttt{\small http://www.imstat.org/aap/future\_papers.html})
  }
\bibitem[Wang  and Yao(2014)]{WangYao14}
  \textsc{Wang, Q. W. } and  \textsc{Yao, J.} (2014).
  Moment approach to  for singular values distribution of a  large
  auto-covariance  matrix.
  {\em Preprint} (arxiv:1410.0752).

\end{thebibliography}

\begin{thebibliography}{99}

\bibitem[{Li {et~al.}(2014)}]{li}
  {\sc Li, Z., Pan, G.M.}  and {\sc Yao, J.}  (2014).
  On singular value distribution of large-dimensional autocovariance matrices.
  {\em Preprint} (arxiv:1402.6149).

\end{thebibliography}
\end{document}